\newtheorem{thm}{Theorem}
\newtheorem*{thm*}{Theorem}
\newtheorem{cor}{Corollary}[thm]
\newtheorem{lemma}[thm]{Lemma}
\newtheorem{definition}{Definition}
\begin{document}
    \title{Universal Fault Tolerance with Non-Transversal Clifford Gates}
    \author{Benjamin Anker}
    \affiliation{Department of Electrical \& Computer Engineering and Center for Quantum Information and Control, University of New Mexico, Albuquerque, NM 87131, USA}
    \author{Milad Marvian}
    \affiliation{Department of Electrical \& Computer Engineering and Center for Quantum Information and Control, University of New Mexico, Albuquerque, NM 87131, USA}
    \begin{abstract}
        We propose a scheme for the fault-tolerant implementation of arbitrary Clifford circuits. To achieve this, we extend previous work on flag gadgets for syndrome extraction to a general framework that flags any Clifford circuit. This framework opens new pathways toward universal fault tolerance by allowing transversal implementation of $T$ gates alongside fault-tolerant realization of selected non-transversal Clifford gates using flags. The construction we present allows a Clifford circuit consisting of $n$ two-qubit gates and $O(n)$ single-qubit gates acting upon physical qubits in a code of distance $d$ to be made fault tolerant to distance $d$ using $O(d^2 \log(nd^2\log n))$ ancilla qubits and $O(nd^2 \log(nd^2 \log n))$ extra CNOTs. Beyond asymptotic analysis, we demonstrate our construction by implementing the non-transversal logical Hadamard gate for the [[15,1,3]] code, which has transversal T,  and compare to alternative approaches for universality using this code. We also apply our construction to magic-state preparation, general state preparation using Clifford circuits, and data-syndrome codes.

    \end{abstract}
    \maketitle
    \section{Introduction}
    Although quantum computing holds significant promise for many practically relevant purposes, actualizing these promises requires addressing the problem of noise.
    Current hardware is noisy enough that attempting to perform a logical computation without error correction will produce a result which is vanishingly, in the size of the computation, unlikely to be correct. 
    Although error mitigation techniques can help to ensure the correct answer is obtained, they do so with significant overhead~\cite{takagi2022mitigation}.

    Error correction aims to ensure resilience to physical errors by encoding relatively few logical qubits in relatively many physical qubits. 
    The redundancy inherent in this approach allows physical errors to be corrected before they become logical errors.
    Recent progress has convincingly shown that current fabrication processes can in principle extend the lifetime of a logical qubit effectively arbitrarily, provided increasingly large chips can be built~\cite{acharya2022suppressingquantumerrorsscaling}.
    What has not been demonstrated, however, is a universal set of fault-tolerant logical operations.
    This gap is broadly due to the large overhead associated with implementing such a gate set.
    
    If logical operators are designed with only the ideal, error-free setting in mind, an error at the wrong time can become a logical failure -- even though a different error of equal severity might have been harmlessly corrected.
    For this reason we need to ensure our logical operations are fault tolerant. 
    The most straightforward method for ensuring an operation is fault tolerant is to make it transversal, ensuring that no qubits within the same codeblock are coupled.
    Unfortunately, it is impossible to implement a universal gate set entirely transversally for any code with nontrivial distance~\cite{eastin2009transversal}.
    
    One conventional technique to achieve fault tolerance without relying solely on transversal gates is to implement all Clifford gates transversally and supplement with magic states for non-Clifford gates~\cite{bravyi2005magic}; another technique is code switching~\cite{nielsen2012quantum, paetznick2013switch, daguerre2025switch, butt2024switch}.
    Although more efficient approaches to magic state production than distillation have recently been proposed~\cite{gidney2024cultivation, sahay2025cultivation}, their overhead remains significant, and it is not clear how much they can be generalized beyond the surface code.
    
    These challenges are not confined to universality alone; even basic tasks such as syndrome extraction (or just measuring a Pauli operator) face related difficulties.
    A low-weight error can propagate through couplers used to attach an ancilla system to the data, leading to a high-weight error on the data if sufficient care to the design of the coupling circuit is not used.
    This application originally motivated the invention of flag qubits~\cite{reichardt2018flag}.
    The main insight in this work was that, due to the fact that the propagation of Pauli operators/errors through Clifford circuits can be efficiently simulated classically~\cite{gottesman1998simulation}, it is not always necessary to stop errors from propagating immediately.
     Rather, for syndrome extraction, it is enough to `flag' particularly harmful errors in such a way that their effects can be corrected in the near future. 
     This paradigm has been extended in many~\cite{prabhu2023cat, forlivesi2025flag, chamberland2020triangular, tansuwannont2020flag, debroy2020cpc} different directions, including in previous work by the authors of this work.

    The insight we point out above, the use of the Gottesman-Knill theorem, has not yet been exploited fully.
    In theory, one could apply this result to any Clifford circuit, adding structure in the form of flag qubits to ensure harmful errors in the middle of the circuit are noticed.
    Some work in this direction has been done, in particular for (sub-)circuit verification~\cite{debroy2020cpc, delfosse2025clifford, Tham2025optimizedclifford, martiel2025lowoverheaderrordetectionspacetime}.
    However, constructions up to this point have focused upon error mitigation.
    Although the techniques and results are useful, they are not \emph{fault-tolerant} in the arbitrarily-scalable sense.
    
    In this work, we introduce improved gadgets that can be understood as fault-tolerantly measuring a generating set of space-time stabilizers of the circuit~\cite{delfosse2023spacetime, delfosse2025clifford}.
    We show constructively that any Clifford circuit on $n$ two-qubit gates acting within a distance-$d$ code can be made fault tolerant to distance $d$ using only $O(d^2 \log(nd^2 \log n))$ ancilla qubits and $O(nd^2 \log(nd^2 \log n))$ extra CNOTs.
    
    The mere fact that such a construction exists is already nontrivial, since in general one cannot assume a fault-tolerant equivalent circuit exists without code switching~\cite{nielsen2012quantum}.
    Perhaps the most practically interesting part of this result is the alternative avenues this opens for universal fault-tolerant quantum computation and fault-tolerant operations on encoded quantum data.
    That is, if we sacrifice transversality for a Clifford gate instead of the $T$ gate, as in the 3D surface code~\cite{vasmer20193d} or the tetrahedral code~\cite{knill1996reedmuller, koutsioumpas2022smallestT}, then non-Clifford gates can be implemented cheaply transversally, while Clifford circuits for the non-transversal gate can be made fault tolerant with low overhead using flag gadgets.

    In addition to universal computation, our framework applies broadly to other Clifford subroutines.
    We highlight in particular stabilizer state preparation, including for magic state preparation, and applications to data-syndrome codes~\cite{fujiwara2014datasyndrome, ashikmhin2014robus, brown2024datasyndromebch}, where our construction makes fault tolerant certain reductions to the space/measurement overhead of repeated syndrome extraction. 

    We briefly summarize the numerical results we find in a few concrete examples.
    We find that for the 3D color code we can reduce the number of measurements of non-resettable qubits by a factor of $4$, if we are willing to increase the two-qubit gate count by a factor of $20$.
    A comparison of flagged state preparation applied to the magic state distillation subroutine shows that we can reduce the qubit count by a factor of $10$ if we are willing to increase the two-qubit gate count by a factor of $5$. When applying flag qubits to make a data-syndrome code version of the $[[5, 1, 3]]$ code fault-tolerant (beyond simply making the syndrome measurement fault tolerant), we show that we can reduce the qubit count by a factor of $4$ and reduce the CNOT count by a factor of $1.6$.

    Beyond resource counts, however, the conceptual message is that flag gadgets allow the fault tolerance of Clifford circuits to be reframed as a problem of error correction, which can be tackled with well-developed coding-theoretic tools.

    In Section~\ref{sec:background} we give a more thorough introduction to error correction in general, as well as the specific challenges of (quantum) fault tolerance and universality.
    In Section~\ref{sec:general_flags} we review the previous work on flag gadgets and introduce the main concepts and tools we will be using.
    In Section~\ref{sec:flagged_cliffords} we generalize to Clifford circuits and give our main result.
    In Section~\ref{sec:logical_H} we apply our construction to the implementation of logical $H$ for the $[[15, 1, 3]]$ code, which has transversal $T$, as compared to other methods for universality using this code.
    In Section~\ref{sec:msp} we apply our construction to magic state preparation, providing a path to universality resembling a hybrid between the approach in the previous section and conventional approaches, as well as to general state preparation as one commonly used Clifford circuit.
    In Section~\ref{sec:data_syn}, we apply our construction to make a certain method for making syndrome extraction robust to measurement error also robust to mid-circuit errors.
    Finally, in Section~\ref{sec:conclusion} we give our perspective on the results we have achieved and the further questions we have not addressed.
    \section{Background}\label{sec:background}
    Even optimistic estimates suggest that future quantum computers will be roughly $10^{12}$ times noisier than standard classical hardware.
    As such, error correction is much more important for quantum computing than classical computing. 
    In general error correction aims to encode some information redundantly so that there is a way to cross-check for inconsistencies or errors.
    Upon noticing these errors, they can be corrected.
    \subsection{Linear Block Code and Stabilizer Codes}
    One of the most well studied types of classical error-correcting codes is the binary linear block code.
    In this framework, information is encoded as a binary vector $v$ and transformed into a codeword $Gv$ via a generator matrix $G$; the set of all such codewords, the codespace, is simply the image of $G$. 
    Alternatively, we can describe the code using the parity-check matrix $H$, which is related to $G$ through the condition $HG = 0$.
    This allows us to define the codespace equivalently as the nullspace of $H$.
    Crucially, for any codeword $c$ and any error $e$, the syndrome $H(c + e) = He$ depends only on the error $e$, since $Hc = 0$ by construction.

    By considering the parity-check matrix picture, we can generalize to stabilizer codes.
    In this picture, we effectively promote each parity-check to an operator version; instead of defining the codespace as the nullspace of $H$, i.e. the space of vectors $v$ such that $\langle h, v \rangle = 0$ for each row $h$, we define the the codespace as the space of states $\ket \psi$ such that $g_i\ket \psi = \ket \psi$ for all \emph{stabilizer generators} $g_i$, which together generate the \emph{stabilizer group} $G = \langle g_i \rangle$.
    The stabilizer generators take the place of parity checks in this picture.
    This analogy becomes especially transparent for CSS codes~\cite{calderbank1996css, steane1996css}, where each stabilizer generator is either $X$-type or $Z$-type. 
    In this case, a $Z$-type stabilizer $g$ detects $X$ errors in exactly the same way that a parity check detects flipped bits in the classical code. 
    Whether the code is CSS or not, just like in the classical case the syndrome -- defined as the set of eigenvalues obtained by measuring each stabilizer generator on a potentially faulty state $\ket{\psi}$ -- depends only on the errors affecting $\ket{\psi}$, not on the original code state itself.
    Since arbitrary errors can be decomposed in the Pauli basis and $XZ = Y$ up to a phase, for the stabilizer that have no $Y$ term (i.e. CSS codes ) 
    it is enough to correct $X$ and $Z$ errors.
    
    The stabilizer group defines the properties of the code, the most important of which are the number of logical qubits encoded, and the distance of the code. 
    A code on $n$ qubits encoding $k$ qubits which can distinguish any error of weight up to $d - 1$ from the no-error case is summarized by calling it an $[[n, k, d]]$ code. 
    The stabilizer group also implicitly defines the logical operators.
    The logical operators of a code defined by $G$ are just the operators which normalize $G$ -- the nontrivial operators are those which normalize $G$ but are not contained in $G$.
    The minimum weight logical operator is necessarily of weight equal to the distance.
    We can also define a larger code just by putting two smaller codes side by side -- each of the smaller codes is called a codeblock.
    \subsection{Fault Tolerance}
    The key distinction between quantum and classical error correction is that in quantum error correction, we never decode.
    In the classical case, it is reasonable to imagine perfectly computed information $v$ being sent through a noisy channel in encoded form as $Gv$ for protection.
    After transmission, errors can be corrected using the parity checks defined by $H$, followed by decoding to recover and use $v$.
    In the quantum case, however, noise affects not only the communication channel but also the preparation and use of $v$ itself.
    As a result, we require more than just quantum memory—which preserves information through noisy channels over time—we also need fault-tolerant computation, where information remains redundantly encoded throughout the entire computational process.

    The fact that we must implement logical operations on a \emph{noisy} state means that, if we are careless, the logical operations themselves can turn a small error of weight less than $d$, which would be correctable if it occured at a different time, into a logical error of weight greater than or equal to $d$.
    That is, any logical operator which couples multiple physical qubits in the same codeblock runs the risk that an error on one of the coupled qubits will spread to one or more of the other qubits. 

    Avoiding this outcome is the domain of \emph{fault tolerance}. The definition we take here is the following:
    \begin{definition}[Strongly fault-tolerant logical gates on one codeblock]\label{def:FT}
        Suppose the error-free gate $L$ acts upon an error-free state $\ket \psi$ to produce $\ket \phi$. We consider an implementation of $L$ containing $s$ Pauli errors, which we call $\tilde L$, that acts upon a faulty state $\ket {\tilde\psi}$. If $\ket{\tilde \psi}$ is within weight-$r$ Pauli correction to $\ket \psi$ and 
        \begin{enumerate}
            \item if $r + s \leq \frac{d - 1}2$ then $\tilde L \ket{\tilde \psi}$ differs from $\ket \phi$ by a Pauli correction of weight at most $r + s$
            \item if $r + s > \frac{d - 1}2$ then $\tilde L \ket{\tilde \psi}$ differs from \emph{some} codestate by a Pauli correction of weight at most $r + s$
        \end{enumerate}
        then $\tilde L$ is a implementation of $L$ that is \emph{fault-tolerant} to distance $d$.
    \end{definition}

    The definition is similar when we consider multi-qubit gates between codeblock, although we allow the resulting codestate to differ by $r + s$ in each codeblock.
    
    Some definitions (e.g.~\cite{campbell2019singleshot}) relax the requirements, only demanding that the number of resulting errors be proportional to the number of physical errors.
    Regardless of the definition, the core goal of fault tolerance is to ensure that the code experiences noise consistent with the physical error rate, even during computation on encoded states.
    Should fault tolerance be achieved, arbitrarily long quantum computation with arbitrarily low error rate can also be achieved with polylogarithmic overhead~\cite{knill1998threshold,aharonov2008threshold, campbell2017roads}.
    
    One of the most straightforward ways to ensure an implementation of a logical gate is fault-tolerant is to require transversality, i.e. to ensure that the circuit implementing the logical gate $L$ avoids coupling qubits within the same codeblock.
    This guarantees that any error of weight-$w$ introduced during the circuit will have weight no greater than $w$ in any given codeblock after the gate is applied.

    Before proceeding, let us consider which set of gates we need to implement: our goal is to achieve universality, meaning the ability to perform any quantum operation on the encoded logical qubits.
    A common universal gate set is Clifford + $T$, consisting of the Clifford gates -- which map Pauli operators to Pauli operators under conjugation -- along with the $T$ gate, a $\pi/4$ rotation around the $Z$ axis.
    Unfortunately, transversal gates alone are not sufficient to realize a universal gate set~\cite{eastin2009transversal}.
    Therefore, if we aim for universality, at least one gate in the Clifford + $T$ set must be implemented by other means.
    Typically, this gate is the $T$ gate: Clifford gates can then be implemented ``easily'' via transversal operations, while the $T$ gate requires more resource-intensive techniques such as magic state distillation~\cite{bravyi2005magic} or code switching~\cite{paetznick2013switch}.
    In what follows, we re-examine this strategy. 
    We will use flag gadgets heavily throughout the remainder of the work, so we first summarize the core intuition below.
    \subsection{Flag Gadgets}\label{sec:general_flags}
    Briefly put, flag gadgets aim to detect errors that could propagate to high weight errors and correct them, rather than ensuring that no such errors are possible. This idea was first introduced by Reichardt and Chao~\cite{reichardt2018flag}.
    In the original work, and most other works up to this point, the idea has been limited to apply only to syndrome extraction, and has not been shown to be fault tolerant for any other circuit.
    Although in the remainder of this work we will show that their applicability is much broader, syndrome extraction is a good example to demonstrate the fundamental ideas and the utility of flag gadgets.

    In syndrome extraction, we aim to measure some generating set of stabilizers for the stabilizer group we outlined in Section~\ref{sec:background}.
    To illustrate flag qubits, it is enough to consider measuring a single such operator composed of the tensor product of $X$ operators.
    A straightforward implementation of the measurement of $X^{\otimes W}$, where $X^{\otimes W} := \bigotimes_{w \in W} X_w$, is just to prepare an ancilla $a$ in the $\ket +$ state and perform $CNOT_{a, w}$ for each $w \in W$ before measuring the ancilla in the $X$ basis, illustrated in Figure~\ref{fig:syndrome_ex}.
    \begin{figure}
        \centering
        \subfloat[An example of measuring $X^{\otimes\{1, 2, 3, 4\}}$ using a single ancilla qubit. We see that any odd-weight $Z$ error (single qubit $Z$ errors marked in blue) propagates up to the ancilla qubit and causes the measurement to give $-1$, as illustrated by the magenta outline.\label{fig:syndrome_ex}]{
            \includegraphics[width=0.8\columnwidth]{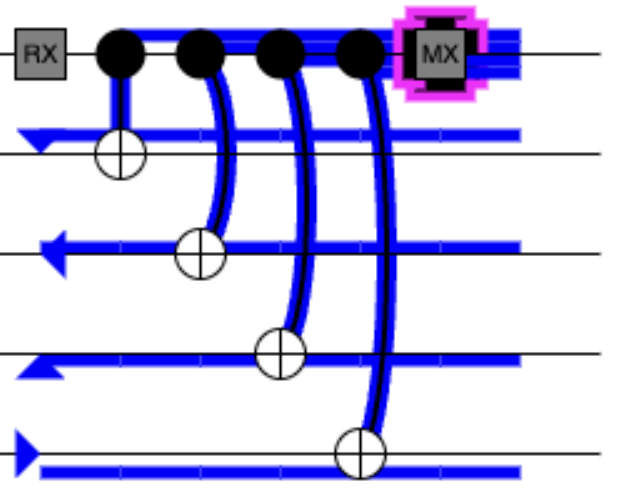}
        }\\[1em]
        \subfloat[One example of non-fault-tolerant propagation of an error from the ancilla qubit. The propagation of the $X$ error that occurs directly before the second CNOT is marked in red.\label{fig:bad_prop}]{
            \includegraphics[width=0.8\columnwidth]{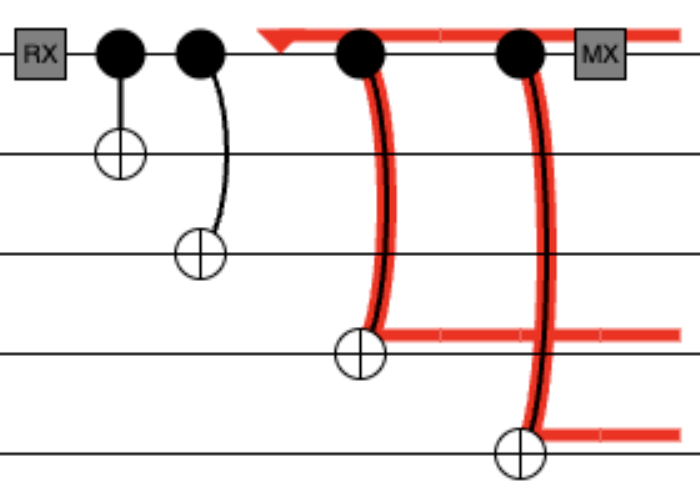}
        }\\[1em]
        \subfloat[\label{fig:syn_flagged}Now the error which propagates non-fault-tolerantly is `flagged' by the additional ancilla, and its propagation can be corrected.]{
        \includegraphics[width=0.8\columnwidth]{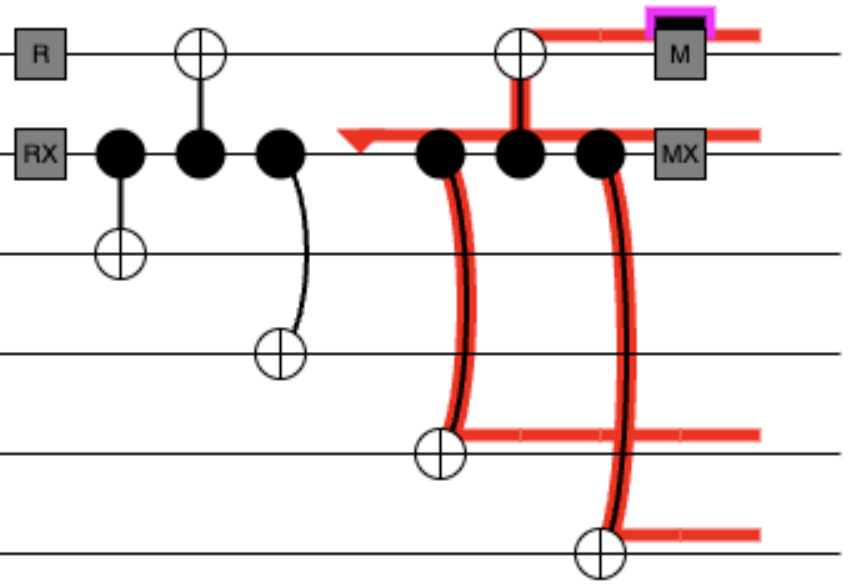}}
        \caption{An illustration of how a straightforward implementation of the measurement of a simple operator can fail to be fault-tolerant, and how fault-tolerance can be restored using a flag qubit. Figures produced using Crumble~\cite{gidney2021stim}.}
    \end{figure}

    Of course, considering fault tolerance makes it obvious why such a circuit is not enough on its own.
    Just as any $Z$ error on any of the four data qubits touched propagates to the single ancilla, an $X$ error on the single ancilla can propagate to up to four data qubits.
    The fault-tolerance condition given in Definition~\ref{def:FT} demands that a single error can not propagate to more than one error on the data. 
    This is violated by the propagation of an $X$ error that occurs just before the second CNOT connecting to the data, illustrated in Figure~\ref{fig:bad_prop}.

    The insight of Reichardt and Chao was that the only error which propagates non-fault-tolerantly is the marked one, and that it is easy to check if such an error has occurred with a circuit like the one in Figure~\ref{fig:syn_flagged}.
    Since \emph{the circuit is Clifford}, the propagation of the error can be classically corrected after measuring the flag qubit (and before the next non-Clifford gate).

    The fact that we are attempting to identify the location of an error should be suggestive, despite the fact that we are trying to identify the location of the error in time instead of in space. 
    In earlier work, we showed that by treating time-like regions of a qubit as classical bits, measuring a weight-$w$ stabilizer can be made fault tolerant to distance $2t+1$ using only $O(t^2 \log w)$ extra ancilla qubits.
    In the rest of this paper, we generalize this. 
    Instead of dividing the qubit into time-like regions which we consider as classical bits, we consider space-time stabilizers of a circuit. 
    In our previous work, the space-time stabilizers we were effectively measuring were very simple, and could be measured by pairs of CNOTs. 
    In this work, we present an only slightly more complicated generating set of space-time stabilizers. 
    We show that the measurement of these stabilizers is sufficient to ensure the fault-tolerance of an arbitrary Clifford circuit and, building upon our previous work~\cite{anker2025compressed}, we show that this measurement schedule can be compressed to minimize the extra resources required. 
    \section{Constructing Flagged Clifford Circuits}\label{sec:flagged_cliffords}
    We build our flag circuit using two types of gadgets -- one that catches $X$ errors and one that catches $Z$ errors, similar to how we can focus on a single type of error in Section~\ref{sec:general_flags}. 
    For the sake of presentation, we first handle $X$ errors in detail, before extending to $Z$ errors.
    We begin by considering a Clifford circuit as a network of CNOTs, postponing the single-qubit Cliffords that allow us to capture general Clifford circuits until Section~\ref{sec:1qclif}. 
    For convenience, we denote measurement outcomes by $0$ and $1$, rather than $+1$ and $-1$.
    \subsection{Main Gadget}\label{sec:main_gadget}
    For each CNOT we introduce a single gadget composed of two flag (ancilla) qubits and five CNOTs. The form of the gadget is given in Figure~\ref{fig:main_gadget}.
    \begin{figure}[h!]
        \includegraphics[width=\columnwidth]{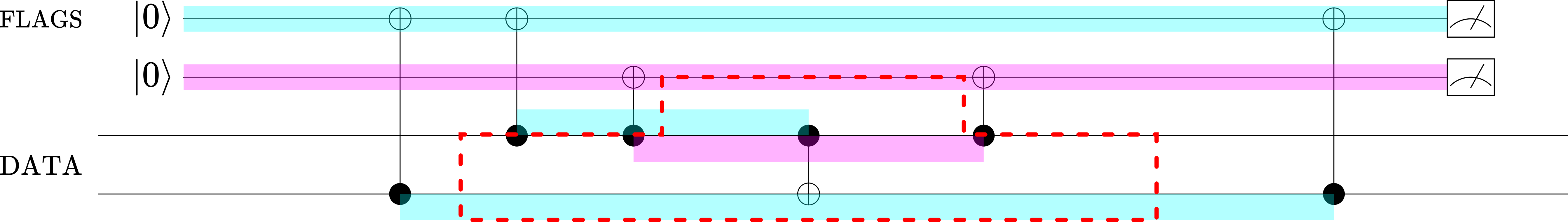}
        \caption{\label{fig:main_gadget}The main gadget we consider. The flags and the regions in which they detect ($X$) errors are color-coded. In the dashed red line we have outlined the ``bulk'' region with which we will concern ourselves.}
    \end{figure}
    The relevant property of this gadget is given by the following lemma.
    \begin{lemma}\label{lem:main_gadget_lemma}
        Any error consisting solely of $X$ errors in the red-outlined bulk region either
        \begin{enumerate}
            \item Causes at least one of the flags to be measured as $1$ or,
            \item Propagates to the identity on the data.
        \end{enumerate}
    \end{lemma}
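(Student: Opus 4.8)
The plan is to treat the claim as a statement of $\mathbb{F}_2$-linear algebra about how $X$-type faults propagate through the fixed, finite gadget of Figure~\ref{fig:main_gadget}. Recall that $X$ errors propagate forward through a $\text{CNOT}_{a,b}$ by $X_a \mapsto X_a X_b$ and $X_b \mapsto X_b$, and that $X$ on a data qubit is deposited onto a flag exactly when one of the flag-coupling CNOTs fires. Because propagation is linear over $\mathbb{F}_2$, I can encode every $X$ fault in the bulk as a vector $E$ in the space $V$ whose coordinates are the space-time locations at which a single $X$ may be inserted (on the control line, the target line, and the two flag lines, in each time slice of the red-outlined region). I then define two linear maps on $V$: the flag map $\phi:V\to\mathbb{F}_2^2$ recording which flags are measured as $1$, and the data map $\psi:V\to\mathbb{F}_2^2$ recording the residual $X$-support on the control and target qubits at the output of the gadget.

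In this language the lemma is precisely the containment $\ker\phi\subseteq\ker\psi$: an $X$ fault that leaves both flags reading $0$ must act as the identity on the data. First I would compute $\phi$ and $\psi$ on the standard basis of $V$ by pushing each single-location $X$ forward through the remaining gates of the gadget and bookkeeping its final support on the flags and on the data. Since each flag is coupled to its data line by a pair of CNOTs that bracket the bulk, a single $X$ inserted on that line strictly between them transfers to the flag an odd number of times and is detected, whereas one inserted outside the bracket transfers an even number of times. The genuinely harmful event -- an $X$ on the control just before the data CNOT, which the CNOT doubles to weight two on the data -- is exactly the case I must verify gets flagged, and this is where propagation through the data CNOT enters: that fault deposits $X$ onto the target line downstream and hence onto the target flag as well.

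With the matrix of $\phi$ in hand, I would exhibit a basis of $\ker\phi$ and evaluate $\psi$ on each basis vector. The point to be careful about is that $\ker\phi$ is not merely the span of the individually-undetected locations: two faults that each flip the same flag cancel in $\phi$ while possibly combining nontrivially in $\psi$, so the verification genuinely requires the subspace $\ker\phi$ rather than just single-fault insertions. I expect this undetected subspace to be spanned by faults confined to a flag line after its last coupling CNOT (identity on the data by construction) together with pairs on a data line that straddle a coupling CNOT and therefore cancel both on that flag and on the data output; in each case $\psi$ should vanish, establishing Case~2.

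The main obstacle is the bookkeeping induced by the data CNOT, which entangles the control-side and target-side detection regions, so that an $X$ fault can be invisible to one flag yet must still be caught by the other (or cancel on the data); I have to confirm that no nonzero element of $\ker\phi$ slips through with nontrivial data support once these cross-terms are included. A secondary subtlety is pinning down the exact boundary of the bulk so that faults occurring strictly before the first coupling CNOTs or strictly after the last ones -- which are legitimately permitted to leave the flags quiet while still altering the data -- are correctly excluded from $V$; fixing this boundary is what makes the clean dichotomy of the statement hold.
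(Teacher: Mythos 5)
Your plan is essentially the paper's own proof: the paper likewise treats $X$-propagation as $\mathbb{F}_2$-linear algebra, writes the flag outcome as $He$ for a parity-check matrix $H$, and verifies that every element of the nullspace of $H$ propagates to the identity on the data --- precisely your containment $\ker\phi \subseteq \ker\psi$. The only substantive difference is economy: rather than working location-by-location, the paper first quotients the bulk into three parity regions ($A$ covered by both flags, $B$ only by the magenta flag, $C$ only by the blue flag), through which both of your maps factor, so the kernel computation collapses to the nullspace of the $2\times 3$ matrix $\begin{pmatrix}1&0&1\\1&1&0\end{pmatrix}$, spanned by the zero vector and the all-ones vector, each manifestly trivial on the data; note also that flag-line fault locations lie outside this lemma's bulk (the paper defers flag-qubit errors to later sections), which moots your slightly-off guess that a single $X$ on a flag line after its last coupling CNOT is silent --- on a $\ket{0}$-prepared flag it flips the measurement, so only pairs cancel, though your actual kernel computation would have caught this.
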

    \begin{proof}
        We divide the bulk region into three subregions, $A$, $B$, and $C$, where $A$ is the region covered by blue and magenta flags, $B$ is the region covered only by magenta, and $C$ is the region covered only by blue. It is enough to consider each error of the form $A^{a}B^{b}C^{c}$ for $a, b, c \in {0, 1}$ where by $A^0$ we mean an even number of $X$ errors occur in region $A$, by $A^1$ an odd number, and similarly for the other terms. This is because the parity of errors in these regions define the logically distinct regions for an error to occur -- any error and any flag pattern can be explained by a product of errors in these regions. The flag pattern is then given by \[ He :=
            \begin{pmatrix}
                1 & 0 & 1\\
                1 & 1 & 0
            \end{pmatrix}\begin{pmatrix}
            a\\b\\c
            \end{pmatrix}
        \] where $H$ and $e$ are both over $\mathbb Z_2$.
        We can then just consider the cases where all flags are measured as $0$, i.e. the null-space of $H$. This is clearly spanned by $\{\begin{pmatrix}0\\0\\0\end{pmatrix}, \begin{pmatrix}
            1 \\ 1 \\ 1
        \end{pmatrix}\}$. 
        In both cases the error defined by $e$ propagates to the identity $II$ on the data, completing the proof.
    \end{proof}
    This gives us a na\"ive way to capture bulk errors given a larger circuit.
    \begin{cor}\label{cor:bulk_coverage}
        Given a network of $n$ CNOTs, adding one such gadget to each CNOT implies that any error consisting solely of $X$ errors in the red-outlined bulk regions either
        \begin{enumerate}
            \item Causes at least one of the flags to be measured as $1$ or,
            \item Propagates to the identity on the data.
        \end{enumerate}
        The total cost is $2n$ flag qubits and $5n$ CNOTs.
    \end{cor}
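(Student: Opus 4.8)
The plan is to derive Corollary~\ref{cor:bulk_coverage} as a direct consequence of Lemma~\ref{lem:main_gadget_lemma} by arguing that the single-gadget analysis composes cleanly across all $n$ CNOTs without interference between gadgets. The key observation is that each gadget is attached to its own CNOT with its own pair of flag qubits, so the bulk regions are disjoint as subsets of the space-time lattice of the circuit. First I would make precise that an arbitrary $X$-error pattern confined to the union of the bulk regions decomposes uniquely as a sum (over $\mathbb{Z}_2$) of contributions localized to each individual bulk region, since the error positions partition by which gadget they belong to.

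Next I would handle the flagging and propagation claims separately. For the flag claim, I would note that the flag qubits of gadget $i$ only couple to the data within the bulk region of gadget $i$; hence the measurement outcome of those flags is determined solely by the restriction $e_i$ of the total error $e$ to region $i$. By Lemma~\ref{lem:main_gadget_lemma} applied to gadget $i$, if $e_i$ is not one of the two trivial patterns in $\ker H$, then at least one flag of gadget $i$ reads $1$. So if \emph{every} flag across all gadgets reads $0$, then every restriction $e_i$ must lie in $\ker H$, and by the lemma each such $e_i$ propagates to the identity on the data \emph{at the point where gadget $i$ sits}. The remaining step is to verify that "propagates to identity locally" accumulates to "propagates to identity globally": since each $e_i$ already acts as the identity on the data lines leaving its own gadget, it contributes nothing downstream, so the composite propagation through the whole CNOT network is still the identity. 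This linearity of Pauli propagation through a Clifford (CNOT) network is exactly the Gottesman-Knill fact invoked earlier in the excerpt.

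The resource count is the routine bookkeeping step: $n$ gadgets, each contributing $2$ flag qubits and $5$ CNOTs, giving $2n$ flags and $5n$ CNOTs.

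The step I expect to be the main obstacle is the composition argument for propagation. One must be careful that the gadgets really are independent: a priori, an error in the bulk of one gadget could propagate \emph{into} a later gadget's bulk region before being measured, or the CNOTs of one gadget could alter how a neighboring gadget's flags respond. The cleanest way to discharge this is to appeal to the fact that each $e_i \in \ker H$ propagates to the identity on \emph{all} data lines immediately at its own gadget, so it never reaches any downstream gadget at all—making the gadgets genuinely non-interacting on the subspace where all flags vanish. Verifying this "identity on the data" conclusion of the lemma is strong enough to guarantee triviality on \emph{every} outgoing wire (not merely on the two data qubits of that one CNOT) is the subtle point; if the lemma's "identity on the data" is interpreted locally at the gadget, the global conclusion follows, but I would want to state this interpretation explicitly before invoking it.
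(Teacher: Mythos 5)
Your proposal is correct and matches the paper's intent: the paper states this corollary without proof, treating it as an immediate gadget-by-gadget application of Lemma~\ref{lem:main_gadget_lemma} plus trivial bookkeeping, which is exactly your argument. Your care about composition is well placed and correctly resolved --- since a kernel error acts as the identity on the data already at its own gadget's output, a time-ordered induction shows the gadgets are non-interacting whenever all flags read $0$, which is the one detail the paper leaves implicit.
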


    For now, we make no claims regarding data errors outside the bulk region, measurement errors on the flag qubits, or errors on the flags that may propagate to the data.
    The first two issues will be addressed in Section~\ref{sec:connecting}, while the third will be addressed in Section~\ref{sec:meta}. 
    \subsection{Full Gadget}
    We now handle $Z$ errors. 
    The method is straightforward: we just apply the same gadget to the same area with the CNOTs' (including the data CNOT) controls and targets reversed and with the ancillae prepared in the $\ket{+}$ state instead of the $\ket{0}$ state.
    The new gadget for $Z$ errors and the resulting total gadget are shown in Figure~\ref{fig:z_gadget}.
    The result of adding to our gadget this way is that we can extend~\ref{cor:bulk_coverage} to include $Z$ errors as well, which directly implies we can also handle $Y$ errors since our gadgets for $X$ and $Z$ are independent. To be explicit:
    \begin{cor}\label{cor:bulk_coverage_xz}
        Given a network of $n$ CNOTs, adding one such gadget to each CNOT implies that any Pauli error with support only in the red-outlined bulk regions either
        \begin{enumerate}
            \item Causes at least one of the flags to be measured as $1$ or,
            \item Propagates to the identity on the data.
        \end{enumerate}
        The total cost is $2n$ flag qubits and $5n$ CNOTs.
    \end{cor}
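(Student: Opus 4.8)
The plan is to reduce the general-Pauli claim to the two single-type statements already established, exploiting the fact that $X$-type and $Z$-type errors propagate independently through any network of CNOTs. First I would recall that conjugation by a CNOT block-diagonalizes the symplectic representation of the Pauli group: a CNOT sends $X_{\text{ctrl}} \mapsto X_{\text{ctrl}} X_{\text{tgt}}$ and $Z_{\text{tgt}} \mapsto Z_{\text{ctrl}} Z_{\text{tgt}}$, while the $X$-sector and $Z$-sector never mix. Consequently, once every CNOT of the combined circuit is fixed (including those contributed by both the $X$- and $Z$-catching sub-gadgets), the forward propagation of the $X$-part of an arbitrary Pauli error is determined entirely by that $X$-part, and likewise for the $Z$-part.

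Next I would observe that the two detection channels are matched to the two sectors. The $X$-catching ancillae are prepared in $\ket{0}$ and read out in the $Z$ basis, so they register only propagated $X$ content and are blind to $Z$ content; dually, the $Z$-catching ancillae prepared in $\ket{+}$ and read out in the $X$ basis respond only to propagated $Z$ content. This is exactly what makes the two analyses decouple: the extra CNOTs and ancillae of the $Z$ sub-gadget contribute nothing to the $X$-symplectic support of any flag measurement, so they cannot disturb the conclusion of Corollary~\ref{cor:bulk_coverage} for the $X$ sector, and symmetrically. I would also note that the $Z$ sub-gadget is precisely the Hadamard-conjugate of the $X$ sub-gadget (reversing a CNOT is conjugation by $H^{\otimes 2}$, and $\ket{+}=H\ket{0}$), so its governing statement is the $X \leftrightarrow Z$ image of Lemma~\ref{lem:main_gadget_lemma}, and the network version of Corollary~\ref{cor:bulk_coverage} carries over to $Z$ errors verbatim under this duality.

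With independence in hand, I would decompose an arbitrary bulk Pauli error $E$ as a product of its $X$-type and $Z$-type parts, absorbing the phase from $Y = iXZ$ since it affects neither propagation nor flag outcomes. Applying Corollary~\ref{cor:bulk_coverage} to the $X$-part and its dual to the $Z$-part, each part either triggers a flag in its own sector or propagates to the identity on the data. If either triggers a flag, conclusion~(1) holds; if neither does, both parts act trivially on the data, so $E$ itself propagates to the identity and conclusion~(2) holds. The resource count then follows by adding the two sub-gadgets together.

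The step I expect to be the main obstacle is making the independence argument genuinely airtight rather than merely plausible: one must rule out any path by which interleaving the two sub-gadgets' CNOTs on the shared data and ancilla qubits could route, say, an $X$ error into a $Z$-flag or alter the $X$-propagation that Lemma~\ref{lem:main_gadget_lemma} analyzed for the isolated gadget. Because CNOT conjugation is block-diagonal on the symplectic representation, this reduces to the bookkeeping check that every $Z$-flag measurement has trivial $X$-symplectic support and vice versa, which the matched preparation and readout bases guarantee; the bulk of the care is in confirming this holds for the \emph{combined} schedule and not just for each sub-gadget in isolation.
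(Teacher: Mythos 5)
Your proposal is correct and takes essentially the same route as the paper, which justifies this corollary in one line by decomposing an arbitrary Pauli error into its $X$- and $Z$-parts, applying Corollary~\ref{cor:bulk_coverage} and its dual for the reversed-CNOT, $\ket{+}$-prepared gadget, and appealing to the independence of the two sectors --- your symplectic block-diagonality bookkeeping (data $X$ errors commute trivially through the flag-to-data CNOTs of the $Z$ sub-gadget and vice versa) is exactly the formalization that independence claim needs. One minor note: adding the two sub-gadgets as you propose yields $4n$ flag qubits and $10n$ CNOTs (consistent with the paper's later non-compressed fault-tolerance corollary), so the $2n$/$5n$ count in the statement appears to reflect only a single sub-gadget.
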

    \begin{figure}
        \centering
        \subfloat[The gadget which catches $Z$ errors around a data CNOT.]{
            \includegraphics[width=\columnwidth]{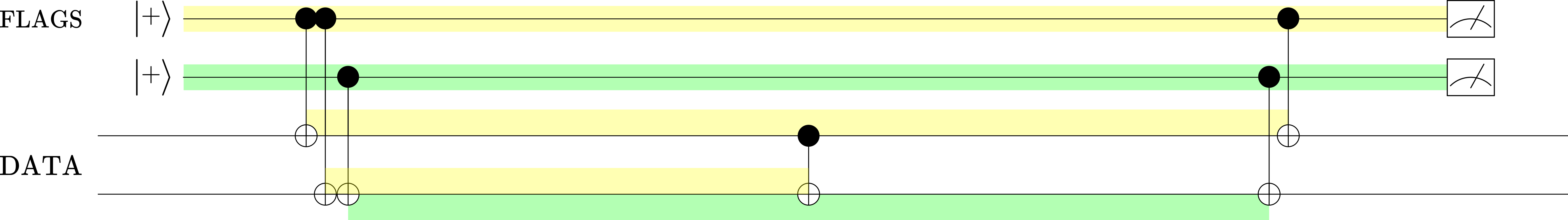}
        }\\[1em]
        \subfloat[The entire resulting gadget.]{
            \includegraphics[width=\columnwidth]{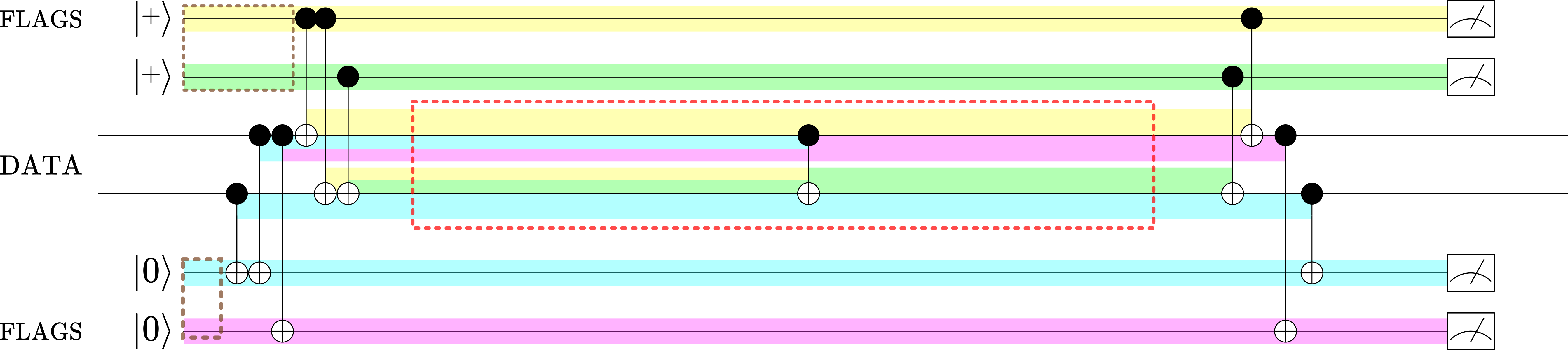}
        }
        \caption{\label{fig:z_gadget}Note that the $Z$-error gadget is the same as the $X$ gadget except that the role of control and target are exchanged for every CNOT (including the data CNOT) and ancillae are prepared in the $\ket +$ state instead of the $\ket 0$ state. For the full gadget, we have made the choice to place the gadget to catch $Z$ errors inside the gadget to place $X$ errors.}
    \end{figure}

    It is instructive to consider the fundamental structure of the circuit we obtain.
    We can label the space-time locations surrounding a CNOT as in Figure~\ref{fig:space_time_qubits}. 
    We can then consider (a generating set of) the stabilizers of this circuit:
    \begin{align}
        Z_1Z_2,Z_3Z_2Z_4,\\
        X_1X_2X_4, X_3X_4.
    \end{align}
    The space-time stabilizers of a circuit can be thought of simply as constraints on the Pauli frames across input/output time-slices of a circuit~\cite{delfosse2023spacetime}.
    The gadget we propose for detecting $X$ errors measures the two $Z$ stabilizers, and vice-versa.
    By measuring the stabilizers associated with each CNOT we \emph{enforce} the input-output conditions.
    This is an alternative way to see that any $X$ error is caught by using the gadget with flags prepared in the $\ket 0 $ state.

    As a consequence of the fact that we are measuring stabilizers and the fact that we measure one `inside' the other, errors on the flag qubits in the regions outlined in brown in Figure~\ref{fig:z_gadget} propagate trivially both to the data and to the flag qubits prepared in the opposite basis.
    
    \begin{figure}
    \centering
    \includegraphics[width=0.3\textwidth]{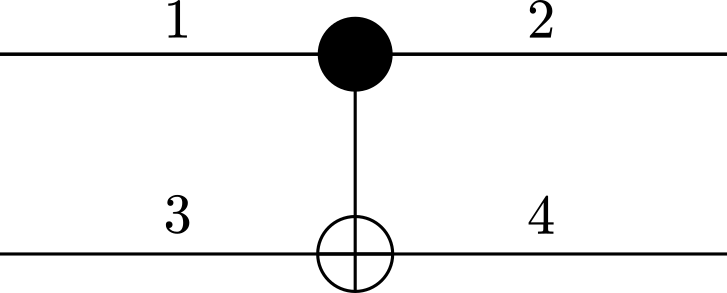}
    \caption{\label{fig:space_time_qubits}The space-time locations associated with a CNOT. We will use these locations to define space-time stabilizers.}
    \end{figure}
    \subsection{Connecting Gadgets and Handling Measurement Errors}\label{sec:connecting}
    We now extend Corollary~\ref{cor:bulk_coverage_xz} to produce a network of CNOTs which is flagged in the bulk. 
    For clarity we have limited to consider $X$ errors and gadget to detect $X$ errors, but the same statement applies for Pauli errors.
    \begin{figure*}[ht]
        \includegraphics[width=0.45\textwidth, valign=c]{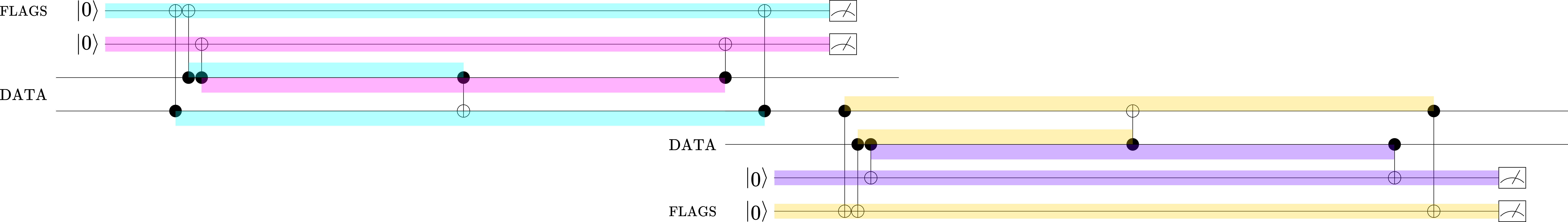}\hfill
        \includegraphics[width=0.03\textwidth, valign=c]{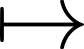}\hfill
        \includegraphics[width=0.45\textwidth, valign=c]{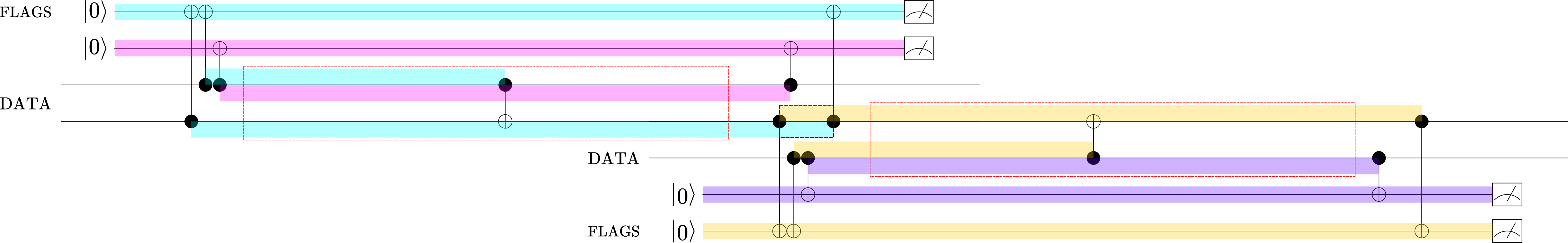}
        \caption{\label{fig:overlap_map} The procedure we use to ensure that adjacent gadgets do not leave an unprotected gap. We have only shown the $X$ gadgets for the sake of visual clarity, but the full gadgets should be overlapped in the same manner.}
    \end{figure*}
    \begin{lemma}
        Given a network of $n$ CNOTs, adding one $X$ gadget to each CNOT, and overlapping adjacent gadgets as in Figure~\ref{fig:overlap_map} implies that any error of weight-$q$ consisting solely of $X$ errors is equivalent to a weight-$q$ error consisting solely of $X$ errors each in exactly one of the red-outlined bulk regions plus at most $q$ measurement errors.
    \end{lemma}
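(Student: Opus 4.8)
The plan is to reduce to a single $X$ error and then argue by $\mathbb{Z}_2$-linearity. Both the Pauli that an error propagates to on the data and the resulting flag-measurement pattern are $\mathbb{Z}_2$-linear functions of the error, since everything downstream is Clifford and $X$ errors compose additively. It therefore suffices to show that every \emph{single} $X$ error, placed at any space-time location in the overlapped network of Figure~\ref{fig:overlap_map}, is equivalent to at most one $X$ error lying in exactly one bulk region together with at most one flag measurement error. Summing over the $q$ single errors of a weight-$q$ configuration then produces a weight-$q$ bulk error plus at most $q$ measurement errors, which is the claim.

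First I would fix the notion of equivalence that makes the statement meaningful: two $X$-error configurations are \emph{equivalent} when they propagate to the same Pauli on the output data and their flag syndromes differ only by the declared measurement errors. The key structural fact, already used implicitly in Lemma~\ref{lem:main_gadget_lemma}, is that the $X$ gadget measures the space-time stabilizers $Z_1Z_2$ and $Z_3Z_2Z_4$; multiplying an error by one of these measured stabilizers leaves the data propagation unchanged (the stabilizer propagates to the identity on the data, exactly the nontrivial kernel element $(1,1,1)$ of Lemma~\ref{lem:main_gadget_lemma}) while flipping the flag to which that stabilizer is attached. Each such multiplication is thus absorbed as a single measurement error, and this is precisely the mechanism by which an error can be moved between adjacent gadgets at the controlled cost of one flag flip.

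Next I would enumerate the possible locations of a single $X$ error after the overlap of Figure~\ref{fig:overlap_map}, sorting them into three types: (i) an error already inside a bulk region, which needs no transformation; (ii) an error on a flag qubit immediately before its measurement, which is by definition a measurement error; and (iii) an error in a connecting/overlap region, i.e.\ precisely the seam between adjacent gadgets that the construction of Figure~\ref{fig:overlap_map} was designed to close. For case (iii) I would push the error into the bulk of one of the two adjacent gadgets by multiplying by a measured space-time stabilizer, incurring at most one measurement error. The crux is the coverage claim: the overlap must leave \emph{no} space-time location (on either data or flag qubits) outside the union of the bulk regions and the measurement locations. I would verify this directly from the geometry of Figure~\ref{fig:overlap_map}, checking that the output time-slice of each gadget is identified with the input time-slice of its successor so that consecutive bulk regions abut without a gap.

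The main obstacle I anticipate is exactly this geometric bookkeeping for case (iii). One must confirm that every location is covered once the gadgets are overlapped, that pushing an error across the seam introduces \emph{only} a single measurement error rather than propagating to several data qubits or flipping flags in both adjacent gadgets, and that no location is double-counted in the weight accounting. Once the per-error reduction is established, the result extends verbatim from the $X$-only statement to arbitrary Pauli errors by applying the identical argument to the interleaved $Z$ gadget, just as Corollary~\ref{cor:bulk_coverage_xz} was obtained from its $X$-only counterpart.
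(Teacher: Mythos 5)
Your proposal follows essentially the same route as the paper's proof, which is exactly your case (iii): examine the blue-outlined boundary region produced by the overlap, show that a weight-one $X$ error there is equivalent to a weight-one error in one of the two neighboring bulk regions accompanied by one measurement error, and verify this directly for all ten orientations in which two gadgets can neighbor each other. The reduction from weight-$q$ to single errors by $\mathbb{Z}_2$-linearity, which you spell out, is left implicit in the paper, and your cases (i) and (ii) are likewise implicit; making them explicit is harmless and arguably clarifying.

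One conceptual slip in your mechanism paragraph should be corrected, however. The checks measured by the $X$ gadget are the $Z$-type space-time stabilizers $Z_1Z_2$ and $Z_3Z_2Z_4$; you cannot multiply an $X$-error configuration by these to relocate it, and the nontrivial kernel element $(1,1,1)$ of Lemma~\ref{lem:main_gadget_lemma} lies, by definition of the kernel, in the nullspace of $H$ — it triggers \emph{no} flags, so multiplying by it is invisible and cannot be ``absorbed as a single measurement error.'' The move that actually pushes a seam error into an adjacent bulk region is commutation of the data $X$ error through the single data-to-flag CNOT separating the two regions: since that CNOT copies $X$ from data to flag, the error on one side of the CNOT equals the same error on the other side together with an $X$ on the flag qubit after its final coupling, which is precisely one flag-measurement flip. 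Equivalently, one multiplies by an $X$-type space-time stabilizer of the \emph{flagged} circuit whose support includes exactly one flag measurement. Because you defer the decisive verification to direct inspection of the overlap geometry in Figure~\ref{fig:overlap_map} — which is all the paper itself does — your plan goes through once this mechanism is stated correctly.
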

    \begin{proof}
        We simply examine the blue outlined region. 
        A weight one error in the blue region is equivalent to a weight one error in one of the two neighboring bulk regions accompanied by a measurement error. 
        This holds for all ten possible orientations in which the two gadgets can neighbor each other.
    \end{proof}
    
    Performing the same procedure to overlap the edges of the gadget detecting both $X$ and $Z$ errors we can conclude that any Pauli error of weight-$q$ is equivalent to a Pauli error of weight-$q$ with each component in exactly one bulk region plus at most $q$ $Z$-basis measurement errors and $q$ $X$-basis measurement errors.

    Diagrammatically, we can now represent our gadgets as a collection of red-outlined bulk regions, where errors are flagged properly, and blue-outlined boundary regions, where data errors produce measurement errors as well as being flagged, as in Figure~\ref{fig:diagram}.
    \begin{figure}
        \includegraphics[width=\columnwidth]{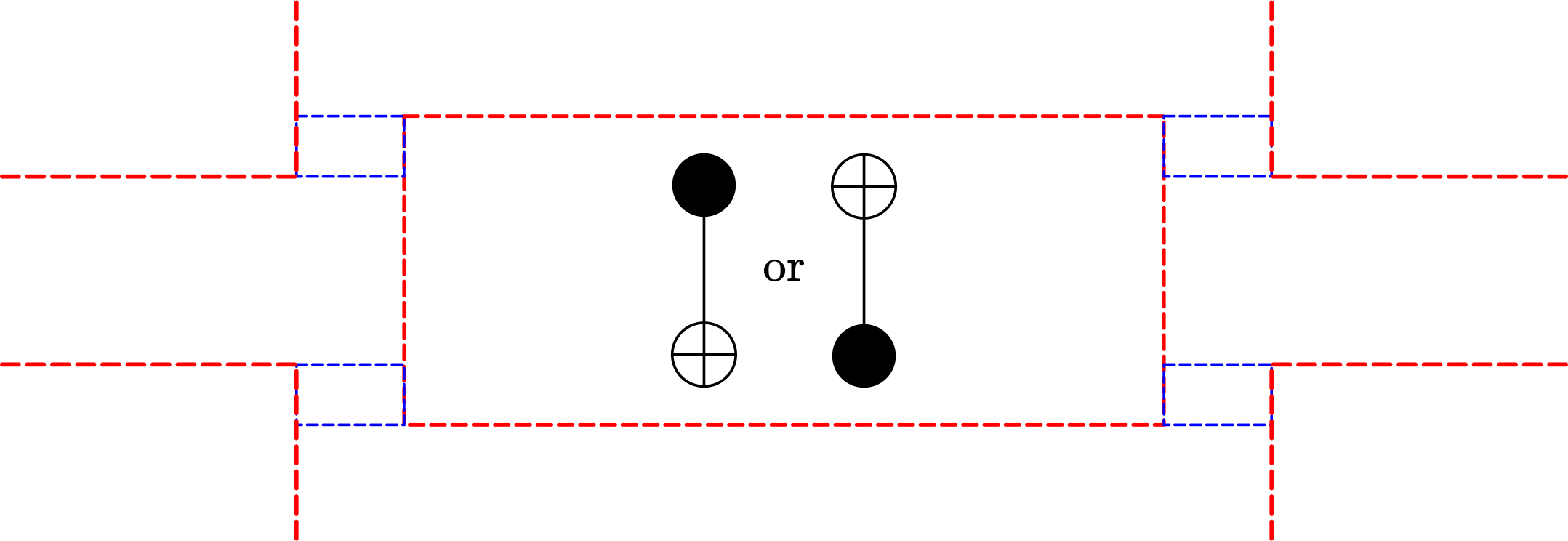}
        \caption{\label{fig:diagram}A diagrammatic representation of the bulk and boundary regions, with the bulk region protecting each data CNOT in red and the boundary in blue.}
    \end{figure}
    We now repeat the application of each gadget to each CNOT in space, simultaneously handling measurement errors and errors in the boundary regions.
    \begin{figure}
        \includegraphics[width=\columnwidth]{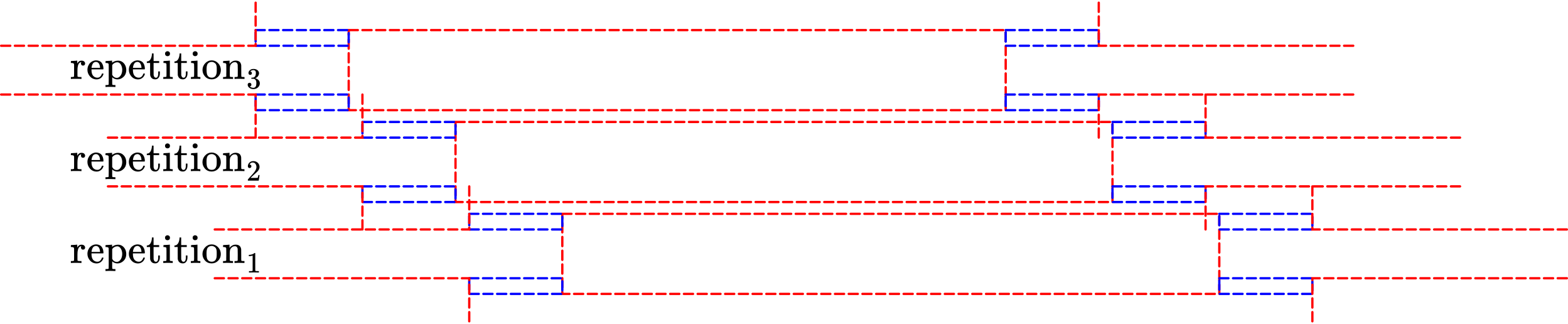}
        \caption{\label{fig:repetition} A diagrammatic representation of our repetition procedure. Each set of dotted lines represents one space-like repetition of the base flag gadget (i.e. a set of 4 flag ancillae and 10 CNOTs) attached to its neighbors. Crucially, no single-qubit error can be in more than one boundary region.}
    \end{figure}
    \begin{lemma}\label{lem:rep}
        Flagging a network of CNOTs with $2t$ repetitions of our gadget applied to each CNOT in the manner of Figure~\ref{fig:repetition} so that boundary regions of each repetition are disjoint is sufficient to ensure that any data error with support between the first and last data CNOTs either does not propagate or produces a nontrivial flag pattern even when accompanied by $m$ measurement errors, provided that $q + m < 2t$ where $q$ is the number of single-qubit errors in a boundary region.
    \end{lemma}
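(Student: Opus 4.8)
The plan is to argue by contraposition and to reduce the statement to the distance property of a repetition code on the flag outcomes. Suppose a data error $E$ having $q$ single-qubit components in boundary regions, together with $m$ measurement errors, propagates non-trivially to the data yet yields the all-zero flag pattern; the goal is to force $q + m \geq 2t$. Since the $X$- and $Z$-detecting gadgets are independent, it suffices to carry this out for $X$ errors with flags prepared in $\ket 0$, the $Z$ case being identical after exchanging controls with targets and preparing the ancillae in $\ket +$.

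First I would normalize $E$. Invoking the preceding lemma on overlapping gadgets, I would replace every single-qubit component lying in a boundary region by an equivalent component in one of the two adjacent bulk regions plus a single accompanying measurement error. Because the $2t$ repetitions are arranged (Figure~\ref{fig:repetition}) so that their boundary regions are pairwise disjoint, each replacement is unambiguous and contributes at most one new measurement error; after normalization every data component sits in a bulk region and the total number of measurement errors is at most $q + m$. By Corollary~\ref{cor:bulk_coverage_xz} each bulk component either propagates to the identity on the data while leaving all flags at $0$, or else flags its repetition. I would discard the components of the first kind, since they affect neither the data output nor the flag pattern (their associated measurement errors, if any, remain in the budget), so that the surviving components are exactly those responsible for the non-trivial propagation.

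The core of the proof is then to exhibit the $2t$ repetitions as a distance-$2t$ repetition code on the flag readouts. The two structural facts I would establish are: (i) an error that propagates non-trivially to the data excites the flag of every one of the $2t$ repetitions, so that the fault-free readout is the all-ones pattern; and (ii) each available fault -- a bulk data error charged to $q$ or a measurement error charged to $m$ -- can reset at most one of these flags back to $0$, precisely because the masking measurement errors produced by boundary components are confined to disjoint repetitions. Granting (i) and (ii), turning the all-ones readout into the all-zero pattern requires flipping all $2t$ flags, hence at least $2t$ faults, so $q + m \geq 2t$, contradicting the hypothesis and proving the lemma. Equivalently, the $2t$ flags form a length-$2t$ repetition code, which detects any pattern of at most $2t - 1$ faults.

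I expect the main obstacle to be fact (i): verifying that a single non-trivially-propagating data component is genuinely registered by all $2t$ repetitions, rather than only by the one in whose region it lies. This requires tracking the forward propagation of the error through the stacked, overlapped gadgets of Figure~\ref{fig:repetition} and checking that each repetition independently measures the violated space-time stabilizer, so that the induced detector values span all $2t$ layers between the input and output time-slices. Once this propagation bookkeeping is in hand, fact (ii) follows directly from the disjointness of the boundary regions, and the counting inequality $q + m \geq 2t$ is immediate.
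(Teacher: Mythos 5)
Your proposal is correct and is essentially the paper's argument in contrapositive form: the same boundary-to-bulk normalization, the same use of disjoint boundary regions to confine each (effective) measurement error to a single repetition, and the same counting over the $2t$ repetitions --- the paper simply runs it directly, using $q+m<2t$ to pigeonhole a repetition free of (effective) measurement errors and applying Lemma~\ref{lem:main_gadget_lemma} there, rather than arguing that masking all $2t$ repetitions requires at least $2t$ faults. Your anticipated obstacle (i) is dispatched in the paper by the observation that, after normalization, each single-qubit error lies in the bulk of exactly one gadget of each type per repetition, so within the clean repetition one partitions the error by gadget region and invokes Lemma~\ref{lem:main_gadget_lemma} per region (note this is the right granularity: your per-component reading of Corollary~\ref{cor:bulk_coverage_xz} is slightly off, since several components in one gadget region must be judged jointly, and a bulk component registers identically in all repetitions rather than ``flagging its repetition'').
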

    \begin{proof}
        We first convert all $q$ data errors in the boundary into data errors in the bulk, producing at most an effective $q + m$ measurement errors for each basis. Now each single-qubit error is in the bulk of exactly one gadget of each type per repetition. There is at least one $X$($Z$) repetition without any (effective) measurement errors since $q + m < 2t$ by assumption. Considering this repetition, we can partition the $Z$($X$) component of the error into subsets where subset $i$ consists of all errors in the region flagged by gadget $i$. By Lemma~\ref{lem:main_gadget_lemma} we know that for each such subset either there is no propagation, or the flag pattern is non-trivial. Since flag patterns from distinct gadgets do not affect one another, and errors outside the bulk and boundary do not change the flag pattern, this suffices to prove the lemma.
    \end{proof}
    \subsection{Single-qubit Cliffords}\label{sec:1qclif}
    We now generalize from the case of a network of CNOTs to a general Clifford circuit.
    We continue to model our Clifford circuit as a network of CNOTs, only now each CNOT is followed by a single-qubit Clifford gate both on the control and the target (a product of $H, S, X, Y, Z$). Consequently, each gadget will now cover a CNOT, as well as the two (possibly trivial) Clifford gates which occur after it. We also allow for single-qubit Cliffords on the data before any CNOTs to fully capture a general Clifford circuit.
    \begin{figure}
        \includegraphics[width=\columnwidth]{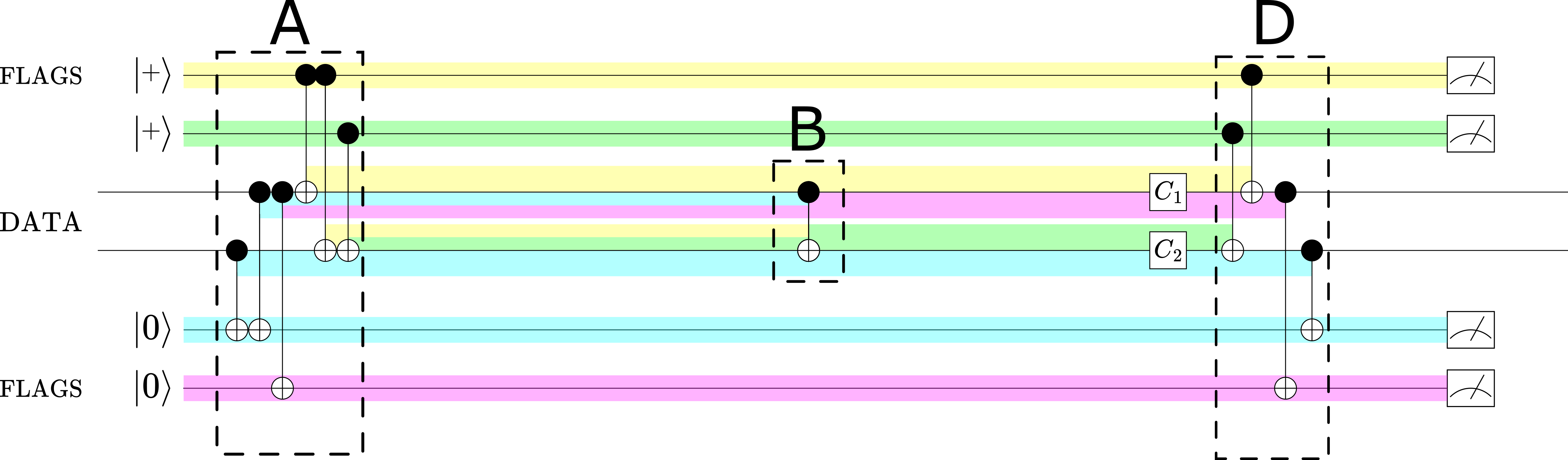}
        \caption{\label{fig:clif_boxes} The groups of gates we are interested in when we introduce single-qubit Cliffords.}
    \end{figure}
    
    Fortunately, the generalization is simple. In Figure~\ref{fig:clif_boxes} we have marked the locations of interest. When $C_1 = C_2= I$, we have implicitly been using the fact that $ABD = B$, so that we have avoided measuring any logical information, but any error in the bulk is still flagged. Equivalently we have been measuring space-time stabilizers. We will preserve this when we adjust for the fact that $C_1, C_2$ may be non-trivial, in essence by considering the stabilizers of this new subcircuit.
    \begin{lemma}
        Replacing $D$ by $D' = (C_1 \otimes C_2)^\dagger D (C_1 \otimes C_2)$ is sufficient for any error in the bulk to be flagged and for $AB(C_1 \otimes C_2)D'$ to equal $B(C_1 \otimes C_2)$.
    \end{lemma}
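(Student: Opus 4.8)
The plan is to separate the statement into its two independent claims: the unitary identity $AB(C_1\otimes C_2)D' = B(C_1\otimes C_2)$ governing the action on the data, and the error-detection (flagging) property for bulk errors. These are logically distinct, and I would prove them in that order because the first supplies the algebraic manipulation that also organizes the second.

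First I would dispatch the unitary identity by direct computation. Substituting the definition $D' = (C_1\otimes C_2)^\dagger D (C_1\otimes C_2)$ into $AB(C_1\otimes C_2)D'$, the factor $(C_1\otimes C_2)$ standing immediately next to $D'$ meets the leading $(C_1\otimes C_2)^\dagger$ of $D'$; this adjacent pair $(C_1\otimes C_2)(C_1\otimes C_2)^\dagger$ cancels, leaving $ABD(C_1\otimes C_2)$. Applying the relation $ABD = B$ — which we may assume, since it is exactly the identity already used in the $C_1=C_2=I$ case — yields $B(C_1\otimes C_2)$. This cancellation is purely formal and does not depend on the composition-order convention, so it is the easy half.

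The flagging claim is the substantive part, and my strategy is to reduce it to Lemma~\ref{lem:main_gadget_lemma}. The same insertion of $(C_1\otimes C_2)(C_1\otimes C_2)^\dagger = I$ used above shows that the physical circuit $AB(C_1\otimes C_2)D'$ realizes the same unitary as the reference circuit $ABD(C_1\otimes C_2)$, in which the flag-coupling block $ABD$ is precisely the original gadget and the trailing $(C_1\otimes C_2)$ touches only data qubits and no flag. Equivalently, in the space-time-stabilizer language of Figure~\ref{fig:space_time_qubits}, conjugating the data gate $D$ by $(C_1\otimes C_2)$ conjugates the measured space-time stabilizers of the subcircuit by the same single-qubit Clifford, so the modified gadget measures exactly the space-time stabilizers of the new subcircuit $B(C_1\otimes C_2)$. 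Since it was precisely the measurement of these stabilizers that produced the flag pattern $He$ analyzed in Lemma~\ref{lem:main_gadget_lemma}, the null-space argument there transfers, now applied to the conjugated errors.

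The main obstacle I anticipate is the bookkeeping for errors that live in the relocated region — those occurring in the inserted $(C_1\otimes C_2)$ and in $D'$ rather than in the original $D$ — and verifying that pushing them through the single-qubit Cliffords cannot let them evade the flags. The facts that make this go through are that $C_1$ and $C_2$ act on single data qubits only, so conjugation sends a single-qubit Pauli error to a single-qubit Pauli error at the same location without increasing weight or coupling to the flags, and that we detect both $X$- and $Z$-type bulk errors via the full gadget of Corollary~\ref{cor:bulk_coverage_xz}. The latter is essential: a Clifford such as $H$ or $S$ can change the Pauli type of an error under conjugation, so only because the full gadget measures the complete (conjugated) generating set of space-time stabilizers does such a type change merely relabel which gadget catches the error rather than create an undetectable one. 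With this in hand, the dichotomy ``flagged, or trivial on the data'' of Lemma~\ref{lem:main_gadget_lemma} is preserved, which is exactly the flagging claim.
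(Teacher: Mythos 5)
Your proposal is correct and takes essentially the same approach as the paper: the unitary identity is established by the identical cancellation $(C_1\otimes C_2)(C_1\otimes C_2)^\dagger = I$ followed by $ABD = B$, and the flagging claim by commuting Pauli errors past the single-qubit Cliffords, using the fact that Clifford conjugation maps a single-qubit Pauli to another single-qubit Pauli together with the full gadget's coverage of both $X$- and $Z$-type errors. Your observation that type-changing Cliffords such as $H$ or $S$ are harmless precisely because both bases are flagged is exactly the paper's remark that ``any product of $X, Z$ (i.e.\ any Pauli) will be caught.''
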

    \begin{proof}
        Any error before $C_1 \otimes C_2$, upon being commuted past $C_1 \otimes C_2$ and $(C_1 \otimes C_2)^\dagger$ act the same upon the flags as it would were $D'$ to equal $D$ and $C_1, C_2$ to be identity. Any Pauli error between $C_1\otimes C_2$ and $(C_1 \otimes C_2)^\dagger$, once being commuted past $(C_1 \otimes C_2)^\dagger$ will become some other Pauli error, since $(C_1 \otimes C_2)^\dagger$ is a Clifford. By the construction of the gadget, any product of $X, Z$ (i.e. any Pauli) will be caught. Any error occurring after $C_1 \otimes C_2$ acts on $D'$ as if it were $D$.

        Finally, $AB(C_1 \otimes C_2)D' = AB(C_1 \otimes C_2)(C_1 \otimes C_2)^\dagger D (C_1 \otimes C_2) = ABD (C_1 \otimes C_2) = B (C_1 \otimes C_2)$ since $ABD = B$.
    \end{proof}
    
    It is straightforward to observe that after modifying the flag gadget in this way and overlapping it with its neighbors, it is still the case that any error can be made to have support solely in the bulk regions at the price of at most one effective measurement error in each basis.
    
    The consequence that the boundaries of overlapped flag gadgets do not commute is only that some flags detect errors on other flags.
    In some sense this is by design -- if an error ends up on the data, no matter where it came from we need to detect it.
    The worry is only that one error propagates to the data multiple times.
    If, however, we only use a single gadget per flag qubit, errors on the flag qubits only propagate to the data once.
    This leads us to a Corollary providing one way to implement a Clifford circuit fault tolerantly.
    \begin{cor}[Non-compressed Fault Tolerance]
        Flagging a circuit consisting of $n$ CNOTs and up to $2(n + 1)$ single qubit Clifford gates with one gadget per CNOT allows for any Pauli error in the union of the bulk regions of each gadget to be fault tolerantly corrected. The cost of such a procedur is $4n$ ancillae and $10$ additional CNOTs.
    \end{cor}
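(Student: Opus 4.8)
The plan is to combine the per-gadget guarantees already established and to restrict attention to errors supported in the bulk, where those guarantees are clean. I would place one full gadget of Figure~\ref{fig:z_gadget} on each of the $n$ CNOTs, absorb the trailing single-qubit Cliffords $C_1, C_2$ through the replacement $D \mapsto D' = (C_1 \otimes C_2)^\dagger D (C_1 \otimes C_2)$ of the preceding lemma, and overlap neighboring gadgets as in Figure~\ref{fig:overlap_map}. The overlapping lets any data error be pushed into some bulk region by the boundary-to-bulk reduction of Section~\ref{sec:connecting}; the accompanying measurement errors are exactly what this non-compressed statement sets aside by restricting the claim to errors in the union of the bulk regions, which is also why no spacelike repetition (Lemma~\ref{lem:rep}) is needed here.

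Within the bulk I would establish detection. The single-qubit Clifford lemma ensures that the modification $D \mapsto D'$ leaves bulk detection identical to the $C_1 = C_2 = I$ case, so Corollary~\ref{cor:bulk_coverage_xz} applies gadget by gadget: each bulk error either flags its own gadget or propagates to the identity on the data. Because the flag patterns of distinct gadgets are independent -- the same fact exploited in the proof of Lemma~\ref{lem:rep} -- an arbitrary error in the union of the bulk regions decomposes into its per-gadget pieces, and its total flag pattern is nontrivial unless every piece propagates trivially. Hence any bulk error that reaches the data is flagged.

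The core of the argument is then the upgrade from ``detected'' to ``fault tolerantly corrected'' in the sense of Definition~\ref{def:FT}, and this is where I expect the main obstacle to lie, because it is where faults originating on the flag qubits rather than on the data must be controlled. Since the whole circuit is Clifford, the propagation of any flagged error can be computed classically by Gottesman--Knill and undone with a Pauli correction before the next non-Clifford gate; the delicate part is bounding the weight of that residual correction. The decisive structural input is that the non-compressed layout uses a single gadget per flag qubit, so each ancilla couples to the data through exactly one gadget and a single flag fault therefore reaches at most one data qubit, while the brown-region observation of Section~\ref{sec:flagged_cliffords} guarantees such faults do not leak into the oppositely-prepared flags. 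I would finish with the weight bookkeeping, combining the weight-$r$ input deviation, the $s$ physical faults, the once-only propagation of each flag fault, and the flag-or-trivial dichotomy for bulk data faults, and verify that the surviving correction obeys both clauses of Definition~\ref{def:FT}.

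The resource count is then immediate: one full gadget contributes $4$ ancillae and $10$ CNOTs, so across the $n$ data CNOTs the total overhead is $4n$ ancillae and $10n$ additional CNOTs, with the single-qubit Cliffords absorbed into the existing $D'$ at no extra two-qubit cost.
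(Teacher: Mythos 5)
Your proposal is correct and follows essentially the same route as the paper: the per-CNOT bulk-coverage guarantee (with the $D' = (C_1 \otimes C_2)^\dagger D (C_1 \otimes C_2)$ conjugation absorbing single-qubit Cliffords), the restriction to bulk errors that sidesteps measurement errors and spacelike repetition, and---decisively---the observation that with one gadget per flag qubit each flag fault propagates to the data at most once and is therefore caught like any weight-one data error, which is exactly the reasoning the paper uses to justify this corollary. One small inaccuracy worth noting: the paper does not claim flag faults avoid the oppositely-prepared flags (it explicitly allows flags to detect errors on other flags and calls this harmless by design; only the brown regions propagate trivially), but this does not affect your argument, and your count of $10n$ CNOTs is the intended reading of the statement's evident typo ``$10$''.
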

    
    \subsection{Compression by a Classical Code}\label{sec:compression}
    Up to this point, we have focused solely on ensuring that every propagating error is flagged without considering resource efficiency. A na\"ive application of one gadget per two-qubit gate requires $2n$ ancilla qubits and $5n$ two-qubit gates to protect against the propagation of $X$ errors on the data.
    However, we can reduce this overhead by applying a classical code.
    The key insight is that measuring each flag qubit is equivalent to measuring a space-time stabilizer, as described by Gottesman~\cite{gottesman2022perspective}.
    Recognizing this allows us to compress the number of measurements, and thus the number of ancilla qubits, by leveraging our previous work on minimizing the measurements needed for fault-tolerant error correction~\cite{anker2025compressed}

    Before describing the compression procedure, we first outline some of the traps we will have to avoid.
    We will end up measuring the product of many gadgets, the precise meaning of which we will describe shortly but which involves connecting one flag qubit to many data qubits.
    A natural fear is that a low-weight error on the flag qubits can propagate to the data many times. 
    Although it is possible to argue that the final error upon propagating through the circuit is low-weight, the more pressing issue is that the flag pattern produced may be identical to the flag pattern produced for different data error.
    We observe that if each error on a flag qubit to propagate only to a single data location, this issue is obviated, since we already handle low-weight errors on the data by design.
    We will show in Section~\ref{sec:meta} that we can gaurantee this up to a known Pauli correction.

    \begin{figure*}[ht]
        \includegraphics[width=\textwidth]{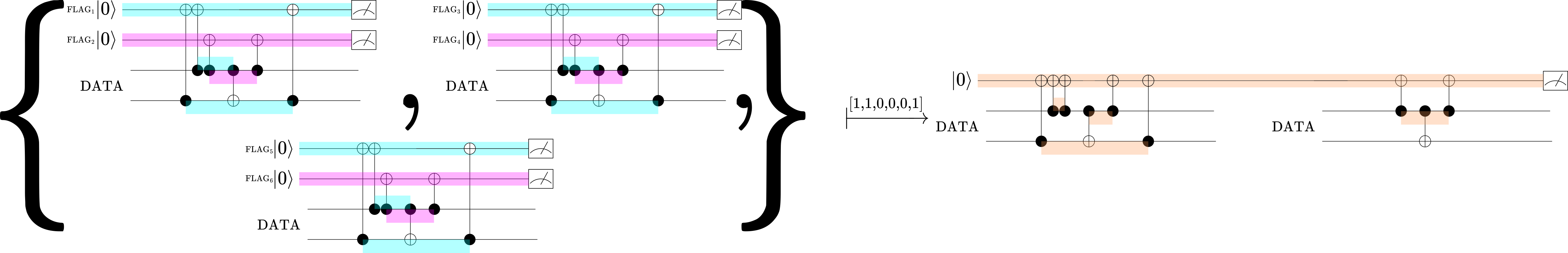}
        \caption{\label{fig:gadget_prod} Given the six measurement results that constitute the gadgets protecting two data CNOTs and the parity check $[1, 1, 0, 0, 0, 1]$ we combine the gadgets on the left to form the gadget on the right. One can verify that taking the three gadgets separately and adding measurements results $1, 2$ and $6$ produces the same result as the circuit on the right for any data error outside of the boundary regions. Note that the three sets of data qubits labeled on the left may refer to physically distinct data qubits, or may not.}
    \end{figure*}
    Formally, consider the $X$-type flag pattern defined by each measurement result in repetition $i$, which we call $f_i$. 
    \begin{lemma}
        If $s$ data errors are suffered, then $|f_i| \leq 2s$, where $|\cdot|$ refers to the Hamming weight.
    \end{lemma}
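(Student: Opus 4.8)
The plan is to exploit the fact that the flag response is $\mathbb{Z}_2$-linear in the data errors and that, within a single repetition, each data error activates the flags of only one gadget. First I would write $f_i$ as a sum over individual error contributions: if the data errors appearing in repetition $i$ are $e_1,\dots,e_s$, then since each flag records (the parity of) a space-time stabilizer and Pauli propagation through a Clifford circuit is linear over $\mathbb{Z}_2$, the flag pattern decomposes as $f_i = \sum_{j=1}^{s} f_i(e_j)$, where $f_i(e_j)$ is the flag vector that $e_j$ would produce on its own and the sum is taken over $\mathbb{Z}_2$. Only the $X$-component of each error matters for the $X$-type flags, so it suffices to track those components.

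Next I would bound each single-error contribution $|f_i(e_j)|$ by $2$. Using the overlapping construction of Section~\ref{sec:connecting}, each $e_j$ is equivalent to an error supported in the bulk of exactly one gadget (at the price of measurement errors, which do not inflate the per-error flag weight being bounded here). Within that one gadget, Lemma~\ref{lem:main_gadget_lemma} gives the flag response as the matrix $H$ acting on the region parities $(a,b,c)$, and every column of $H$ --- namely $(1,1)^\top$, $(0,1)^\top$, and $(1,0)^\top$ --- has Hamming weight at most $2$. Since a single bulk error sets exactly one of these parities, $|f_i(e_j)| \le 2$.

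Finally I would combine these bounds using subadditivity of Hamming weight under XOR, $|u + v| \le |u| + |v|$ (the support of $u+v$ lies in the union of the supports of $u$ and $v$), to obtain
\[ |f_i| = \Big| \sum_{j=1}^{s} f_i(e_j) \Big| \le \sum_{j=1}^{s} |f_i(e_j)| \le 2s. \]

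The step I expect to require the most care is the reduction establishing that each data error, within a given repetition, contributes to the flags of at most one gadget --- that is, that the boundary-to-bulk conversion together with the disjointness of boundary regions guaranteed by Lemma~\ref{lem:rep} genuinely confines each error's flag footprint to a single gadget's two flags, so that the per-error bound of $2$ is not inflated by overlaps at the boundaries. Everything else is linear-algebra bookkeeping over $\mathbb{Z}_2$.
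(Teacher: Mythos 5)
Your overall skeleton---$\mathbb{Z}_2$-linearity of the flag response, a per-error bound of $2$, and subadditivity of Hamming weight under XOR---is exactly the paper's proof, which is a one-line triangle-inequality argument. The gap is in how you establish the per-error bound for errors in the boundary (overlap) regions. You route these through the boundary-to-bulk equivalence of Section~\ref{sec:connecting} and then dismiss the accompanying measurement errors parenthetically. But an effective measurement error on a flag qubit is, by definition, a flip of one bit of the measured pattern $f_i$: the actual flag pattern produced by a boundary error equals the pattern of the equivalent bulk error XORed with the measurement-error flip. Bounding only the bulk contribution by $2$ (via the columns of the matrix $H$ in Lemma~\ref{lem:main_gadget_lemma}) therefore gives, as written, a per-error bound of $3$---two flag bits from the bulk-equivalent error plus one from the measurement flip---and hence only $|f_i|\le 3s$. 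The constant is not cosmetic: the factor of $2$ is what later fixes the required distance of the compressing classical code (e.g., distance $5$ rather than larger in the distance-$3$ example of Section~\ref{sec:logical_H_resources}).

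The fix is the paper's direct counting, which bypasses the equivalence lemma entirely: every data-error location, whether in the bulk of a single gadget or in the overlap of two adjacent gadgets, lies in the support of at most two flagged regions (in the bulk this is read off from the columns of $H$; in an overlap the paper observes there are at most two flagged regions covering the location). Your ``most care'' worry is well placed but misdiagnosed: the flag footprint of a boundary error is \emph{not} confined to a single gadget's two flags---it can touch one flag from each of the two adjacent gadgets---but its total Hamming weight is still at most $2$, which is all the triangle-inequality step needs.
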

    \begin{proof}
        This follows from the triangle inequality, since in the bulk each error is in the support of two flagged regions, and where two gadgets overlap there are at most two flagged regions.
    \end{proof}
    Since $|f_i|$ is close to the all-zeros string, which is a codeword in every linear block code, we can identify the positions of the ones, and thus the flag pattern, by performing parity checks on it. 
    Specifically, let $P$ be the parity-check matrix of a classical code with distance $2r + 1$.
    Then $f_i$ can be viewed as the zero codeword corrupted by an error with bitstring $f_i$.
    By computing the syndrome $P(\vec{0} + f_i) = Pf_i$, we can recover the error $f_i$ as long as its weight is at most $r$ (i.e., up to half the minimum distance).

    Of course, $f_i$ is simply the result of applying some set of parity checks (up to single-qubit Cliffords) to the space-time volume of our circuit -- therefore we can write $f_i$ as $He$ for some $H$. Then $Pf_i = PHe$. So instead of implementing $H$ and then doing parity checks on the resulting classical flag pattern, we implement $PH$, which combines the parity checks on the errors and on the flag pattern. Implementing $PH$ just corresponds to taking products of measurements of single flag gadgets, which just corresponds to attaching two or more gadgets to the same ancilla before measuring the ancilla as in Figure~\ref{fig:gadget_prod}. 
    
    After this multiplication, it may be the case that two CNOTs become redundant, in that multiplying them together yields identity, as is the case in Figure~\ref{fig:gadget_prod}. 
    Since this happens only in the boundaries, and we handle errors in the boundaries by considering them as errors in the bulk accompanied by measurement errors, removing these redundant gates still allows for fault tolerance.

    It is notable that now errors in the boundary regions are equivalent to errors in the bulk only up to a number of measurement errors which may be more than one. However, the proof of Lemma~\ref{lem:rep} only assumes that there is at least one repetition lacking both measurement errors and errors in the boundaries, which still holds.

    Implementing this strategy for every repetition then allows us to deduce the flag pattern, and hence the data errors, while remaining resistant to measurement errors and effective measurement errors by Lemma~\ref{lem:rep}. If $P$ has $m$ rows then the number of ancilla qubits is just $2t m$, assuming we repeat $2t$ times. For the BCH code \cite{hocquenghem1959bch, bose1960bch}, this is $O(2t^2 \log(2n))$, where $n$ is the number of two-qubit gates in the Clifford circuit we wish to flag.

    The new problem introduced is that errors on the flag qubits can propagate to the data many times.
    In fact, this is only a problem for the flag gadgets of the opposite basis.
    We formalize this with the following observation.
    \begin{lemma}
        The propagation of a weight-$1$ error on a flag qubit when restricted to the data is equivalent to an error somewhere on the data of weight at most $1$.
    \end{lemma}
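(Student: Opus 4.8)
The plan is to reduce to a single flag qubit and then reinterpret the propagated error as a ``suffix'' of the space-time stabilizer that this flag measures. By the $X$/$Z$ symmetry of the construction it suffices to treat one flag of the $X$-catching gadget (prepared in $\ket 0$, measured in $Z$, so that each data qubit is a control and the flag is the target of the CNOTs). On such a flag the only weight-$1$ error that both escapes detection and reaches the data is a $Z$ error -- the ``opposite-basis'' error of the sentence preceding the lemma: an $X$ error on the flag stays on the flag and merely flips its own $Z$ measurement, whereas a $Z$ error leaves the $Z$ measurement untouched but does propagate. I would model the flag as the target of a time-ordered sequence $CNOT_{d_1,f},\dots,CNOT_{d_k,f}$, whose controls sit at the space-time data locations comprising the product stabilizer $S=\prod_j Z_{\ell_j}$ that the ancilla actually measures; in the compressed construction $S$ is just the product, selected by the classical parity check, of the per-CNOT generators $Z_1Z_2$ and $Z_3Z_2Z_4$ (see Figure~\ref{fig:gadget_prod}).

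First I would compute the propagation directly. Under $CNOT_{d,f}$ we have $Z_f\mapsto Z_d Z_f$, so a $Z$ fault on the flag at time $\tau$ propagates, \emph{restricted to the data}, to exactly the suffix product $\prod_{j:\tau_j>\tau} Z_{\ell_j}$ (the flag retains only a harmless residual $Z_f$). The goal is therefore to show that every such suffix of $S$ is equivalent, as a forward-propagated Pauli, to a weight-$\le 1$ data error. I would lean on two facts. First, $S$ is itself a space-time stabilizer, being a product of the generators $Z_1Z_2$ and $Z_3Z_2Z_4$, so it propagates to the identity on the data; hence any \emph{complete} gadget block inside a suffix contributes nothing. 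Second, a \emph{partial} suffix of a single gadget reduces, modulo that gadget's own stabilizer, to weight at most one: the only nontrivial case is the suffix $Z_2Z_4$ of $Z_3Z_2Z_4$, and $Z_2Z_4\equiv Z_3$ modulo $Z_3Z_2Z_4$, where $Z_3$ is precisely the forward-propagation through the data CNOT of a single $Z$ on the target input -- i.e.\ ``an error somewhere on the data of weight at most $1$.''

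To combine these I would use the schedule. Because each gadget's flag CNOTs straddle its own data CNOT (locations $1,3$ before it, locations $2,4$ after it) and the gadgets are applied in the time order of the data CNOTs they protect, the $k$ CNOTs on the shared ancilla partition into contiguous per-gadget blocks, and any fault time $\tau$ lies strictly inside at most one block. The suffix at $\tau$ then factors as a product of complete gadget stabilizers (for gadgets wholly after $\tau$, each $\equiv I$ by the first fact) times one partial suffix of the single straddled gadget ($\equiv$ weight $\le 1$ by the second fact). Thus the whole suffix is equivalent to a weight-$\le 1$ data error, and the identical argument with controls and targets exchanged and the bases swapped covers the $Z$-catching flags, finishing the proof.

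The hard part will be the scheduling claim, namely that after forming the products no two distinct gadget blocks are simultaneously ``open'' at the fault time $\tau$. If the before/after CNOTs of two gadgets were interleaved in time, a single suffix could cut through both partial blocks at once and the propagated weight could accumulate; so I would need to argue that the natural time-ordered schedule (or a deliberately chosen one) keeps each gadget's CNOTs contiguous. By contrast, the single-gadget reductions invoked above are a short finite check over the two stabilizer types and, together with Lemma~\ref{lem:main_gadget_lemma}'s identification of the measured stabilizers, are routine.
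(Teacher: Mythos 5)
Your proposal is correct and follows essentially the same route as the paper: the paper's proof likewise writes the propagated error as the suffix $g_i' g_{i+1}\cdots g_n$ of the product of space-time stabilizers measured by the flag, discards the complete factors $g_{i+1},\ldots,g_n$ as stabilizers, and reduces the one partial factor modulo its own stabilizer to $g_i' g_i$, which has weight at most $1$ because $g_i$ has weight at most $3$ and $g_i'$ is a subset of it. The scheduling/contiguity issue you single out as the hard part is silently assumed in the paper (it is built into writing the propagation in that suffix form), so your version is, if anything, more explicit on that point.
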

    \begin{proof}
        Any given flag qubit measures the product of some set of space-time stabilizers $g_1, \ldots, g_n$. 
        An error $e$ on this flag qubit then propagates to $g_i' g_{i + 1}\ldots g_n$ on the data, where $g_i'$ is obtained from $g_i$ by replacing one of its terms with identity.
        Then up to space-time stabilizers, the propagation of $e$ is equal just to $g_i' g_i$, which is at most weight $1$, up to the stabilizer $g_i$, since $g_i$ is weight at most $3$ and $g_i'$ is a subset of $g_i$.
    \end{proof}
    Since data errors are already flagged for, it appears as if the construction given up to here is fault tolerant.
    Unfortunately, because of the fact that we overlap gadgets, an error on a flag qubit prepared in the $\ket 0$($\ket +$) basis can propagate to an unbounded number of flag qubits prepared in the $\ket +$($\ket 0$) basis.
    Effectively, error on flag qubits in one basis can appear as high-weight errors on the data to flag qubits in the other basis, surpassing the distance of the code that we use to compress the measurements.
    In the next section we solve this problem by ensuring that errors from the flag qubits can only propagate to a weight-$1$ error on the data before being corrected.

    \subsection{Meta-Flags and $Z$ errors}\label{sec:meta}
    So far, we have only proven that the suggested flag gadget construction is sufficient to handle data errors and measurement errors. However, $Z$ errors on the gadgets which catch $X$ errors can propagate to multiple $Z$ errors on the data after we compress the set of gadgets by a classical code and vice versa (if they are not compressed, an error on a flag qubit propagates at most once to the data, and is hence caught by other flag gadgets). For this reason the construction up to this point is not fault tolerant. We can, however, add flag gadgets to our flag qubits, so that any error on the flag qubits is approximately localized. To do this, we follow our previous prescription for flagged syndrome extraction~\cite{anker2024flag} nearly exactly.
    
    In our previous work, we have shown how to build flag gadgets protecting a single ancilla qubit connected to many data qubits so that any error on the ancilla qubit propagates to at most one error on the data qubit up to corrections based upon the flag pattern. We also have shown how to flag multiple ancilla qubits as if they were one larger ancilla qubit. This is precisely what we need to do in this case, in that the connected space-time stabilizers (what we have been calling flags up to this point) each take the role of one of the ancilla qubits. However, we also improve upon our previous work by making an observation about the form of the meta-flags.

    To meta-flag the flags, we use the fact that each flag qubit prepared in the $Z$ basis has only controls for each CNOT on it, meaning only $X$ errors will propagate from the flag qubit (the prescription for flagging flag qubits prepared in the $X$ basis follows readily from the prescription for the $Z$ basis case, and we omit it). 
    Therefore, we can take the region between each pair of CNOTs as a location for errors again. When constructing our flag gadget, each $1$ in a parity check corresponds to multiplying the primitive gadget composed of two CNOTs surrounding one CNOT connected to the data. This is illustrated in Figure~\ref{fig:fix}. It is notable that since the primitive flag gadgets we connect to the data measure stabilizers, any error from the flag qubit before any CNOTs are connected, propagates to a stabilizer.
    \begin{figure*}
        \includegraphics[width=\textwidth]{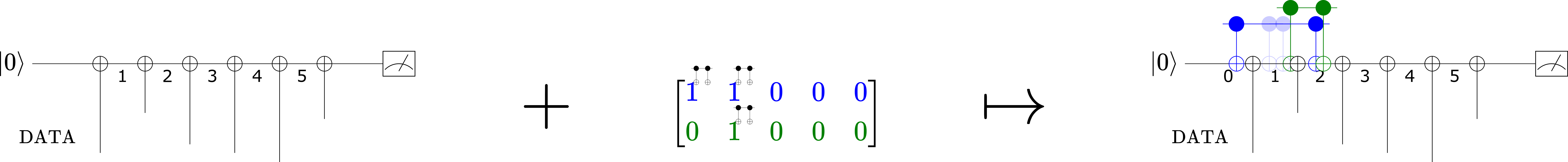}
        \caption{\label{fig:fix} On the left we have a flag qubit with the locations we consider numbered (data qubits it connects to not shown). In the center an example parity-check matrix is shown, with the gadgets (top right) associated to the protection of each location marked with a $1$. The right panel shows the product of these gadgets, color-coded by parity check. Redundant CNOTs which can be canceled are displayed translucent.}
    \end{figure*}

    Analogous to how we consider each location of the circuit as a location to flag above, we simply consider each location on any flag qubit as a location to meta-flag. We now can consider errors from the meta-flags that propagate to the flag qubits. 
    \begin{lemma}\label{lem:meta-prop}
        Any error from a meta-flag propagates to at most one flag qubit, and not at all to the data.
    \end{lemma}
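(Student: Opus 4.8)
The plan is to mirror, one level up, the weight-$1$ propagation argument already established for errors on the flag qubits, while additionally exploiting the fact -- noted just before the lemma -- that the space-time stabilizers measured by the primitive flag gadgets are trivial on the data. First I would fix attention on a single flag qubit prepared in the $Z$ basis, so that only $X$ errors propagate off of it; the $X$-basis case then follows by the same argument with the roles of $X$ and $Z$ exchanged. By construction the meta-flag attached to this flag qubit measures a product of space-time stabilizers $h_1, \ldots, h_m$ of the flag-qubit subcircuit, indexed by the locations along the flag qubit, where each primitive gadget of Figure~\ref{fig:fix} contributes one $h_j$.

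The key structural input is that each $h_j$ is supported on the flag-qubit locations and propagates only to a stabilizer on the data. This is precisely the observation that the primitive gadgets connecting the flag qubit to the data measure stabilizers, so that an error placed on the flag qubit before its CNOTs propagates to a code stabilizer rather than to a nontrivial data operator. I would isolate this as the main internal fact and verify it directly on the single primitive meta-flag gadget -- confirming that the two surrounding meta-flag CNOTs bracket the flag qubit's data CNOT so that the induced data action cancels up to a code stabilizer -- before taking products.

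Next I would track a single error $e$ on the meta-flag qubit. Entering at location $i$, $e$ propagates to $h_i' h_{i+1} \cdots h_m$ on the flag qubit, where $h_i'$ is $h_i$ with the single term at the meta-flag's connection point deleted. Since $h_{i+1}, \ldots, h_m$ are full stabilizers we may multiply them away for free, and multiplying by $h_i$ as well leaves exactly the deleted term, a weight-$1$ operator on the flag qubit; this gives the first claim that the propagation touches at most one flag qubit. For the second claim, the deleted term lies on the flag qubit, so $h_i'$ and $h_i$ have identical data restrictions, and hence the data restriction of the whole propagated operator is a product of the data-trivial $h_j$, which acts as a stabilizer on the data. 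Thus $e$ does not propagate nontrivially to the data at all.

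The step I expect to be the main obstacle is making ``each $h_j$ is trivial on the data'' fully rigorous: one must confirm that the primitive meta-flag gadget genuinely measures a space-time stabilizer whose data support is a code stabilizer, which is what lets the deletion of a flag-qubit term leave the data restriction unchanged and trivial. Once this is pinned down on the single primitive gadget, forming the products dictated by the parity-check matrix is routine, and both conclusions follow from the weight-$1$ reduction above exactly as in the flag-qubit case.
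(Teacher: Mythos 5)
Your argument is sound and reaches both conclusions, but it travels a genuinely different road than the paper's proof. The paper argues combinatorially: each primitive meta-gadget contributes exactly two CNOTs to the \emph{same} flag qubit, so every meta-flag touches every flag qubit an even number of times, and because the couplings to flag qubit $i$ are grouped in time before those to flag qubit $i+k$, a single fault on the meta-flag sits inside at most one flag qubit's window and hence propagates a net odd number of times to at most that one flag qubit; the data claim is then the one-line orientation fact that only $Z$ errors propagate from meta-flags to flags, and $Z$ errors do not propagate from flag qubits to the data. You instead lift the stabilizer-deletion argument of the weight-one propagation lemma in Section~\ref{sec:compression} one level up: the fault propagates to $h_i' h_{i+1}\cdots h_m$, full stabilizers multiply away for free, and the residue is the single deleted term. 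Your pair-cancellation of the $h_j$ is exactly the algebraic form of the paper's evenness-plus-grouping count, so the underlying mechanism coincides; what your version buys is uniformity (one deletion argument serving both levels), at the cost of having to certify that each primitive meta-gadget measures a data-trivial space-time stabilizer --- which it does, and in fact more strongly than you suppose: each $h_j$ is the same flag-qubit Pauli at two times bracketing one data CNOT whose flag end is oriented so that the Pauli commutes through it, so the cancellation on the data is exact, not merely up to a code stabilizer.

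Two points need tightening. First, you fix ``a single flag qubit'' and index the $h_j$ ``along the flag qubit,'' but the content of the lemma lives precisely in the fact that one meta-flag qubit couples, through the parity-check products, to locations on \emph{many} flag qubits; if all $h_j$ lived on one flag qubit the first claim would be vacuous. Fortunately your deletion argument goes through verbatim with the $h_j$ on distinct flag qubits $f_j$ --- the residue is still one deleted term on the single $f_i$ whose bracket contains the fault --- so this is a framing fix, not a mathematical one. Second, and more substantively, your data claim argues from the written data restriction: $h_i'$ and $h_i$ agree on the data, so the propagated operator is data-trivial. But triviality of written support does not control subsequent propagation --- the residual weight-one error on $f_i$ would still spread through $f_i$'s later data CNOTs if it were of the propagating species. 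You must close the loop with the orientation fact you invoked at the outset but never applied to the residue: the meta-CNOTs deposit only the harmless species on the flag ($Z$ on a flag that is the control of its data CNOTs, dually $X$ on a target-type flag), so the residue never leaves the flag qubit and surfaces only as a flag-measurement error. That is exactly the closing sentence of the paper's proof; with it added, your argument is complete.
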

    \begin{proof}
        Every meta-flag qubit touches any flag qubit an even number of times, by construction (each primitive gadget is composed of $2$ CNOTs and multiplying any number of such gadgets will produce a gadget of even weight), meaning that CNOTs from flag qubit $i$ occur before CNOTs from flag qubit $i + k$, $k> 0$, no error can propagate an odd number of times to both $i$ and $i + k$. Only $Z$ errors propagate from meta-flags to flags, and $Z$ errors do not propagate to the data from flag qubits.
    \end{proof}
    
    This can easily be seen to ensure that any two errors $e$ and $e'$ differ by a space-time correction of weight at most $\min(|e|, |e'|)$. Of course, this is not necessarily sufficient for fault-tolerance, since even an error of low weight as it occurs can propagate to an error of high weight. But we have already designed gadgets that ensure any low-weight error on the data is flagged sufficiently for fault-tolerance. Therefore, by adding meta-flags, we convert low-weight errors on the flag gadgets into low-weight errors on the data, which are caught by some set of flag gadgets by design.

    We would like to emphasize the distinction to our previous work. 
    In Table~\ref{table:comparison} we outline the correspondence between the gadgets we use in both works. In particular, we note that we end up with one less level of protection than in our previous work, but still claim the same level of fault-tolerance (this work capturing a general class of circuits that can be specialized to syndrome extraction to produce a better result). The difference is because in this work we have considered the construction of flag gadgets by multiplying primitive parity checks and compressing a syndrome. In our previous work, the difference in construction did not make it provable that errors on flags would not propagate to more than one of the syndrome extraction qubits, meaning that we ran the risk of low-weight flag errors producing effective high-weight syndrome errors, leading to a non-fault-tolerant error correction. In this work, due to Lemma~\ref{lem:meta-prop} we can see that this is not the case.
    \begin{table}[h!]
        \centering
        \begin{tabular}{l|r}
            \hline
            Previous work \cite{anker2024flag} & This work \\
            \hline
            Stabilizer measurement & Product of the gadgets\\
                                     & defined in Figure~\ref{fig:main_gadget}\\\\
            Flags protecting multiple&Meta-flags\\
            stabilizer measurements  &(this section) \\\\
            Meta-flags & Unnecessary
        \end{tabular}
        \caption{The correspondence between gadgets used in this work and our previous work \cite{anker2024flag}.}
        \label{table:comparison}
    \end{table}

    \subsection{Boundary Regions}\label{sec:boundary_gadget}
    Finally, we need to handle errors that occur on the data directly before the application of any flags. These errors are undetectable to our current set of flags by definition, but can propagate to high weight errors on the data through the Clifford circuit. In this section we define a procedure to ensure that errors before the Clifford circuit are either corrected or flagged. In brief, we insert error correction inside a set of flags on the boundary as illustrated in Figure~\ref{fig:ec_gadget}.
    \begin{figure}[h]
        \includegraphics[width=\columnwidth]{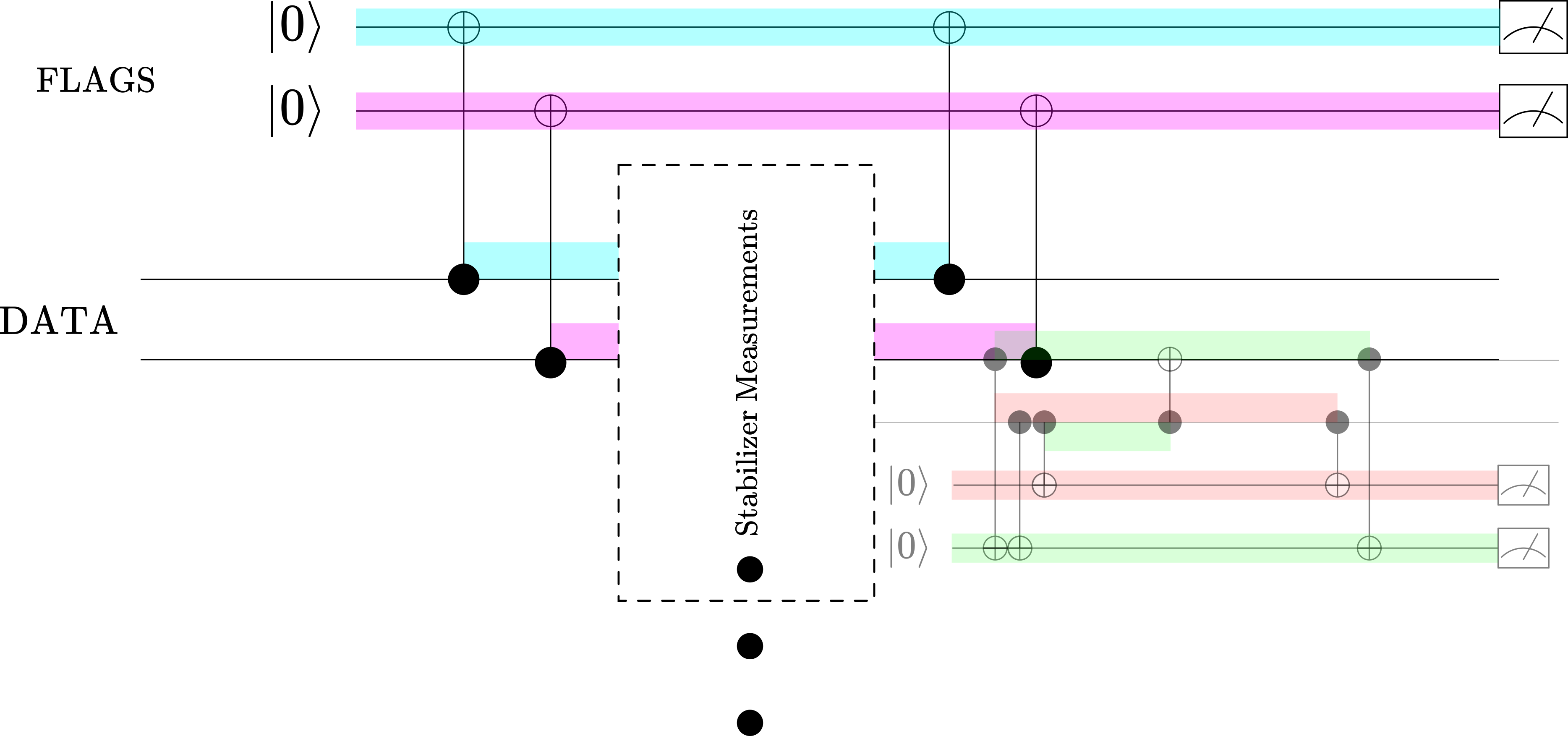}
        \caption{\label{fig:ec_gadget} A diagrammatic representation of the way in which we ensure errors before flags do not propagate to high-weight errors. Flags for $Z$-type errors are omitted but are applied in a corresponding manner -- CNOTs from flags prepared in the $\ket +$ state to the data. Partially transparent we have an example of a gadget protecting one of the data CNOTs and how it interacts with the gadget covering EC.}
    \end{figure}
    This allows us to prove similar results about detectability as previously, and then to apply the same machinery as in Section~\ref{sec:compression}, only with slightly more locations.
    \begin{lemma}
        Building a gadget as in Figure~\ref{fig:ec_gadget} with a set of stabilizer measurements defining a measurement scheme which is fault tolerant to distance $t$ means that any error of weight at most $t$ before stabilizer measurements is detected, as well as any error in the colored regions.
    \end{lemma}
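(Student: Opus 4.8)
The plan is to argue by a case analysis on where the error is supported, splitting the space-time volume of the boundary gadget into two disjoint pieces: the incoming data region that precedes the stabilizer measurements, and the colored regions, which comprise the interior of the EC subcircuit together with the flag gadgets wrapping it. The claim then decomposes into one statement per piece, and the real work is in checking that the two detection mechanisms compose without leaving a gap at their interface.

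First I would dispose of errors supported entirely in the colored regions. These are exactly the bulk and boundary regions of the flag gadgets placed around the CNOTs of the EC subcircuit, so this situation is identical to the one already handled in Sections~\ref{sec:main_gadget}--\ref{sec:meta}, only with the EC subcircuit playing the role of the Clifford network. Invoking Corollary~\ref{cor:bulk_coverage_xz}, together with the connecting lemma (Lemma~\ref{lem:rep}) and the meta-flag analysis, applied to the EC subcircuit, any Pauli error in these regions either propagates to the identity on the data or produces a nontrivial flag pattern; in either case it is detected. No genuinely new argument is needed here beyond the observation that the earlier machinery is circuit-agnostic and so transfers verbatim to the EC subcircuit.

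Second I would handle errors on the data occurring before the stabilizer measurements. Here I would appeal directly to the hypothesis that the stabilizer measurements form a measurement scheme fault tolerant to distance $t$: by the definition of fault tolerance specialized to measurement (Definition~\ref{def:FT}), any such incoming data error of weight at most $t$ yields a nontrivial syndrome and is therefore detected. This step is essentially immediate once the hypothesis is unpacked.

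The hard part will be the interface: ruling out an error that straddles the two regions, or one originating on the flag qubits wrapping the EC, from escaping both mechanisms. To close this gap I would use Lemma~\ref{lem:meta-prop}, which guarantees that any error on a flag qubit propagates to at most a weight-one error on the data and not at all across to the opposite basis. This reduces every flag-sourced error to a low-weight data error, which -- if it lands before the stabilizer measurements -- is detected by the distance-$t$ EC, and -- if it lands in a colored region -- is flagged by the wrapping gadget. Verifying that this reduction leaves no unhandled straddling case, so that every Pauli error falls into exactly one of the two detected classes, is the crux of the argument; the remaining bookkeeping then follows the pattern already established for the bulk construction.
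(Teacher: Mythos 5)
Your overall decomposition (incoming data errors versus errors in the colored regions) matches the paper's, and your second case --- weight-$\leq t$ data errors before the measurements are detected because the EC scheme is by hypothesis fault tolerant to distance $t$ --- is exactly the paper's argument. But your treatment of the colored regions rests on a misreading of the construction. The flags in Figure~\ref{fig:ec_gadget} \emph{wrap the EC block as a whole}; the EC subcircuit is treated as a black box whose internal fault tolerance is assumed, not re-derived. There are no per-CNOT gadgets inside the EC to which Corollary~\ref{cor:bulk_coverage_xz}, Lemma~\ref{lem:rep}, and the meta-flag analysis could be applied, so that machinery cannot ``transfer verbatim'' here. The correct (and much shorter) argument is the paper's: nontrivial errors in the colored regions flip the parity seen by at least one wrapping flag by design, i.e., the same parity-check reasoning as Lemma~\ref{lem:main_gadget_lemma}.

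You also omit the one genuinely subtle step the paper's proof does contain: since the error-correction procedure leaves stabilizers and logical operators unaffected, the wrapping flags measure only (space-time) stabilizers, so their outcomes depend on errors alone and not on the logical state. Without this, ``detection'' is not even well defined --- the wrapping flags could in principle have logical-state-dependent outcomes or disturb the codestate. Finally, the interface analysis you call the crux is outside this lemma's scope: flag-sourced errors and straddling errors are handled by the surrounding machinery of Sections~\ref{sec:compression} and~\ref{sec:meta}, not here (a Pauli error straddling the two regions factors into components, each detected by its own mechanism). And note that Lemma~\ref{lem:meta-prop} does not say what you attribute to it --- it concerns meta-flag errors propagating to at most one \emph{flag} qubit; the claim that a weight-one flag error propagates to at most a weight-one \emph{data} error is the separate, unnumbered lemma in Section~\ref{sec:compression}.
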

    \begin{proof}
        This follows directly, in that a fault-tolerant error correction scheme necessarily detects data errors of up to weight-$t$, and the non-trivial errors in the colored regions are caught by the flags by design. Logical operators and stabilizers are unaffected by the error correction procedure, and hence the flag gadgets still only detect errors, not the logical state of the code.
    \end{proof}
    Note that both the flags and the error correction may detect some errors, and that the flags may detect errors on ancilla qubits used for the stabilizer measurements. As long as the stabilizer measurements are done fault-tolerantly, this does not impact our ability to follow the compression procedure previously described. It is also worth clarifying that the flag qubit surrounding the error correction protocol are not there to ensure the fault-tolerance of the error-correction procedure - rather they are there to give the flag gadgets on the circuit something to `hook into', or to ensure that the flags have a baseline to flag against. That is to say, after adding these flag gadgets to our construction, we follow the same prescription (on their right sides) for overlapping adjacent flag gadgets as we did for all other flag gadgets. This is illustrated partially transparent in Figure~\ref{fig:ec_gadget}.

    \begin{thm}
        Given a Clifford circuit composed of $n = O(m^2)$ CNOTs and up to $2(n + 1)$ single-qubit Cliffords which acts upon the physical qubits of an $[[m, \cdot, 2t' + 1]]$ code, it can be made fault tolerant to distance $2t + 1 $ where $t \leq t'$ using $O(t^2\log n + t^2\log(n t^2 \log n))$ ancilla qubits, $O(nt^2\log n + nt^2\log(n t^2 \log n))$ additional CNOTs and one application of fault-tolerant error correction.
    \end{thm}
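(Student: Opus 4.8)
The plan is to treat this as a capstone assembly theorem: every ingredient needed has already been established as a separate lemma, so the work is to chain them together in the correct order, verify that their hypotheses remain simultaneously satisfiable under a single fault budget, and then sum the resource costs. I would organize the argument as (i) a correctness argument that the assembled construction meets Definition~\ref{def:FT} to distance $2t+1$, followed by (ii) a counting argument for the ancilla and CNOT overhead.

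For correctness I would proceed gadget-by-gadget in the same order the paper builds them. First, model the circuit as $n$ CNOTs each trailed by single-qubit Cliffords, absorbing the latter via the substitution $D\mapsto D' = (C_1\otimes C_2)^\dagger D (C_1\otimes C_2)$ of Section~\ref{sec:1qclif}, so that the space-time stabilizer picture of the bare CNOT network is preserved. Next, attach one full ($X$- and $Z$-catching) gadget per CNOT and overlap neighbors; by Corollary~\ref{cor:bulk_coverage_xz} and the connecting lemma of Section~\ref{sec:connecting}, any weight-$q$ Pauli data error becomes a weight-$q$ bulk error plus at most $q$ measurement errors in each basis. Repeating $2t$ times and invoking Lemma~\ref{lem:rep}, every propagating data error produces a nontrivial flag pattern as long as $q+m<2t$, which is exactly the regime $r+s\le t$ of the fault-tolerance definition. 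I would then compress the flag measurements with a distance-$(2t+1)$ BCH parity-check matrix $P$ as in Section~\ref{sec:compression}, noting that the weak hypothesis of Lemma~\ref{lem:rep} — existence of one repetition free of both boundary and measurement errors — survives compression even though single boundary errors now map to several effective measurement errors.

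The two remaining fault locations are errors on the (now highly connected) flag qubits and data errors occurring before any flag couples in. For the former I would invoke Lemma~\ref{lem:meta-prop}: after meta-flagging, a weight-one flag error propagates to at most one data location and to at most one flag qubit in the opposite basis, so such errors are caught by the already-present data flags and cannot defeat the compression code. For the latter I would enclose a single round of fault-tolerant error correction inside boundary flags as in Section~\ref{sec:boundary_gadget}, which detects all pre-circuit errors of weight up to $t$ while leaving logical and stabilizer data untouched. Because the $X$ and $Z$ (hence $Y$) constructions act independently and the meta-flags of one basis do not couple to the data, these pieces compose, and the classically tracked Pauli frame (Gottesman--Knill) lets every flagged error be corrected before the next non-Clifford gate, establishing both cases of Definition~\ref{def:FT}.

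The counting then splits along the two flag levels. The data flags use $2t$ repetitions of a distance-$(2t+1)$ BCH code on $O(n)$ locations, contributing $O(t^2\log n)$ ancillae and $O(nt^2\log n)$ CNOTs; the meta-flags apply the same compression to the $O(nt^2\log n)$ locations living on the flag qubits, contributing $O(t^2\log(nt^2\log n))$ ancillae and the matching CNOT term, while the boundary adds one fault-tolerant error correction. Summing yields the stated bounds. I expect the main obstacle to be bookkeeping the fault budget consistently across heterogeneous error sources — data-bulk, data-boundary, pre-circuit, measurement, flag-qubit, and meta-flag errors — so that no single physical fault is charged against the distance twice and so that the inequality $q+m<2t$ (together with $t\le t'$, which lets the boundary error correction itself be made distance-$t$) certifies correction in all of them at once; confirming that the meta-flag code and the data-flag code do not interfere, rather than any individual lemma, is where the real care is needed.
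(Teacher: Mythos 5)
Your proposal is correct and follows essentially the same route as the paper: the paper's own proof is a one-line invocation of the Section~\ref{sec:compression} compression protocol (BCH) applied to the $4$ flag measurements per data CNOT plus the $2m$ measurements flagging the boundary error correction, with the correctness and the two-term resource count ($t^2\log n$ for flags, $t^2\log(nt^2\log n)$ for meta-flags) delegated to exactly the lemmas you chain together. The only detail you gloss over is explicitly including the $2m$ boundary-flag measurements in the location count --- which is where the hypothesis $n = O(m^2)$ enters --- but this is absorbed into the same asymptotics and does not change the argument.
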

    \begin{proof}
        We simply apply our compression protocol from Section~\ref{sec:compression} using the classical BCH code to the $4$ flag qubit measurements for each of the $n$ data CNOTs plus the $2m$ measurements used to flag the error correction.
    \end{proof}
    The first term in the sum counts the number of flags qubits, while the second is the number of meta-flag qubits. The term inside the log is just the number of flag qubits multiplied by an upper bound on the number of locations a single flag qubit can add.

    In fact, the stabilizer-measurement gadgets can be understood as another type of primitive gadget to multiply together to form the parity check flag gadgets. 
    This unifies the space-like and the space-time stabilizer measurements. 
    The locations on the ancilla qubits used for (code, not space-time) stabilizer measurements are then just locations that we flag with the meta-flags from Section~\ref{sec:meta}. 
    The only subtlety is that repeated rounds of syndrome extraction are (usually) necessary for fault tolerance, so if our parity-check matrix tells us to take the product of two measurements in different rounds this must be done with two measurements multiplied classically, instead of one measurement of the product.

    The BCH code suggested in Section~\ref{sec:compression} uses the fewest number of additional ancillae for this construction, but is decidedly sub-optimal in terms of the number of additional CNOTs used. 
    Each row of the BCH parity-check matrix has Hamming-weight $O(n)$ where $n$ is the number of two-qubit gates in the circuit we wish to flag. 
    Therefore, each flag-qubit uses a linear number of CNOTs. 
    This is compounded by the fact that the number of CNOTs used by the first level flags is the number of locations for the second level flags. 
    While the number of CNOTs remains linear in the original number, the constant factors can be quite high.

    This problem is alleviated, at the cost of additional ancillae for some regimes, by using a good (classical) LDPC code for both layers of flags. This leads us to our next theorem.
    \begin{thm}
        Given a Clifford circuit composed of $n$ CNOTs and up to $2(n + 1)$ single-qubit Cliffords which acts upon the $m$ physical qubits of a code of distance $2t' + 1$, provided $n = \Theta(t)$, it can be made fault tolerant to distance $2t + 1 $ where $t \leq t'$ using $O(n)$ ancilla qubits, $O(n)$ additional CNOTs and one front end application of fault-tolerant error correction.
    \end{thm}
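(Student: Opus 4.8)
The plan is to reuse the entire two-layer construction behind the previous theorem and to change only the classical code that drives the compression of Section~\ref{sec:compression}, replacing the BCH code by an asymptotically good classical LDPC code, i.e.\ one with constant rate, relative distance $\delta$ bounded away from zero, and \emph{constant} row weight $w$ in its parity-check matrix $P$. Every structural fact we rely on---that each propagating bulk error is flagged (Lemma~\ref{lem:main_gadget_lemma}), that boundary errors are equivalent to bulk errors plus a bounded number of effective measurement errors, that meta-flag errors localise to at most one flag qubit and never reach the data (Lemma~\ref{lem:meta-prop}), and that the gadget of Section~\ref{sec:boundary_gadget} supplies the front-end fault-tolerant error correction---is a statement about the primitive-gadget product structure and is insensitive to which code $P$ we choose. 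So the content of the proof is a resource count together with a check that a good code still certifies fault tolerance in the stated regime.

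First I would count resources. For the first (flag) layer there are $N_1 = O(n)$ space-time locations; a good LDPC code has $\Theta(N_1)=\Theta(n)$ parity checks, so the construction uses $\Theta(n)$ flag qubits, and because each row of $P$ has constant weight $w$, each flag qubit is built from only $O(1)$ primitive gadgets and hence $O(1)$ CNOTs, giving $O(n)$ first-layer CNOTs. This constant row weight is precisely what cures the defect of the BCH construction, whose rows had weight $\Theta(n)$. Since each of the $\Theta(n)$ flag qubits now carries only $O(1)$ CNOTs, the number of meta-flag locations in Section~\ref{sec:meta} is again $\Theta(n)$, so applying a second good LDPC code yields $\Theta(n)$ meta-flag qubits and $O(n)$ further CNOTs. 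Summing the two layers and adding the single front-end application of fault-tolerant error correction gives the claimed $O(n)$ ancillae and $O(n)$ additional CNOTs.

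Next I would argue fault tolerance to distance $2t+1$. The key point is that the hypothesis $n=\Theta(t)$ is exactly the regime in which a single length-$\Theta(n)$ good LDPC code has distance $\delta N_1=\Theta(n)=\Theta(t)$, which I would fix (by choosing the constants relating $n$ and $t$, and the code) to exceed $4t$. Using $|f_i|\le 2s$ together with the flagging guarantee that any nontrivially propagating error produces a nonzero raw flag pattern $f=He$, this distance lets one recover the relevant flag information from a \emph{single} application of the compressed measurement rather than the $\Theta(t)$ repetitions used with BCH---and it is this elimination of the repetition factor that keeps the ancilla count at $O(n)$. Measurement and effective measurement errors are then folded in by the same data-syndrome reasoning as before: a bulk-or-boundary error together with its induced measurement errors forms a joint error that the good code's distance can separate, so the argument of Lemma~\ref{lem:rep} goes through with the repetition count reduced to $O(1)$.

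The main obstacle is this last step---showing that one (or $O(1)$) shots of the good code suffice in place of the $\Theta(t)$ repetitions while still tolerating measurement error. Concretely one must verify the combined data-syndrome decoding: that a small-weight, nontrivially propagating data error cannot be masked by a comparable number of measurement errors. I expect to handle this via the soundness/expansion property of a good LDPC code, namely that a small flag pattern produces a proportionally large compressed syndrome, $|Pf|\gtrsim|f|$, combined with the bound $|f|\le 2s$ and the fault-tolerance convention of Definition~\ref{def:FT} that a missed weight-$s$ error need only leave a residual of weight $O(s)$. The delicate book-keeping is confirming that reducing the repetitions does not open a gap in which a genuine data fault hides beneath measurement faults of total weight below the code's distance; the hypothesis $n=\Theta(t)$, by forcing $\delta N_1=\Theta(t)$, is what closes this gap.
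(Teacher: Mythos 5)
Your proposal takes essentially the same route as the paper: the paper's proof is, in substance, exactly your first two paragraphs---replace the BCH code with an asymptotically good classical LDPC code of parameters $[n, O(n), O(n)]$ so that constant parity-check row weight gives each flag qubit $O(1)$ CNOTs, which keeps the meta-flag location count at $\Theta(n)$, and applying a second good LDPC code at the meta-flag layer yields the claimed $O(n)$ ancillae and $O(n)$ CNOTs. Your third and fourth paragraphs, on why the condition $n = \Theta(t)$ and the linear distance let one drop the $\Theta(t)$ spatial repetitions while still tolerating measurement errors, concern a point the paper's one-line proof leaves entirely implicit (it only performs the resource count), so your sketch is if anything more careful than the published argument, even though that data-syndrome soundness step remains a sketch rather than a full verification.
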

    \begin{proof}
        We simply apply our construction using a good (classical) LDPC code with parameters $[[n, O(n), O(n)]]$. The weight of each row of the parity-check matrix is constant, meaning the number of CNOTs connecting to each flag qubit is constant. This in turn means that the number of locations for the meta-flags is proportional to the number of flag qubits, which is clearly $O(n)$. Applying a good LDPC code to the flags to meta-flag them follows the same resource analysis. Therefore each flag or meta-flag qubit has only $O(1)$ CNOTs connected to it.
    \end{proof}
    This approach offers a middle ground between the $4n$ ancilla qubits and $10n$ CNOT gates required without compression, and the $O(t^2 \log(n \log n))$ ancilla qubits with $O(nt^2\log n + nt^2\log(n t^2 \log n))$ CNOT gates achieved through BCH code compression.
    \section{Application to the $[[15, 1, 3]]$ Quantum Reed-Muller Code}\label{sec:logical_H}
    We now demonstrate a justification of our title, universal fault tolerance with non-transversal Clifford gates. First, we review the $[[15, 1, 3]]$ Quantum Reed-Muller code \cite{knill1996reedmuller}, also referred to as the 15-qubit tetrahedral or 3D color code. This is a CSS code obtained using the classical punctured Reed-Muller code. For our purposes, the most important aspect of this code is that it is the smallest error-correcting code with transversal $T$ \cite{koutsioumpas2022smallestT}. Since it is a CSS code, CNOT is also transversal -- therefore, to obtain the universal gate set of $\{H, T, \text{CNOT}\}$ fault tolerantly, it is enough to implement $H$ fault tolerantly. Although, by the Eastin-Knill theorem~\cite{eastin2009transversal}, we know that such a gate cannot be implemented transversally, by our construction this is unnecessary.

    Before we demonstrate our construction applied to this code, we first explicitly define the stabilizers and one implementation of logical $H$. The code is defined on a tetrahedron, where each cell (of any dimension) corresponds to a physical qubit -- that is, each vertex, edge, and face hosts a physical qubit, as does as the entire volume. Drawing new edges between any qubit on a $j$-cell and any qubit on a neighboring $(j+1)$-cell, the volume of the tetrahedron is then divided into four identical polytopes, each with 8 vertices and 6 faces. Each polytope defines a weight-$8$ stabilizer of $X$ type, and each face of any polytope defines a weight-$4$ stabilizer of $Z$ type. Logical $Z$ is $ZZZ$ on any of the edges of the tetrahedron, while logical $X$ is $X^{\otimes 7}$ on any face of the tetrahedron. This is summarized in Figure~\ref{fig:tetrahedral_code}.
    \begin{figure}
        \includegraphics[width=\columnwidth]{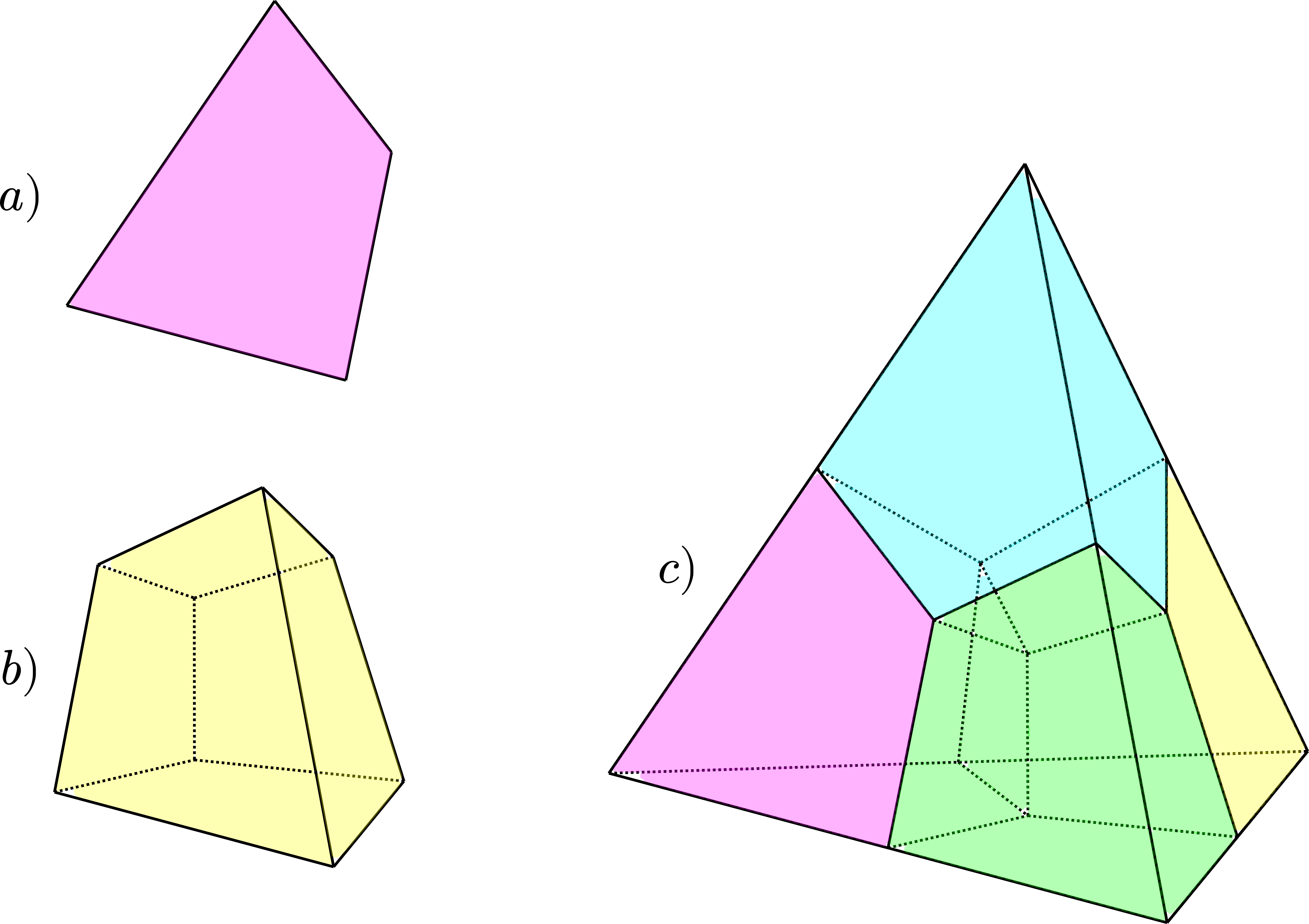}
        \caption{\label{fig:tetrahedral_code} a) a $Z$ stabilizer, b) an $X$ stabilizer, c) how they fit together to form the whole code. Qubits are on each of the 15 intersections. We have taken the figure unmodified from previous work~\cite{anker2025compressed}.}
    \end{figure}
    We number the physical qubits from 1 to 15 top-to-bottom (physically on the page), left-to-right.

    We can now define logical $H$. Calling the stabilizer group by $G$, and logical $X, Z$ respectively by $\bar X, \bar Z$, we wish to find an operation which satisfies $\langle G, \bar Z\rangle \mapsto \langle G, \bar X\rangle$ and $\langle G, \bar X\rangle \mapsto \langle G, \bar Z\rangle$, where $\langle L \rangle$ is the group generated by $L$. One implementation of the logical operation defined by this relation is given in Figure~\ref{fig:logical_H}.
    \begin{figure*}
        \includegraphics[width=\textwidth]{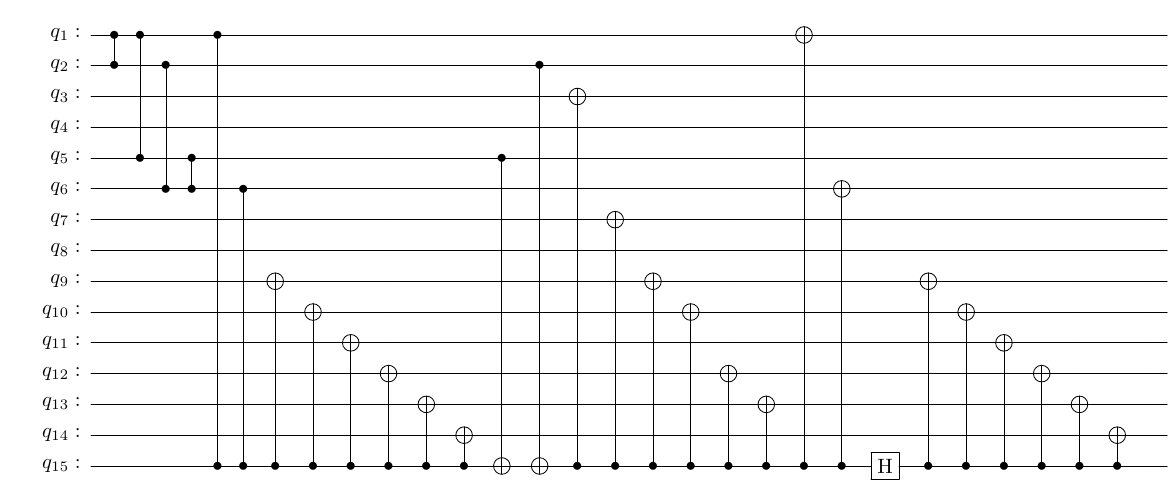}
        \caption{\label{fig:logical_H} One implementation of $\bar H$ on the $[[15, 1, 3]]$ tetrahedral code using only one- and two-qubit physical Clifford gates.}
    \end{figure*}
    
    Note that we do not claim, nor particularly believe, that this is the most efficient implementation of logical $H$. Notably, every stabilizer commutes with the circuit; i.e., this implementation of $\bar H$ centralizes $G$, rather than merely normalizing it as would be sufficient. This circuit was constructed by generating two subcircuits, $\text{PREP}_1$ and $\text{PREP}_2$, which respectively prepare the states stabilized by each generator along with $X$ or $Z$ (i.e., the logical states $\ket{\bar 0}$ and $\ket{\bar +}$) from the initial state $\ket{0}^{15}$ while simultaneously preserving the correct action of the conjugate pairs -- destabilizers associated with each $g \in G$ as well as $\bar Z$ and $\bar X$, respectively. These subcircuits were generated using Stim~\cite{gidney2021stim}. By applying $\text{PREP}_1^{-1} \text{PREP}_2$, we obtain one implementation of $\bar H$ which is essentially decoding followed by re-encoding. A naive composition of the circuits produced a 93-gate implementation, which we heuristically optimized with PyZX~\cite{kissinger2020pyzx} to reduce to 28 gates. Since our construction uses only CNOT and single-qubit gates, this becomes 36 gates after converting CZs into CNOTs conjugated by $H$, but the CNOT count (the relevant number for our construction) remains at 28.

    In some sense \emph{because} of the fact that logical operators, in particular $\bar H$, are only defined upon logical degrees of freedom, the fault-tolerance or lack-thereof of this construction is a degree of freedom we can modify. Applying our construction to this circuit is enough to ensure it is fault tolerant, and hence enough for universal logical computation. 
    A diagram of the logical computation procedure we envision is outlined in Figure~\ref{fig:computation}.
    \begin{figure}[b]
        \centering
        \includegraphics[width=\linewidth]{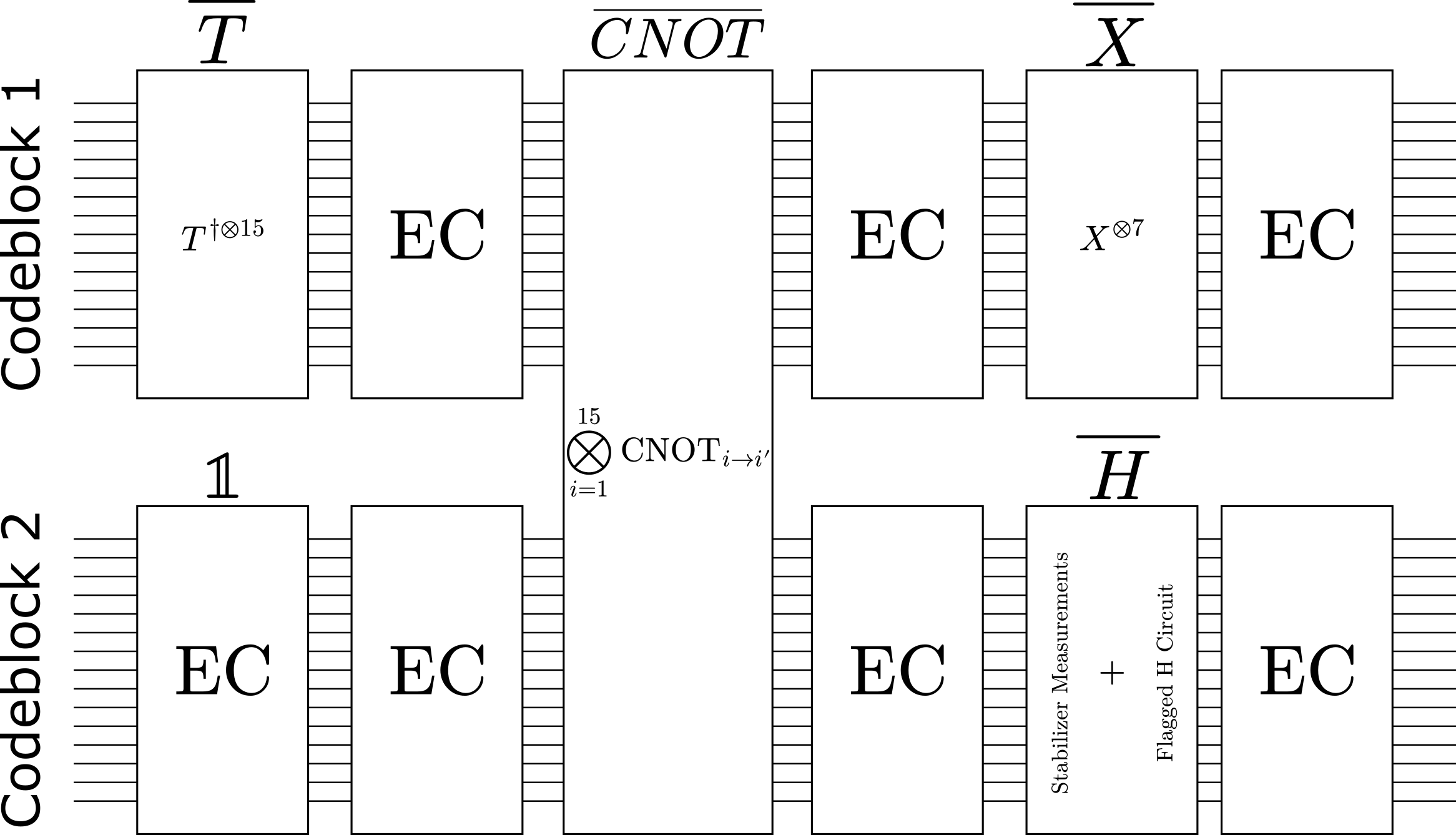}
        \caption{\label{fig:computation} A diagram of the model of computation we are proposing with the example logical circuit $T_1\text{CNOT}_{1 \rightarrow 2} X_1 H_2$. The general procedure is to alternate logical operators and fault-tolerant error correction. Note that the identity operator is usually best thought of as a round of error correction. Boxes represent logical operators, with labels for the logical operator above and the implementation in mind inset. Logical $T$ is implemented by the transversal application of $T^\dagger$, logical CNOT by the tranversal application of CNOT between pairs of qubits with the same index in each block, logical $X$ by a tensor product of $X$ operators on any of the faces of the tetrahedron, and logical $H$ by the gadget we have outlined in Section~\ref{sec:flagged_cliffords}.}
    \end{figure}
    
    We discuss the resources used by this construction in Section~\ref{sec:logical_H_resources} as well as noting some simple optimizations to reduce the resources used. We also discuss another way to use the flags gadget introduced in Section~\ref{sec:main_gadget} to reduce the logical error rate of this circuit in Appendix~\ref{sec:random_search}. This is an example of reducing the level of fault tolerance to lie somewhere between full fault tolerance and the raw circuit, while correspondingly reducing the resources needed to again lie somewhere in between. First we consider other, more standard, methods to achieve universal fault tolerance and for comparison against our method.

    \subsection{Universality based on Code Switching}
    Although the 15-qubit quantum Reed-Muller (QRM) code does not admit transversal $\bar H$, the Steane code~\cite{steane1996steanecode} does. This fact is significant because there exist relatively simple and low-overhead fault-tolerant methods~\cite{paetznick2013switch, butt2024switch, kubica2015color, bombin2015color,poulin2014color, heussen2025efficientfault} to code switch between the QRM code and the Steane code. In fact, the 15-qubit QRM and the 7-qubit Steane code are two examples of an infinite hierarchy of quantum Reed-Muller codes which at the $m$-th level admit transversal implementations of (all) gates from the $m$-th level of the Clifford hierarchy, and previous work has explored code-switching between any two levels.

    The resource estimates we desire, the number of measurements and CNOTs used, are not directly available from other work optimized for other objectives.
    Therefore in this section we follow previous work by Poulin et al.~\cite{poulin2014color} and specialize it to the case we are interested in, switching between the 7-qubit and 15-qubit QRM codes. 
    Following Poulin, we first describe how to transform the logical state $\ket{\bar \psi}_{7}$ to the logical state $\ket{\bar \psi}_{15}$, where the subscript denotes the code the state is encoded into.
    
    First, we prepare the ancilla state $\ket\Phi \propto \ket{\bar 0}_7 \ket 0 + \ket{\bar 1}_7\ket 1$, a maximally entangled state between the codestate and an additional single (physical) qubit ancilla. We then consider the stabilizers of the state $\ket{\bar \psi}_{7} \ket \Phi$. These stabilizers are generated by products of the following six forms of stabilizers:
        \begin{gather}
            g_z \otimes \bar I \otimes I\nonumber\\
            g_x \otimes \bar I \otimes I\nonumber\\
            \bar I \otimes g_z \otimes I\nonumber\\
            \bar I \otimes g_x \otimes I\nonumber\\
            \bar I \otimes\bar Z\otimes Z\nonumber\\
            \bar I \otimes\bar X\otimes X
        \end{gather}
    where $g_x$ and $g_z$ are $X$ and $Z$ stabilizers of one of the $7$-qubit codes and $\bar X, \bar Z$ are the logical operators.

    We can trivially rewrite the stabilizer generators in the following form:
    \begin{gather}
        g_z^i \otimes g_z^i \otimes I\nonumber\\
        g_x^j \otimes g_x^j \otimes I\nonumber\\
        \bar I \otimes\bar Z\otimes Z\nonumber\\
        \bar I \otimes\bar X\otimes X\nonumber\\
        g_z^k \otimes \bar I \otimes I\nonumber\\
        g_x^\ell \otimes \bar I \otimes I
\refstepcounter{equation}\tag{\theequation}\label{eqn:stabilizers}
    \end{gather}
    for $i, j, k, \ell \in \{1, 2, 3\}$. By $g_z^i \otimes g_z^i$ we mean the same stabilizer generator from the two $7$-qubit codes multiplied by one another (these will be seen to define the polytope stabilizers of the $15$ qubit code), and similarly for $g_x^j$. 
    We now can recognize the first five rows as also being stabilizers of the $15$-qubit code. The remaining weight-$4$ $x$ stabilizers are not stabilizers of the $15$ qubit code. However, they are also not required to correct errors on the $\ket{\bar \psi}_7 \ket \Phi$ state -- therefore, as shown by Poulin et al. it is enough to measure the stabilizers given, correct any errors using the syndrome defined by the measurement of the first five lines, then fix the values of the remaining stabilizers as $+1$ using their associated pure errors (or destabilizers) as a correction (gauge fixing from a 15-qubit subsystem code to the tetrahedral code we are interested in).

    In summary, to convert from $\ket{\psi}_7$ to $\ket{\psi}_{15}$ fault-tolerantly it is enough to fault-tolerantly
    \begin{enumerate}
        \item prepare the ancilla state $\ket \Phi$
        \item measure the 14 stabilizers defined in equation~\ref{eqn:stabilizers} 
        \item error correct based on the first 11 measurements, using pure errors to set the measurement results of the remaining stabilizers.
    \end{enumerate}

    The method to convert from the $15$-qubit to the $7$-qubit code is similarly simple: measure the stabilizer generators of the $15$ qubit code, correct based on the first $11$ bits, and set the last bits to $+1$ using their associated pure errors, at which point the 8 additional physical qubits, now unentangled from the $7$ qubits constituting the code, can be thrown away. 

    Therefore, one method for universality simply performs a given gate $\mathcal L \in \{\text{Clifford} + T\}$ transversally if it is transversal in the current code and switches to the complementary code, where the gate \emph{is} guaranteed to be transversal, if not.

    The cost of this can be estimated easily enough. The cost of switching from $7$ to $15$ is given by the cost to prepare $\ket \Phi$, plus the cost to do a round of fault-tolerant error correction. The state $\ket \Phi$ can be prepared by starting from the $\ket 0^8$ state, which is already stabilized by the $Z$-type stabilizers, then fault-tolerantly measuring the $3$ generating $X$-type stabilizers of the Steane code along with the $X^8$ operator. Assuming Shor-style error correction, 
    this requires $d^2 = 9$ rounds, for a total of $36$ measurements, $27$ of which have weight-$4$ and $9$ of which have weight-$8$. Assuming that the measurements are done using Shor-style syndrome extraction as well, this requires producing $27$ four-qubit cat states, as well as $9$ eight-qubit cat states, and using $27\times 4 + 9\times 8 = 180$ CNOTs, measurements, and entangled qubits to extract the syndromes.

    The cost of preparing the cat states can also be estimated, but is usually done via post-selection and hence is a error-rate dependent cost. 
    For the fault-tolerant-to-distance-3 preparation of four- or eight-qubit cat states, however, it is possible to use flag qubits as in Figure~\ref{fig:flagged_cat_states}.
    \begin{figure*}
        \includegraphics[width=0.44\linewidth]{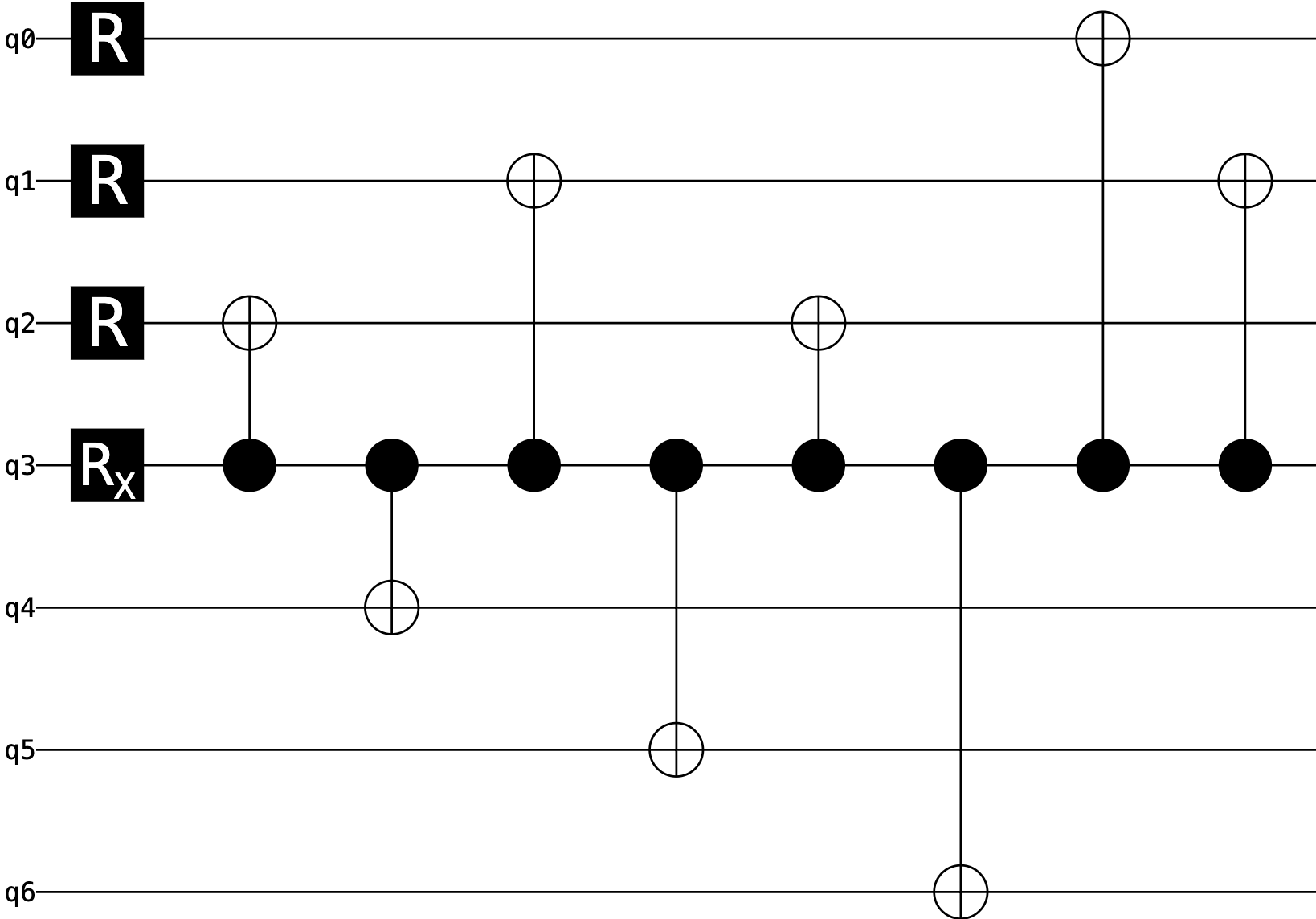}\hfill
        \includegraphics[width=0.54\linewidth]{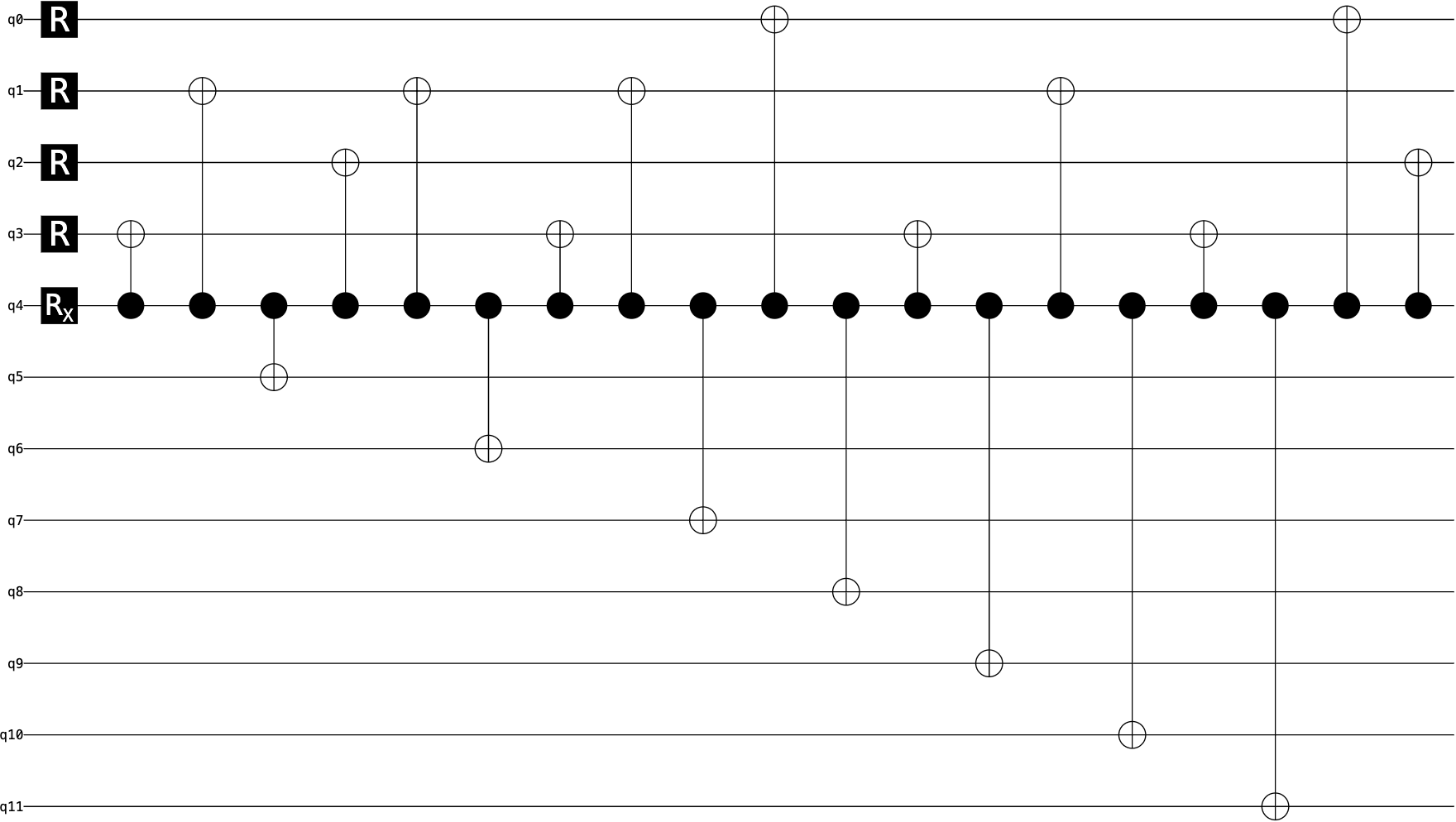}
        \caption{\label{fig:flagged_cat_states}One way to prepare the four- or eight-qubit cat states fault-tolerantly to distance-3. The notation $R$ means prepare in the $\ket 0$ state, $R_X$ prepare in the $\ket +$ state. After measuring the flag qubits prepared in the $Z$ basis, enough information is obtained to correct any data error arising from a weight-one error fault-tolerantly. Flagged state preparation is examined in greater detail in Section~\ref{sec:msp}.}
    \end{figure*}
    This uses three additional qubits and 5 CNOTs  for the preparation of each four-qubit cat state, and 4 additional qubits with $12$ CNOTs for each $8$ qubit state. 
    This adds a total of $27\times 3 + 4\times 9 = 116$ ancilla qubits and $243$ CNOTs.

    Therefore, the difference between doing logical identity, i.e. a fault-tolerant round of error correction, and a logical gate can be reasonably said to be $296$ measurements/non-resettable ancilla qubits and $426$ two-qubit gates using a strategy for universality based on code-switching. In Section~\ref{sec:logical_H_resources} (and Appendix~\ref{sec:random_search}), we will compare the resources required for a flag-qubit based approach for universality.
    
    \subsection{Resources used by our BCH construction}\label{sec:logical_H_resources}
    We now can analyze the resources used by applying the construction outlined in Section~\ref{sec:flagged_cliffords} to the logical $\bar H$ derived above. We use this as an opportunity to precisely specify how to flag an arbitrary circuit.
    Throughout this section, we validate the fault tolerance of each proposed construction numerically.
    This is done by brute-force enumeration of all weight-one errors in the union of the bulk region of the gadgets.
    Notably we do not consider errors that we conceptualize as caught by the error correction or stabilizer measurement  procedure outlined in Figure~\ref{fig:ec_gadget}.

    First, we count the number of CNOTs. In our case, there are $28$. We can then design the first-level flag gadgets. The fact that there are $28$ CNOTs means there are $28$ locations from which $X$ errors can propagate, and $28$ from which $Z$ errors can propagate. 
    In our definition of the main flag gadget in Section~\ref{sec:main_gadget}, each location requires deducing the measurement result of two measurements in the $X$ basis, and two in the $Z$ basis. This means that there are $28\times 2 = 56$ classical bits we will compress for each basis using a parity-check matrix, neglecting the flags protecting the error correction procedure outlined in Section~\ref{sec:boundary_gadget}. The fact that the quantum Reed-Muller code is distance $3$ means that any two errors of weight-$1$ must be distinguishable by their flag patterns. Since any weight-one error on the data can change the measurement result of two of the measurements in the main gadget, this means that the correction power of the code defined by the parity-check matrix defining the flags must be double that of the code, i.e. the distance must be $5$. This is enough to define the first-level flags -- we implement flags according to the parity-check matrix $H$ for the $n = 56, d = 5$ shortened BCH code. To distinguish any possible measurement errors from data errors, our construction prescribes repeating the measurements $3$ times in space, as in Appendix~\ref{sec:random_search}.

    This first level then uses $72$ ancilla qubits and 2566 two qubit gates. 
    We reduce this overhead in two relatively simple ways.
    First, we note that the reason we repeat the flags defined by the parity check matrix several times is to avoid confusing measurement errors, or effective measurement errors, with data errors.
    Since in the unrepeated and uncompressed case any error produces a syndrome of weight at most $2$, if the columns of each parity check matrix are of high enough weight, it seems likely that two repetitions rather than three is sufficient.
    This indeed is validated numerically, which reduces the qubit overhead to $48$ and the gate overhead to $1720$.

    The second method is to optimize the parity-check matrix chosen in order to use as few CNOTs as possible, while still defining the same code, and hence giving us the same correction power and level of fault tolerance. 
    For this example we use a rudimentary approach: just randomly sample equivalent parity-check matrices by taking a random invertible matrix $L$ and a random permutation matrix $R$, and checking the resources used by constructing flag gadgets according to $LHR$ numerically. 
    We optimize the parity-check matrix defining the $X$- and $Z$-type flags separately since they do not interact with one another.
    Using this approach, randomly sampling $1000$ representatives, we find that we can further reduce the number of gates used to $1428$.

    After the first level, we also have to ensure that low-weight (weight-one) errors on the flag qubits do not propagate to high-weight (greater than weight-one) errors on the data. We do this by adding flags to each set of flag gadgets. First we focus on $X$ type flags, i.e. flags which are prepared in the $\ket +$ basis, measured in the $X$ basis, and catch $Z$ errors on the flag qubits. Taking the circuit with $1428$ gates produced by using the parity-check matrix produced by sampling, we follow the same procedure. First, we introduce meta-flags to the $X$-type flags to ensure that $Z$ errors do not propagate to the data. To find what parity-check matrix we need to apply, we count the number of locations from which a $Z$ error can propagate to a nontrivial data error. This is upper bounded by the number of CNOTs connecting to flag qubits prepared in the $X$ basis -- for the sake of convenience we take the number of locations, and hence the block-length of the code defining the second level flags -- equal to this count. Counting the number of locations gives us $618$ for the $X$-type flags and $782$ for the $Z$-type flags (the difference between $618 + 782$ and $1428$ is the number of two qubit gates on the data qubits, which are not touched by the second level flags).

    Similarly to earlier observations about flagged syndrome extraction~\cite{prabhu2023cat}, ensuring that the parity-check matrix has no columns of weight-$1$ should allow us to omit spatial repetition of the parity-check matrices in space. 
    The shortened BCH code on $n = 618$ bits with distance $3$ uses $10$ parity-checks, as does the shortened BCH code on $n = 1428$ bits, meaning we use $20$ additional ancillae for a total of $48 + 20 = 68$. We again perform a rudimentary optimization to achieve a total two-qubit gate count of $8634$ by sampling $200$ random representatives for both $X$ and $Z$ (reduced from 1000 because of the increased time to construct and analyze the larger circuits).
    
    It is notable that in theory one could optimize the entire circuit creation in one loop, using a more sophisticated algorithm than random sampling, making $8634$ a relatively pessimistic upper bound on the number of gates needed for this construction. In summary, though, this method uses approximately $20\times$ as many two-qubit gates and $23\%$ as many ancillae compared to an approach based on code switching (again, not counting the cost of error-correction for either approach, since this is the cost of logical identity which would have been performed either way). Comparison to magic state distillation is more difficult, in that the resources used by state distillation depend on the physical and target error rates. We make a careful examination of the topic in Section~\ref{sec:msp}.

    In Appendix~\ref{sec:random_search} we also consider searching for circuits based on random products of our main gadget to measure space-time stabilizers.
    \section{Application to (Magic) State Preparation}\label{sec:msp}
    Our construction can also be used to fault-tolerantly prepare stabilizer states, or codestates of stabilizer codes. This of course can be viewed as a degenerate case of code-switching between stabilizer codes, which we can also apply our framework to. For the sake of concreteness, we will just provide a few examples of our framework for this application. We briefly touched upon this in Section~\ref{sec:logical_H}, but we now give this subroutine more attention, and apply it to magic state distillation.

    Often the strategy taken for preparing a stabilizer code state is to make projective measurements of each of the stabilizer generators and repeat some number of times to ensure fault tolerance in the presence of measurement errors. Just as in Section~\ref{sec:logical_H}, however, we can instead directly write down a circuit that transforms the stabilizers of our original, physical, state into the stabilizers of the codestate.

    We will in fact use part of the same circuit as in Section~\ref{sec:logical_H}, namely the preparation circuit for the 3D color code. This circuit is shown in Figure~\ref{fig:state_prep}. It is notable that this circuit is actually significantly deeper than the circuit given in Section~\ref{sec:logical_H}, despite being one of the two circuits that was multiplied together to form $\bar H$. 
    This is only slightly surprising: the circuit for logical $H$ only swaps $\bar Z$ and $\bar X$, whereas this circuit must map the single-qubit stabilizers of the unencoded state to the stabilizer generators.
    \begin{figure*}
        \includegraphics[width=0.9\linewidth]{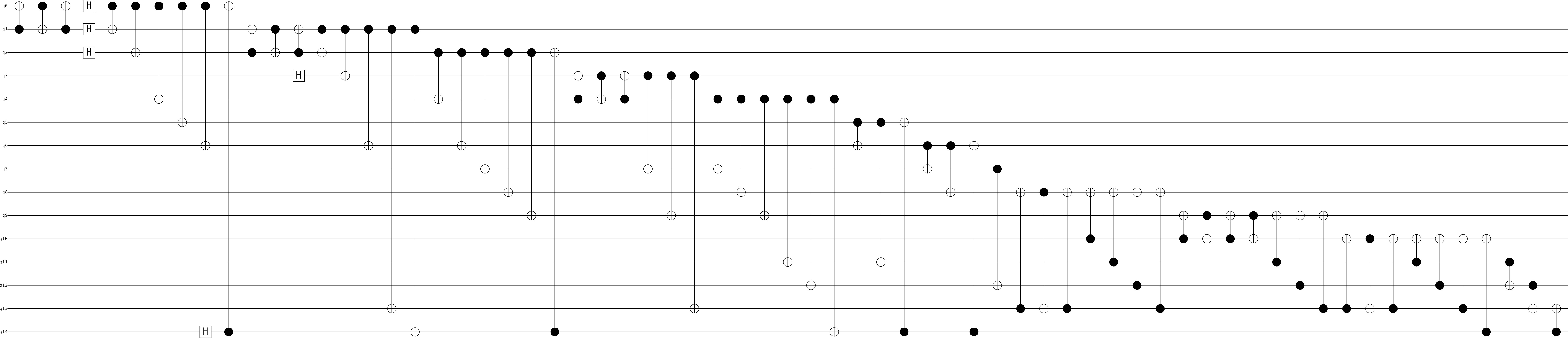}
        \caption{\label{fig:state_prep}One way to prepare the codestate $\ket{\bar \psi}$ in the $[[15, 1, 3]]$ code starting from the state $\ket{0}^{\otimes 14} \otimes \ket{\psi}$.}
    \end{figure*}
    We can now consider flagging the resulting circuit according to Section~\ref{sec:general_flags}. 
    Note that since the initial state is prepared in the trivial code (i.e. not encoded) the initial round of fault-tolerant error correction reduces to ensuring each qubit is prepared in the $\ket{0}$ state, which can be handled by adding one location per qubit to flag to the construction (this one location just being a gadget that measures the $Z$ stabilizer).
    If we flag the resulting circuit without optimizing, we end up with a circuit using 92 ancilla qubits (107 total qubits) and 7380 two-qubit gates (these numbers correspond to a construction in which we eliminate the repetition of the parity-check matrix in the meta-flag level as in Section~\ref{sec:logical_H_resources}, but not in the lower level). 

    The standard way to prepare a stabilizer codestate is to do $O(t^2)$ rounds of projective measurements of a set of stabilizer generators. For concreteness we will take the same stabilizer generators described in Section~\ref{sec:logical_H} which, again, consist of $4$ weight-$8$ $X$-type stabilizer generators and $10$ weight-$4$ $Z$-type stabilizer generators. Following the same resource analysis as we did for code switching in Section~\ref{sec:logical_H} in which we assume reset is extremely slow or unavailable, 
    we estimate the cost of the measurement of a weight-4 operator as 7 ancilla qubits and 9 CNOTs, and the cost of a weight-8 as 12 ancillae and 20 CNOTs. This corresponds to the 4 ancillae and 4 CNOTs (8 ancillae and CNOTs) required to measure the operator using a fault-tolerantly prepared cat state on 4 (8) qubits as well as the 3 ancillae and 5 CNOTs (4 ancillae and 12 CNOTs) we proposed to prepare the cat state itself. Therefore the cost of a single round of error correction is $10 \times 7 + 4 \times 12 = 118$ ancillae and $10 \times 9 + 4 \times 20 = 170$ CNOTs. 

    For the sake of concreteness we assume that exactly $d^2 = 9$ rounds of syndrome extraction are performed in order to ensure fault tolerance. This produces a total cost of $1062$ ancillae and $1530$ CNOTs. Therefore we observe that we can prepare a codestate using less than $10\%$ as many ancillae (or measurements, if reset is fast) while only using about $5\times$ as many CNOTs.

    This directly applies to magic-state distillation, in that one of the most popular magic-state distillation circuits, namely the 15-to-1~\cite{bravyi2005magic} construction, essentially follows this procedure:
    \begin{enumerate}
        \item Prepare a codestate of the $[[15, 1, 3]]$ code
        \item Apply $T$ (using $15$ physical $T^\dagger$ gates applied transversally)
        \item Unencode, producing a single, higher-fidelity, $T$ state
    \end{enumerate}
    Applying our construction, as above, to the encoding circuit means that the space-time volume of magic state distillation could be reduced by roughly a factor of $2$ (although recent work \cite{gidney2024cultivation} has proposed compelling alternatives to distillation for the surface code).
    \section{Flagging Data-Syndrome Codes}\label{sec:data_syn}
    As a final application, we consider making data-syndrome code~\cite{ashikmhin2014robus, fujiwara2014datasyndrome,brown2024datasyndromebch} style syndrome extraction, a natural technique to make syndrome extraction robust to measurement errors, straightforwardly fault tolerant.
    Intuitively, repeated rounds of syndrome extraction constitute encoding the syndrome bits in a repetition code -- by making the distance of the repetition code high enough, that is, performing sufficient rounds of syndrome extraction, one can guarantee that measurement errors are handled fault tolerantly.
    Data-syndrome codes are just a generalization of this idea where we replace the repetition code with a code with some more desirable properties, typically a higher rate.
    The strategy then is to take the parity-check matrix $H$ of some code, produce a redundant parity-check matrix $GH$ using the generator matrix of a classical code so that each check is encoded, measure the new checks, decode according to $G$ to remove measurement errors, then decode this according to $H$ to find the data correction.

    It has been demonstrated~\cite{brown2024datasyndromebch, ashikmhin2014robus, fujiwara2014datasyndrome} that under phenomenological noise, i.e. noise on the data qubits before any operations are performed and before each measurement but not after each physical gate, data-syndrome codes can substantially decrease the logical error rate versus repeated rounds of syndrome extraction. 
    Unfortunately, data-syndrome codes are not fault-tolerant under circuit noise. 
    To be precise, consider an error $e$ on qubit $i$ with syndrome $s$.
     If $e$ occurs before any measurements are performed, then it will produce its syndrome and be corrected. 
     However, if an error on qubit $i$ occurs after the first $x$ measurements, then instead of $s$ a version of $s$ with the first $x$ bits zeroed out will be measured. 
     Decoding this according to the classical code may yield a syndrome corresponding to a data error on a different qubit, making the procedure non-fault-tolerant. 
     In some sense this is because mid-circuit errors act as many measurement errors and the classical code has insufficient distance to correct them.
    Of course, for any particular choice of $G$ it may be the case that a fault-tolerant decoder exists; it simply will probably not be the intuitive two-step decoder outlined above.

    Fortunately, we can apply a flag strategy to make data-syndrome codes fault tolerant under circuit noise. The intuition is this:
    \begin{itemize}
        \item The quantum code itself handles input data errors by construction,
        \item The classical code handles measurement errors,
        \item Mid-circuit errors can be handled by flagging.
    \end{itemize}
    Therefore, to perform error correction fault tolerant to distance $2t + 1$ it is sufficient that that any error $e$ of weight-$|e| \leq t$ which is not equivalent by a (space-time) stabilizer to an error $e'$ with support outside of the region between the first and last CNOT of the syndrome extraction circuit is detected by the flag gadgets (assuming the syndrome extraction circuit itself is fault tolerant as a gadget). For this application, though, the specific form of the syndrome extraction circuit allows us to greatly reduce the complexity of the flag gadgets.

    For the time being, we reduce to CSS codes for the sake of convenience. If we look at a single data qubit, we see one qubit coupled to many - we want to ensure that the state of the qubit is consistent between each of the times it contributes to a parity-check. Effectively, we want to make sure that no (undetected) errors occur between the first and last CNOT connected to the data qubit, marked in Figure~\ref{fig:data_syndrome_bad_region}. To do this we flag the data CNOT. Because of the simple form of the circuit when restricted to this qubit, the flag gadget can just consist of two CNOT, Figure~\ref{fig:data_x_gadget}. If we cover the entirety of the region with $X$ gadgets, we will be sure to detect any errors which produce syndromes the data-syndrome code cannot handle, at least in the absence of flag-qubit measurement errors, Figure~\ref{fig:covered}.
    \begin{figure}
       \centering
        \subfloat[Region in which errors produce inconsistent syndromes.\label{fig:data_syndrome_bad_region}]{%
            \includegraphics[width=0.45\linewidth]{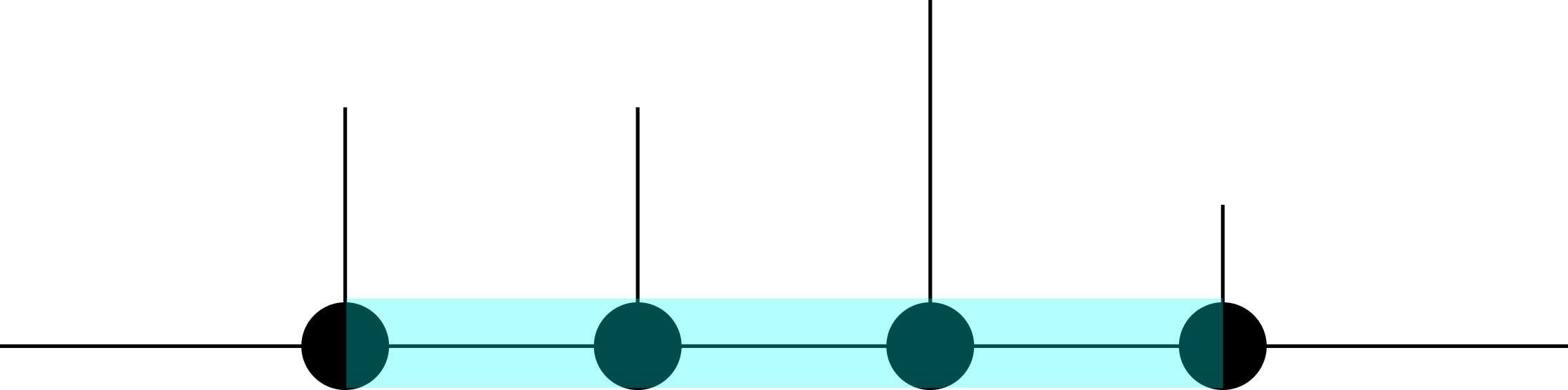}
        }
        \hfill
        \subfloat[The simplified gadget we can use to detect $X$ errors.\label{fig:data_x_gadget}]{%
            \includegraphics[width=0.45\linewidth]{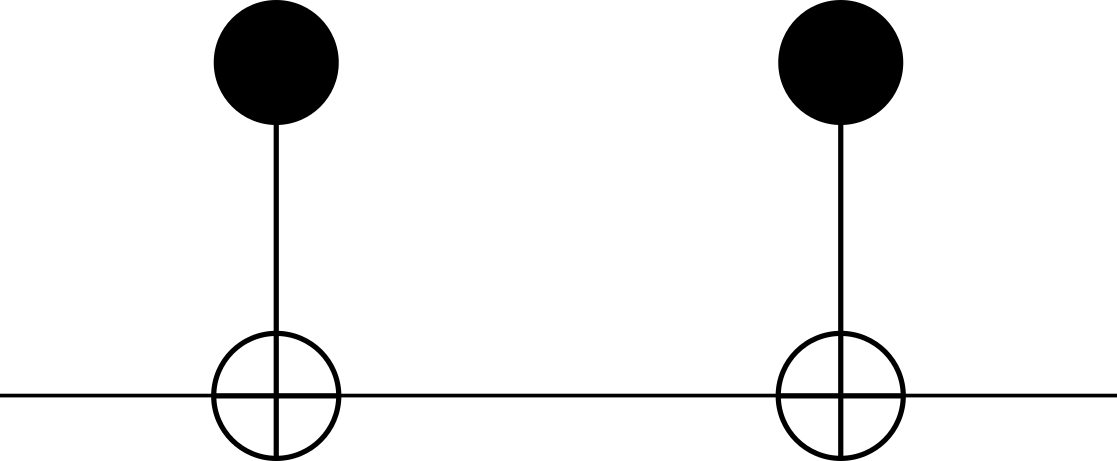}
        }\\
        \subfloat[Covering the entire region with detecting gadgets.\label{fig:covered}]{%
            \includegraphics[width=0.45\linewidth]{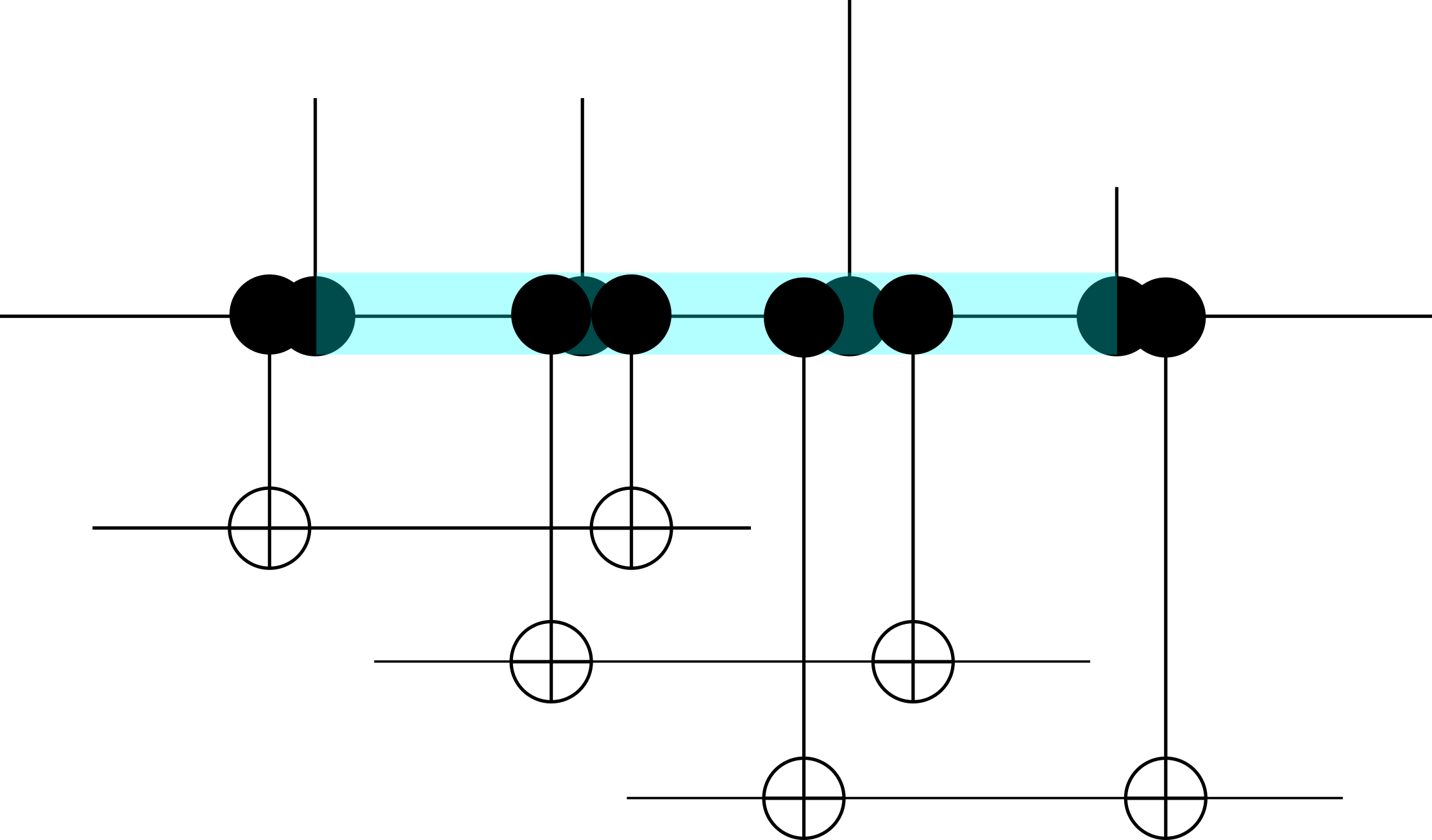}
        }
        \caption{\label{fig:data_flag}Application of flag qubit ideas to data-syndrome codes.} 
    \end{figure}

    Now that we know how to make sure each relevant error is detected, we play the same game as in Section~\ref{sec:general_flags}. We simply add in the locations for each of the other data qubits, combine primitive gadgets according to the parity-checks of some classical code with distance at least $2t + 1$ and repeat the resulting circuit in space $t$ times to account for measurement errors (note that we could recursively apply the same construction to encode the flag measurements in a data-syndrome code). Note that we do not need meta-flags since errors which propagate from flag qubits to data qubits affect only one data qubit, due to the specific form of the syndrome extraction circuit.
    Similarly, although the precision of the resulting gadget is not as high as in our construction for general flag gadgets, this is still fault-tolerant since misidentifying a data error as occurring in a neighboring location only leads to one effective measurement error (of the syndrome) when accounted for.

    It has been shown~\cite{brown2024datasyndromebch} that for an $[[n, k, d]]$ code with $\ell$ independent stabilizer generators, it is possible to find a set of stabilizers to measure that is robust to $t$ measurement errors using only $O(\ell + t \log \ell)$ total measurements. As an upper bound on the number of locations we need to flag, we can assume that each stabilizer has support on $O(n)$ qubits, meaning there are $O(n (\ell + t \log \ell))$ locations to flag. 
    The number of flag qubits needed to make the whole procedure fault-tolerant to distance $2t + 1$ is $O(t^2 \log(n (\ell t \log \ell)) + \ell + t \log \ell)$ (not including the extra measurements, either of a cat state or of more flags to make the original stabilizer measurements fault tolerant).

    Of course, the analysis is nearly exactly the same for non-CSS codes, in that we just have to separate out $X$ and $Z$ locations. We illustrate this using the $5$-qubit perfect code and the parity-check matrix for the Hamming code shortened to the appropriate number of bits for both $X$ and $Z$ flags. 
    We additionally use (a variant of) the original flag qubit prescription for the $[[5, 1, 3]]$~\cite{reichardt2018flag}. The combination of these tools yields a circuit on $28$ qubits using $90$ CNOTs which performs fault-tolerant syndrome extraction for the $[[5, 1, 3]]$ code. A diagram of the relevant pieces as well as the resulting circuit (obtained mostly by hand) is given in Figure~\ref{fig:flagged_perfect}. The most straightforward implementation of a syndrome extraction circuit for the $[[5, 1, 3]]$ code uses four $4$-qubit cat states for each of $4$ rounds of syndrome extraction. Using the same estimates for the cost of producing a $4$ qubit cat state, 3 ancilla qubits and 5 CNOTs, yields a circuit on 112 (non-resettable) qubits using 144 CNOTs.
    
    \begin{figure*}
       \centering
        \subfloat[The stabilizers measurements we make after encoding the syndrome bits of the $5, 1, 3$ code into the $7, 4, 3$ Hamming code.\label{fig:ds}]{%
            \includegraphics[width=0.45\linewidth]{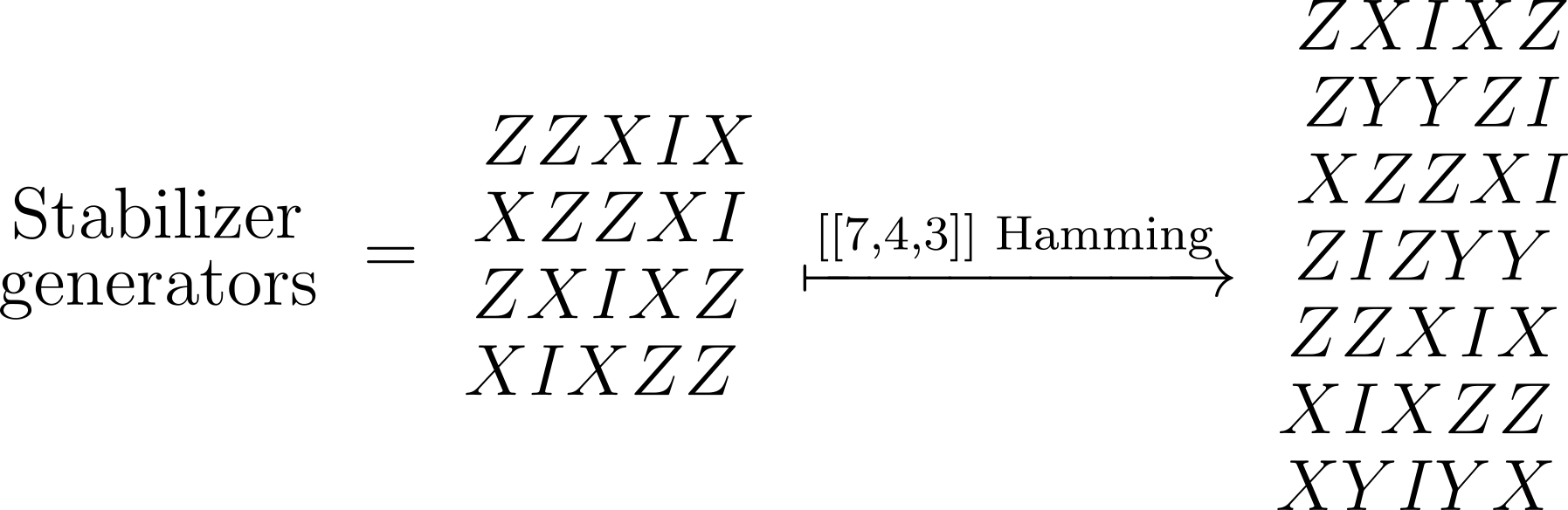}
        }\hfill
        \subfloat[A single stabilizer measurement]{%
            \includegraphics[width=0.15\linewidth]{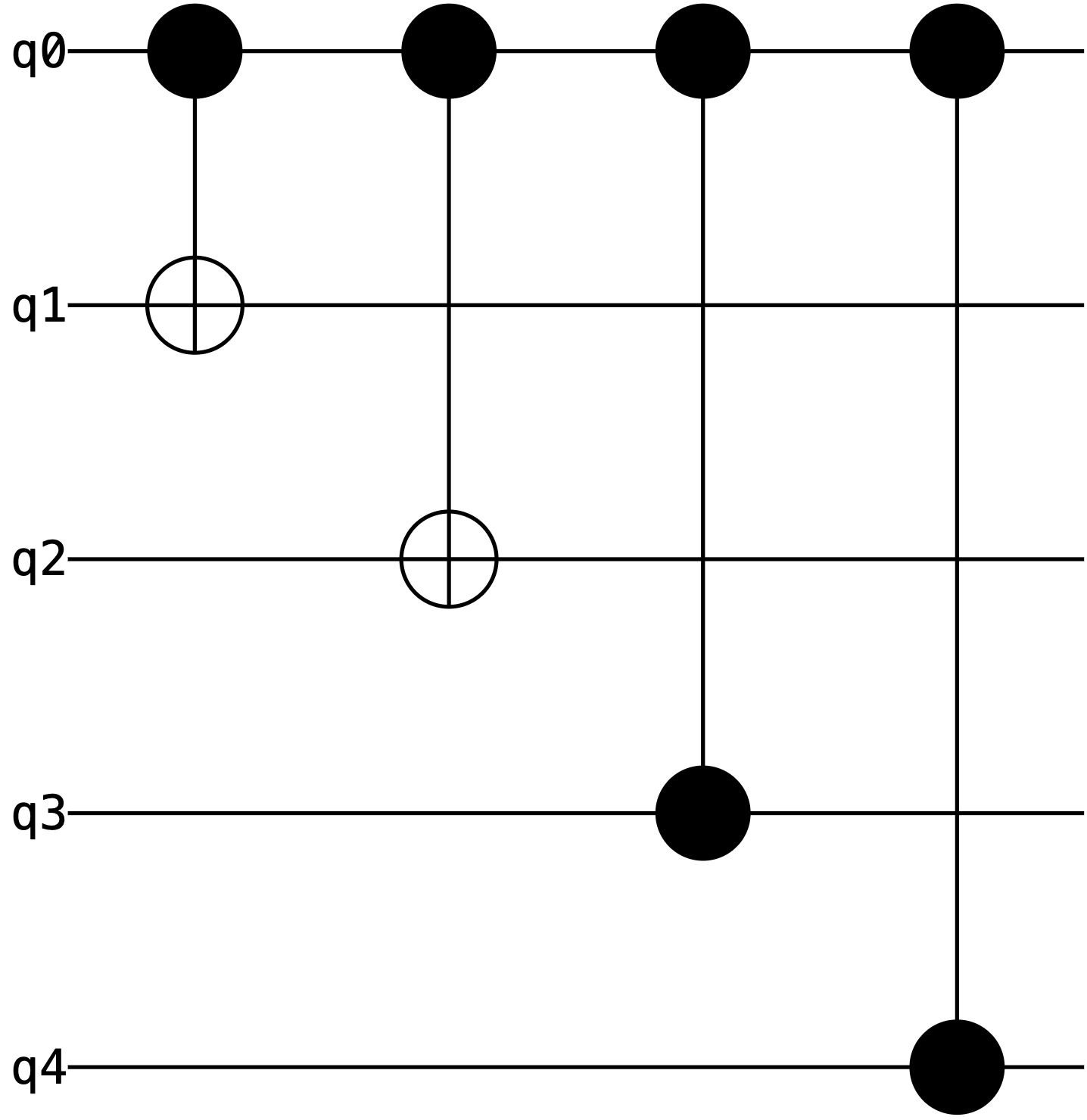}
        }\hfill
        \subfloat[The same stabilizer measurement with a flag due to~\cite{reichardt2018flag}.]{%
            \includegraphics[width=0.2\linewidth]{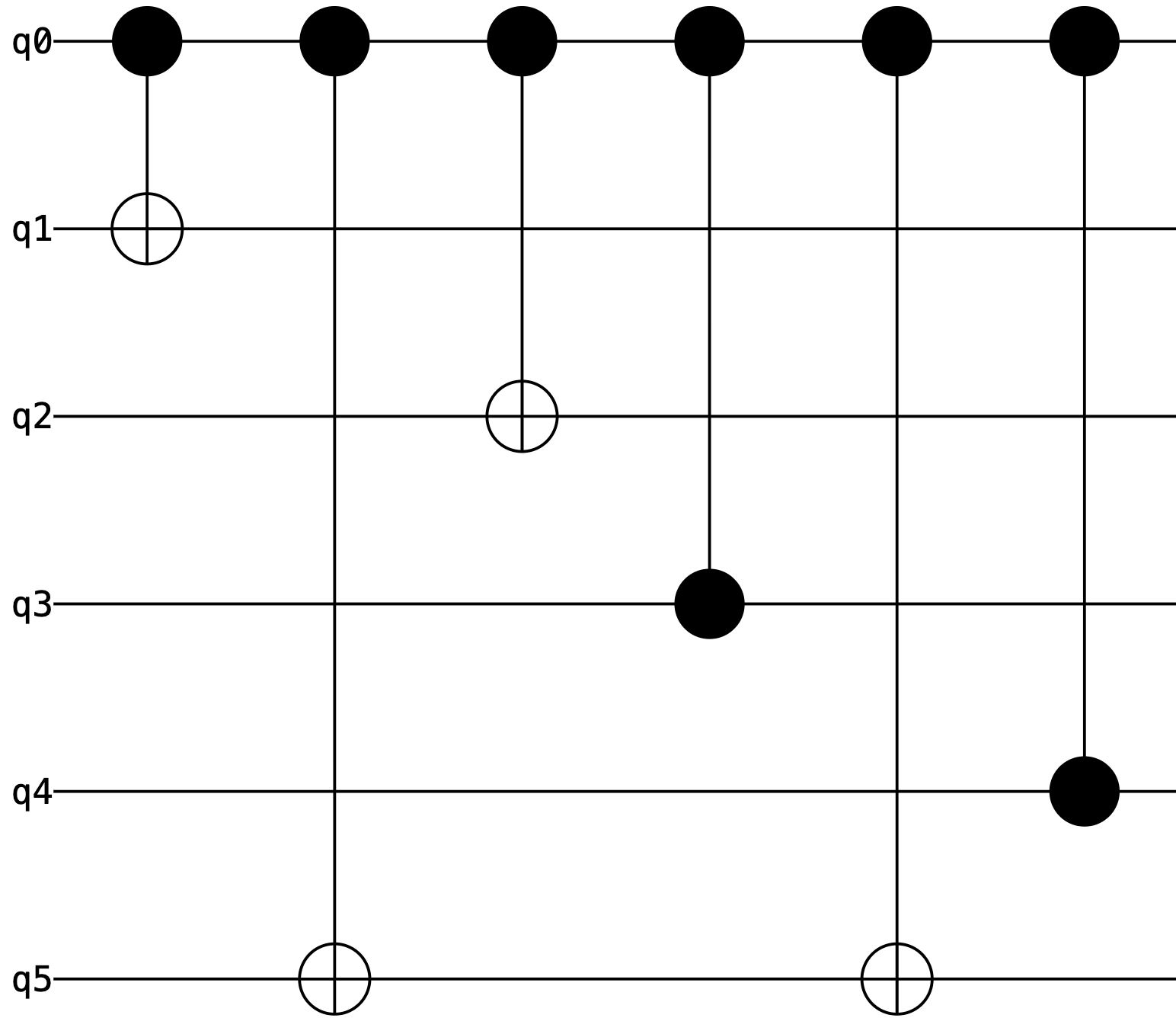}
        }\\
        \subfloat[The resulting circuit, with data qubits highlighted. All other qubits are flag qubits, either for syndrome extraction or for the data. The brackets above the circuit denote CNOTs that (can) occur physically at the same time, but that are drawn separately for clarity.]{%
            \includegraphics[width=\linewidth]{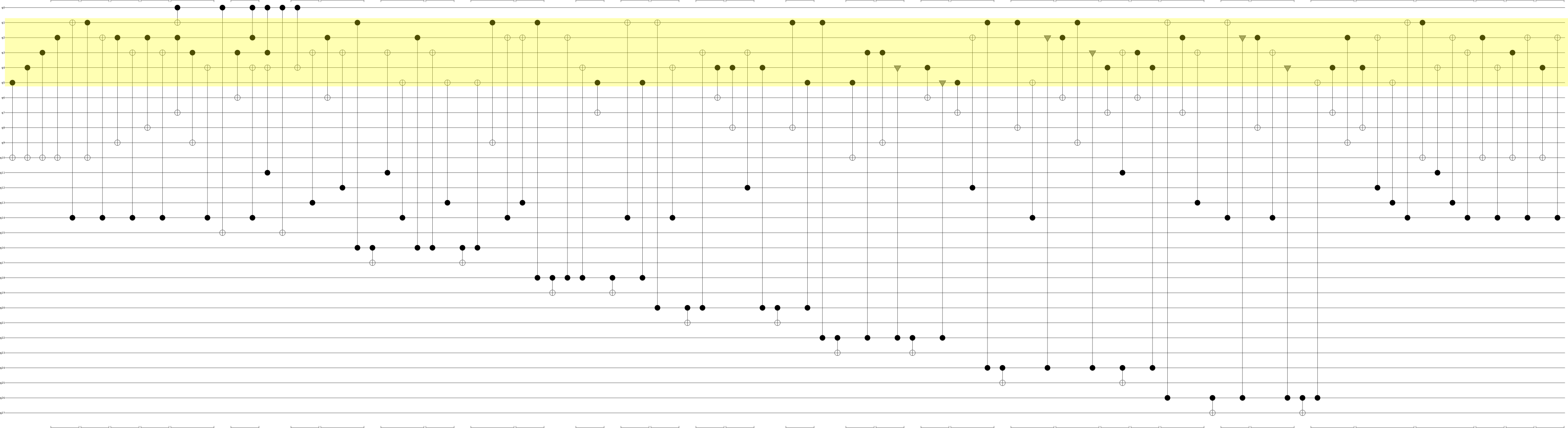}
         }\\ 
        \caption{\label{fig:flagged_perfect}The $5, 1, 3$ perfect code encoded into a flagged data-syndrome code. Crumble circuit available at \href{https://algassert.com/crumble\#circuit=Q(0,0)0;Q(0,1)1;Q(0,2)2;Q(0,3)3;Q(0,4)4;Q(0,5)5;Q(0,6)6;Q(0,7)7;Q(0,8)8;Q(0,9)9;Q(0,10)10;Q(0,11)11;Q(0,12)12;Q(0,13)13;Q(0,14)14;Q(0.5,0.5)15;Q(1,0)16;Q(1.5,0.5)17;Q(2,0)18;Q(2.5,0.5)19;Q(3,0)20;Q(3.5,0.5)21;Q(4,0)22;Q(4.5,0.5)23;Q(5,0)24;Q(5.5,0.5)25;Q(6,0)26;Q(6.5,0.5)27;CX_5_10;TICK;CX_4_10;TICK;CX_3_10;TICK;CX_2_10_14_1;TICK;CX_1_10_14_2;TICK;CX_2_9_14_3;TICK;CX_2_8_14_3;TICK;CX_0_1_2_7_3_9_14_4;TICK;CX_0_15;TICK;CX_3_6_14_4;CZ_0_2;TICK;CX_11_4;CZ_0_3;TICK;CX_0_15;TICK;CX_0_4_13_3;TICK;CX_2_6_12_3;CZ_16_1;TICK;CX_16_17;TICK;CX_11_3_14_5;CZ_16_2;TICK;CX_16_3_13_5;TICK;CX_16_17;TICK;CX_16_5_1_9_14_2;TICK;CX_13_2;CZ_18_1;TICK;CX_18_19;TICK;CX_18_2;TICK;CX_18_4_5_7;TICK;CX_18_19;TICK;CX_14_1;CZ_18_5;TICK;CX_20_1_14_4;TICK;CX_20_21;TICK;CX_20_3_4_6;TICK;CX_4_8_12_3;TICK;CZ_20_4;TICK;CX_20_21;TICK;CX_1_8;CZ_20_5;TICK;CZ_22_1;TICK;CX_22_23;TICK;CX_5_10;CZ_22_3;TICK;CX_3_9;CY_22_4;TICK;CX_22_23;TICK;CX_4_6;CY_22_5;TICK;CX_5_7_12_2;CZ_24_1;TICK;CX_24_25;TICK;CX_1_8_14_5;CY_24_2;TICK;CX_2_6_1_9;CY_24_3;TICK;CX_4_7_11_3_24_25;TICK;CX_3_6;CZ_24_4;TICK;CX_26_1_2_7_13_3;TICK;CX_26_27;TICK;CX_14_1;CY_26_2;TICK;CX_2_8_14_3;CY_26_4;TICK;CX_26_27;TICK;CX_26_5_4_7_2_9;TICK;CX_4_8_12_2_13_5_14_1;TICK;CX_1_10_11_4_13_2_14_3;TICK;CX_2_10_14_4;TICK;CX_3_10_14_2;TICK;CX_4_10_14_2}{this link}.}
    \end{figure*}
    \section{Conclusion}\label{sec:conclusion}
    We have shown that, because of the fact that Paulis can be tracked through Clifford circuits (by definition), these circuits can be made fault tolerant through the use of flag qubits. 
    This observation allows for a new paradigm for fault-tolerant universality, namely transversal non-Clifford gates combined with flagged Clifford gates.
    In many regimes this approach can be less costly in terms of qubits required than conventional methods such as magic-state distillation or code switching. 
    It also allows for a reduction in the resources required for fault-tolerant state preparation and error correction, beyond simply flagging syndrome extraction.

    Even with minimal optimization, we have shown that our construction substantially reduces the cost of fault tolerance across many subroutines. 
    We expect that, given some fraction of the same effort for optimizing older techniques, this method can greatly reduce the resources needed to solve practically useful problems. 

    Beyond general optimization of the specific circuits produced, we also leave open some specific avenues for improvement:
    \begin{itemize}
        \item In which regimes is using a good (classical) LDPC code as the code to combine primitive flag gadgets the most effective choice?
        \item Can this construction be applied in practical regimes for implementing logical Clifford gates on good LDPC codes? One can imagine implementing a (flagged) unencode+operate+re-encode circuit to find fault-tolerant logical operators of a given code, a generally nontrivial task.
        \item For which flag constructions is the resulting decoding problem efficient?
    \end{itemize}

    The question of decoding is probably the most important for practically relevant implementations of this construction. 
    Ultimately, this framework offers a compelling alternative approach to universal fault tolerance, with the potential to reduce resource costs and integrate with existing methods.
    \section{Acknowledgments}
    This work is supported by NSF CAREER award No. CCF-2237356. We would like to thank the UNM Center for Advanced Research Computing, supported in part by the National Science Foundation, for providing the high performance computing resources used in this work. 
    \newpage
    \bibliographystyle{apsrev4-2}
    \bibliography{refs.bib}

\begin{thebibliography}{45}%
\makeatletter
\providecommand \@ifxundefined [1]{%
 \@ifx{#1\undefined}
}%
\providecommand \@ifnum [1]{%
 \ifnum #1\expandafter \@firstoftwo
 \else \expandafter \@secondoftwo
 \fi
}%
\providecommand \@ifx [1]{%
 \ifx #1\expandafter \@firstoftwo
 \else \expandafter \@secondoftwo
 \fi
}%
\providecommand \natexlab [1]{#1}%
\providecommand \enquote  [1]{``#1''}%
\providecommand \bibnamefont  [1]{#1}%
\providecommand \bibfnamefont [1]{#1}%
\providecommand \citenamefont [1]{#1}%
\providecommand \href@noop [0]{\@secondoftwo}%
\providecommand \href [0]{\begingroup \@sanitize@url \@href}%
\providecommand \@href[1]{\@@startlink{#1}\@@href}%
\providecommand \@@href[1]{\endgroup#1\@@endlink}%
\providecommand \@sanitize@url [0]{\catcode `\\12\catcode `\$12\catcode
  `\&12\catcode `\#12\catcode `\^12\catcode `\_12\catcode `\%12\relax}%
\providecommand \@@startlink[1]{}%
\providecommand \@@endlink[0]{}%
\providecommand \url  [0]{\begingroup\@sanitize@url \@url }%
\providecommand \@url [1]{\endgroup\@href {#1}{\urlprefix }}%
\providecommand \urlprefix  [0]{URL }%
\providecommand \Eprint [0]{\href }%
\providecommand \doibase [0]{https://doi.org/}%
\providecommand \selectlanguage [0]{\@gobble}%
\providecommand \bibinfo  [0]{\@secondoftwo}%
\providecommand \bibfield  [0]{\@secondoftwo}%
\providecommand \translation [1]{[#1]}%
\providecommand \BibitemOpen [0]{}%
\providecommand \bibitemStop [0]{}%
\providecommand \bibitemNoStop [0]{.\EOS\space}%
\providecommand \EOS [0]{\spacefactor3000\relax}%
\providecommand \BibitemShut  [1]{\csname bibitem#1\endcsname}%
\let\auto@bib@innerbib\@empty
\bibitem [{\citenamefont {Takagi}\ \emph {et~al.}(2022)\citenamefont {Takagi},
  \citenamefont {Endo}, \citenamefont {Minagawa},\ and\ \citenamefont
  {Gu}}]{takagi2022mitigation}%
  \BibitemOpen
  \bibfield  {author} {\bibinfo {author} {\bibfnamefont {R.}~\bibnamefont
  {Takagi}}, \bibinfo {author} {\bibfnamefont {S.}~\bibnamefont {Endo}},
  \bibinfo {author} {\bibfnamefont {S.}~\bibnamefont {Minagawa}},\ and\
  \bibinfo {author} {\bibfnamefont {M.}~\bibnamefont {Gu}},\ }\bibfield
  {journal} {\bibinfo  {journal} {npj Quantum Information}\ }\textbf {\bibinfo
  {volume} {8}},\ \href {https://doi.org/10.1038/s41534-022-00618-z}
  {10.1038/s41534-022-00618-z} (\bibinfo {year} {2022})\BibitemShut {NoStop}%
\bibitem [{\citenamefont {{Google Quantum
  AI}}(2023)}]{acharya2022suppressingquantumerrorsscaling}%
  \BibitemOpen
  \bibfield  {author} {\bibinfo {author} {\bibnamefont {{Google Quantum AI}}},\
  }\href@noop {} {\bibfield  {journal} {\bibinfo  {journal} {Nature}\ }\textbf
  {\bibinfo {volume} {614}},\ \bibinfo {pages} {676} (\bibinfo {year}
  {2023})}\BibitemShut {NoStop}%
\bibitem [{\citenamefont {Eastin}\ and\ \citenamefont
  {Knill}(2009)}]{eastin2009transversal}%
  \BibitemOpen
  \bibfield  {author} {\bibinfo {author} {\bibfnamefont {B.}~\bibnamefont
  {Eastin}}\ and\ \bibinfo {author} {\bibfnamefont {E.}~\bibnamefont {Knill}},\
  }\href {https://doi.org/10.1103/PhysRevLett.102.110502} {\bibfield  {journal}
  {\bibinfo  {journal} {Physical Review Letters}\ }\textbf {\bibinfo {volume}
  {102}},\ \bibinfo {pages} {110502} (\bibinfo {year} {2009})},\ \bibinfo
  {note} {arXiv:0811.4262 [quant-ph]}\BibitemShut {NoStop}%
\bibitem [{\citenamefont {Bravyi}\ and\ \citenamefont
  {Kitaev}(2005)}]{bravyi2005magic}%
  \BibitemOpen
  \bibfield  {author} {\bibinfo {author} {\bibfnamefont {S.}~\bibnamefont
  {Bravyi}}\ and\ \bibinfo {author} {\bibfnamefont {A.}~\bibnamefont
  {Kitaev}},\ }\href {https://doi.org/10.1103/PhysRevA.71.022316} {\bibfield
  {journal} {\bibinfo  {journal} {Phys. Rev. A}\ }\textbf {\bibinfo {volume}
  {71}},\ \bibinfo {pages} {022316} (\bibinfo {year} {2005})}\BibitemShut
  {NoStop}%
\bibitem [{\citenamefont {Nielsen}\ and\ \citenamefont
  {Chuang}(2012)}]{nielsen2012quantum}%
  \BibitemOpen
  \bibfield  {author} {\bibinfo {author} {\bibfnamefont {M.~A.}\ \bibnamefont
  {Nielsen}}\ and\ \bibinfo {author} {\bibfnamefont {I.~L.}\ \bibnamefont
  {Chuang}},\ }\href@noop {} {\emph {\bibinfo {title} {Quantum Computation and
  Quantum Information}}}\ (\bibinfo  {publisher} {Cambridge University Press},\
  \bibinfo {address} {Cambridge, England},\ \bibinfo {year} {2012})\BibitemShut
  {NoStop}%
\bibitem [{\citenamefont {Paetznick}\ and\ \citenamefont
  {Reichardt}(2013)}]{paetznick2013switch}%
  \BibitemOpen
  \bibfield  {author} {\bibinfo {author} {\bibfnamefont {A.}~\bibnamefont
  {Paetznick}}\ and\ \bibinfo {author} {\bibfnamefont {B.~W.}\ \bibnamefont
  {Reichardt}},\ }\href {https://doi.org/10.1103/PhysRevLett.111.090505}
  {\bibfield  {journal} {\bibinfo  {journal} {Phys. Rev. Lett.}\ }\textbf
  {\bibinfo {volume} {111}},\ \bibinfo {pages} {090505} (\bibinfo {year}
  {2013})}\BibitemShut {NoStop}%
\bibitem [{\citenamefont {Daguerre}\ \emph {et~al.}(2025)\citenamefont
  {Daguerre}, \citenamefont {Blume-Kohout}, \citenamefont {Brown},
  \citenamefont {Hayes},\ and\ \citenamefont {Kim}}]{daguerre2025switch}%
  \BibitemOpen
  \bibfield  {author} {\bibinfo {author} {\bibfnamefont {L.}~\bibnamefont
  {Daguerre}}, \bibinfo {author} {\bibfnamefont {R.}~\bibnamefont
  {Blume-Kohout}}, \bibinfo {author} {\bibfnamefont {N.~C.}\ \bibnamefont
  {Brown}}, \bibinfo {author} {\bibfnamefont {D.}~\bibnamefont {Hayes}},\ and\
  \bibinfo {author} {\bibfnamefont {I.~H.}\ \bibnamefont {Kim}},\ }\href
  {https://arxiv.org/abs/2506.14169} {\bibinfo {title} {Experimental
  demonstration of high-fidelity logical magic states from code switching}}
  (\bibinfo {year} {2025}),\ \Eprint {https://arxiv.org/abs/2506.14169}
  {arXiv:2506.14169 [quant-ph]} \BibitemShut {NoStop}%
\bibitem [{\citenamefont {Butt}\ \emph {et~al.}(2024)\citenamefont {Butt},
  \citenamefont {Heußen}, \citenamefont {Rispler},\ and\ \citenamefont
  {Müller}}]{butt2024switch}%
  \BibitemOpen
  \bibfield  {author} {\bibinfo {author} {\bibfnamefont {F.}~\bibnamefont
  {Butt}}, \bibinfo {author} {\bibfnamefont {S.}~\bibnamefont {Heußen}},
  \bibinfo {author} {\bibfnamefont {M.}~\bibnamefont {Rispler}},\ and\ \bibinfo
  {author} {\bibfnamefont {M.}~\bibnamefont {Müller}},\ }\bibfield  {journal}
  {\bibinfo  {journal} {PRX Quantum}\ }\textbf {\bibinfo {volume} {5}},\ \href
  {https://doi.org/10.1103/prxquantum.5.020345} {10.1103/prxquantum.5.020345}
  (\bibinfo {year} {2024})\BibitemShut {NoStop}%
\bibitem [{\citenamefont {Gidney}\ \emph {et~al.}(2024)\citenamefont {Gidney},
  \citenamefont {Shutty},\ and\ \citenamefont {Jones}}]{gidney2024cultivation}%
  \BibitemOpen
  \bibfield  {author} {\bibinfo {author} {\bibfnamefont {C.}~\bibnamefont
  {Gidney}}, \bibinfo {author} {\bibfnamefont {N.}~\bibnamefont {Shutty}},\
  and\ \bibinfo {author} {\bibfnamefont {C.}~\bibnamefont {Jones}},\ }\href
  {https://arxiv.org/abs/2409.17595} {\bibinfo {title} {Magic state
  cultivation: growing t states as cheap as cnot gates}} (\bibinfo {year}
  {2024}),\ \Eprint {https://arxiv.org/abs/2409.17595} {arXiv:2409.17595
  [quant-ph]} \BibitemShut {NoStop}%
\bibitem [{\citenamefont {Sahay}\ \emph {et~al.}(2025)\citenamefont {Sahay},
  \citenamefont {Tsai}, \citenamefont {Chang}, \citenamefont {Su},
  \citenamefont {Smith}, \citenamefont {Singh},\ and\ \citenamefont
  {Puri}}]{sahay2025cultivation}%
  \BibitemOpen
  \bibfield  {author} {\bibinfo {author} {\bibfnamefont {K.}~\bibnamefont
  {Sahay}}, \bibinfo {author} {\bibfnamefont {P.-K.}\ \bibnamefont {Tsai}},
  \bibinfo {author} {\bibfnamefont {K.}~\bibnamefont {Chang}}, \bibinfo
  {author} {\bibfnamefont {Q.}~\bibnamefont {Su}}, \bibinfo {author}
  {\bibfnamefont {T.~B.}\ \bibnamefont {Smith}}, \bibinfo {author}
  {\bibfnamefont {S.}~\bibnamefont {Singh}},\ and\ \bibinfo {author}
  {\bibfnamefont {S.}~\bibnamefont {Puri}},\ }\href
  {https://arxiv.org/abs/2509.05212} {\bibinfo {title} {Fold-transversal
  surface code cultivation}} (\bibinfo {year} {2025}),\ \Eprint
  {https://arxiv.org/abs/2509.05212} {arXiv:2509.05212 [quant-ph]} \BibitemShut
  {NoStop}%
\bibitem [{\citenamefont {Chao}\ and\ \citenamefont
  {Reichardt}(2018)}]{reichardt2018flag}%
  \BibitemOpen
  \bibfield  {author} {\bibinfo {author} {\bibfnamefont {R.}~\bibnamefont
  {Chao}}\ and\ \bibinfo {author} {\bibfnamefont {B.~W.}\ \bibnamefont
  {Reichardt}},\ }\href {https://doi.org/10.1103/PhysRevLett.121.050502}
  {\bibfield  {journal} {\bibinfo  {journal} {Physical Review Letters}\
  }\textbf {\bibinfo {volume} {121}},\ \bibinfo {pages} {050502} (\bibinfo
  {year} {2018})},\ \bibinfo {note} {arXiv:1705.02329 [quant-ph]}\BibitemShut
  {NoStop}%
\bibitem [{\citenamefont {Gottesman}(1998)}]{gottesman1998simulation}%
  \BibitemOpen
  \bibfield  {author} {\bibinfo {author} {\bibfnamefont {D.}~\bibnamefont
  {Gottesman}}\ }\href {https://doi.org/10.48550/arXiv.quant-ph/9807006}
  {10.48550/arXiv.quant-ph/9807006} (\bibinfo {year} {1998}),\ \bibinfo {note}
  {arXiv:quant-ph/9807006}\BibitemShut {NoStop}%
\bibitem [{\citenamefont {Prabhu}\ and\ \citenamefont
  {Reichardt}(2023)}]{prabhu2023cat}%
  \BibitemOpen
  \bibfield  {author} {\bibinfo {author} {\bibfnamefont {P.}~\bibnamefont
  {Prabhu}}\ and\ \bibinfo {author} {\bibfnamefont {B.~W.}\ \bibnamefont
  {Reichardt}},\ }\href {https://doi.org/10.22331/q-2023-10-24-1154} {\bibfield
   {journal} {\bibinfo  {journal} {{Quantum}}\ }\textbf {\bibinfo {volume}
  {7}},\ \bibinfo {pages} {1154} (\bibinfo {year} {2023})}\BibitemShut
  {NoStop}%
\bibitem [{\citenamefont {Forlivesi}\ and\ \citenamefont
  {Amaro}(2025)}]{forlivesi2025flag}%
  \BibitemOpen
  \bibfield  {author} {\bibinfo {author} {\bibfnamefont {D.}~\bibnamefont
  {Forlivesi}}\ and\ \bibinfo {author} {\bibfnamefont {D.}~\bibnamefont
  {Amaro}},\ }\href {https://arxiv.org/abs/2508.14200} {\bibinfo {title} {Flag
  at origin: a modular fault-tolerant preparation for css codes}} (\bibinfo
  {year} {2025}),\ \Eprint {https://arxiv.org/abs/2508.14200} {arXiv:2508.14200
  [quant-ph]} \BibitemShut {NoStop}%
\bibitem [{\citenamefont {Chamberland}\ \emph {et~al.}(2020)\citenamefont
  {Chamberland}, \citenamefont {Kubica}, \citenamefont {Yoder},\ and\
  \citenamefont {Zhu}}]{chamberland2020triangular}%
  \BibitemOpen
  \bibfield  {author} {\bibinfo {author} {\bibfnamefont {C.}~\bibnamefont
  {Chamberland}}, \bibinfo {author} {\bibfnamefont {A.}~\bibnamefont {Kubica}},
  \bibinfo {author} {\bibfnamefont {T.~J.}\ \bibnamefont {Yoder}},\ and\
  \bibinfo {author} {\bibfnamefont {G.}~\bibnamefont {Zhu}},\ }\href@noop {}
  {\bibfield  {journal} {\bibinfo  {journal} {New Journal of Physics}\ }\textbf
  {\bibinfo {volume} {22}},\ \bibinfo {pages} {023019} (\bibinfo {year}
  {2020})}\BibitemShut {NoStop}%
\bibitem [{\citenamefont {Tansuwannont}\ \emph {et~al.}(2020)\citenamefont
  {Tansuwannont}, \citenamefont {Chamberland},\ and\ \citenamefont
  {Leung}}]{tansuwannont2020flag}%
  \BibitemOpen
  \bibfield  {author} {\bibinfo {author} {\bibfnamefont {T.}~\bibnamefont
  {Tansuwannont}}, \bibinfo {author} {\bibfnamefont {C.}~\bibnamefont
  {Chamberland}},\ and\ \bibinfo {author} {\bibfnamefont {D.}~\bibnamefont
  {Leung}},\ }\href {https://doi.org/10.1103/PhysRevA.101.012342} {\bibfield
  {journal} {\bibinfo  {journal} {Phys. Rev. A}\ }\textbf {\bibinfo {volume}
  {101}},\ \bibinfo {pages} {012342} (\bibinfo {year} {2020})}\BibitemShut
  {NoStop}%
\bibitem [{\citenamefont {Debroy}\ and\ \citenamefont
  {Brown}(2020)}]{debroy2020cpc}%
  \BibitemOpen
  \bibfield  {author} {\bibinfo {author} {\bibfnamefont {D.~M.}\ \bibnamefont
  {Debroy}}\ and\ \bibinfo {author} {\bibfnamefont {K.~R.}\ \bibnamefont
  {Brown}},\ }\href {https://doi.org/10.1103/PhysRevA.102.052409} {\bibfield
  {journal} {\bibinfo  {journal} {Physical Review A}\ }\textbf {\bibinfo
  {volume} {102}},\ \bibinfo {pages} {052409} (\bibinfo {year} {2020})},\
  \bibinfo {note} {arXiv:2009.07752 [quant-ph]}\BibitemShut {NoStop}%
\bibitem [{\citenamefont {Delfosse}\ and\ \citenamefont
  {Tham}(2025)}]{delfosse2025clifford}%
  \BibitemOpen
  \bibfield  {author} {\bibinfo {author} {\bibfnamefont {N.}~\bibnamefont
  {Delfosse}}\ and\ \bibinfo {author} {\bibfnamefont {E.}~\bibnamefont
  {Tham}},\ }\href {https://doi.org/10.1103/PhysRevLett.134.090603} {\bibfield
  {journal} {\bibinfo  {journal} {Physical Review Letters}\ }\textbf {\bibinfo
  {volume} {134}},\ \bibinfo {pages} {090603} (\bibinfo {year}
  {2025})}\BibitemShut {NoStop}%
\bibitem [{\citenamefont {Tham}\ and\ \citenamefont
  {Delfosse}(2025)}]{Tham2025optimizedclifford}%
  \BibitemOpen
  \bibfield  {author} {\bibinfo {author} {\bibfnamefont {E.}~\bibnamefont
  {Tham}}\ and\ \bibinfo {author} {\bibfnamefont {N.}~\bibnamefont
  {Delfosse}},\ }\href {https://doi.org/10.22331/q-2025-08-27-1829} {\bibfield
  {journal} {\bibinfo  {journal} {{Quantum}}\ }\textbf {\bibinfo {volume}
  {9}},\ \bibinfo {pages} {1829} (\bibinfo {year} {2025})}\BibitemShut
  {NoStop}%
\bibitem [{\citenamefont {Martiel}\ and\ \citenamefont
  {Javadi-Abhari}(2025)}]{martiel2025lowoverheaderrordetectionspacetime}%
  \BibitemOpen
  \bibfield  {author} {\bibinfo {author} {\bibfnamefont {S.}~\bibnamefont
  {Martiel}}\ and\ \bibinfo {author} {\bibfnamefont {A.}~\bibnamefont
  {Javadi-Abhari}},\ }\href {https://arxiv.org/abs/2504.15725} {\bibinfo
  {title} {Low-overhead error detection with spacetime codes}} (\bibinfo {year}
  {2025}),\ \Eprint {https://arxiv.org/abs/2504.15725} {arXiv:2504.15725
  [quant-ph]} \BibitemShut {NoStop}%
\bibitem [{\citenamefont {Delfosse}\ and\ \citenamefont
  {Paetznick}(2023)}]{delfosse2023spacetime}%
  \BibitemOpen
  \bibfield  {author} {\bibinfo {author} {\bibfnamefont {N.}~\bibnamefont
  {Delfosse}}\ and\ \bibinfo {author} {\bibfnamefont {A.}~\bibnamefont
  {Paetznick}},\ }\href {https://arxiv.org/abs/2304.05943} {\bibinfo {title}
  {Spacetime codes of clifford circuits}} (\bibinfo {year} {2023}),\ \Eprint
  {https://arxiv.org/abs/2304.05943} {arXiv:2304.05943 [quant-ph]} \BibitemShut
  {NoStop}%
\bibitem [{\citenamefont {Vasmer}\ and\ \citenamefont
  {Browne}(2019)}]{vasmer20193d}%
  \BibitemOpen
  \bibfield  {author} {\bibinfo {author} {\bibfnamefont {M.}~\bibnamefont
  {Vasmer}}\ and\ \bibinfo {author} {\bibfnamefont {D.~E.}\ \bibnamefont
  {Browne}},\ }\href {https://doi.org/10.1103/PhysRevA.100.012312} {\bibfield
  {journal} {\bibinfo  {journal} {Physical Review A}\ }\textbf {\bibinfo
  {volume} {100}},\ \bibinfo {pages} {012312} (\bibinfo {year} {2019})},\
  \bibinfo {note} {arXiv:1801.04255 [quant-ph]}\BibitemShut {NoStop}%
\bibitem [{\citenamefont {Knill}\ \emph {et~al.}(1996)\citenamefont {Knill},
  \citenamefont {Laflamme},\ and\ \citenamefont {Zurek}}]{knill1996reedmuller}%
  \BibitemOpen
  \bibfield  {author} {\bibinfo {author} {\bibfnamefont {E.}~\bibnamefont
  {Knill}}, \bibinfo {author} {\bibfnamefont {R.}~\bibnamefont {Laflamme}},\
  and\ \bibinfo {author} {\bibfnamefont {W.}~\bibnamefont {Zurek}}\ }\href
  {https://doi.org/10.48550/arXiv.quant-ph/9610011}
  {10.48550/arXiv.quant-ph/9610011} (\bibinfo {year} {1996}),\ \bibinfo {note}
  {arXiv:quant-ph/9610011}\BibitemShut {NoStop}%
\bibitem [{\citenamefont {Koutsioumpas}\ \emph {et~al.}(2022)\citenamefont
  {Koutsioumpas}, \citenamefont {Banfield},\ and\ \citenamefont
  {Kay}}]{koutsioumpas2022smallestT}%
  \BibitemOpen
  \bibfield  {author} {\bibinfo {author} {\bibfnamefont {S.}~\bibnamefont
  {Koutsioumpas}}, \bibinfo {author} {\bibfnamefont {D.}~\bibnamefont
  {Banfield}},\ and\ \bibinfo {author} {\bibfnamefont {A.}~\bibnamefont {Kay}}\
  }\href {https://doi.org/10.48550/arXiv.2210.14066}
  {10.48550/arXiv.2210.14066} (\bibinfo {year} {2022}),\ \bibinfo {note}
  {arXiv:2210.14066 [quant-ph]}\BibitemShut {NoStop}%
\bibitem [{\citenamefont {Fujiwara}(2014)}]{fujiwara2014datasyndrome}%
  \BibitemOpen
  \bibfield  {author} {\bibinfo {author} {\bibfnamefont {Y.}~\bibnamefont
  {Fujiwara}},\ }\href {https://doi.org/10.1103/PhysRevA.90.062304} {\bibfield
  {journal} {\bibinfo  {journal} {Physical Review A}\ }\textbf {\bibinfo
  {volume} {90}},\ \bibinfo {pages} {062304} (\bibinfo {year}
  {2014})}\BibitemShut {NoStop}%
\bibitem [{\citenamefont {Ashikhmin}\ \emph {et~al.}(2014)\citenamefont
  {Ashikhmin}, \citenamefont {Lai},\ and\ \citenamefont
  {Brun}}]{ashikmhin2014robus}%
  \BibitemOpen
  \bibfield  {author} {\bibinfo {author} {\bibfnamefont {A.}~\bibnamefont
  {Ashikhmin}}, \bibinfo {author} {\bibfnamefont {C.-Y.}\ \bibnamefont {Lai}},\
  and\ \bibinfo {author} {\bibfnamefont {T.~A.}\ \bibnamefont {Brun}},\ }in\
  \href {https://doi.org/10.1109/ISIT.2014.6874892} {\emph {\bibinfo
  {booktitle} {2014 IEEE International Symposium on Information Theory}}}\
  (\bibinfo {year} {2014})\ pp.\ \bibinfo {pages} {546--550}\BibitemShut
  {NoStop}%
\bibitem [{\citenamefont {Guttentag}\ \emph {et~al.}(2023)\citenamefont
  {Guttentag}, \citenamefont {Nemec},\ and\ \citenamefont
  {Brown}}]{brown2024datasyndromebch}%
  \BibitemOpen
  \bibfield  {author} {\bibinfo {author} {\bibfnamefont {E.}~\bibnamefont
  {Guttentag}}, \bibinfo {author} {\bibfnamefont {A.}~\bibnamefont {Nemec}},\
  and\ \bibinfo {author} {\bibfnamefont {K.~R.}\ \bibnamefont {Brown}}\ }\href
  {https://doi.org/10.48550/arXiv.2311.16044} {10.48550/arXiv.2311.16044}
  (\bibinfo {year} {2023}),\ \bibinfo {note} {arXiv:2311.16044
  [quant-ph]}\BibitemShut {NoStop}%
\bibitem [{\citenamefont {Calderbank}\ and\ \citenamefont
  {Shor}(1996)}]{calderbank1996css}%
  \BibitemOpen
  \bibfield  {author} {\bibinfo {author} {\bibfnamefont {A.~R.}\ \bibnamefont
  {Calderbank}}\ and\ \bibinfo {author} {\bibfnamefont {P.~W.}\ \bibnamefont
  {Shor}},\ }\href {https://doi.org/10.1103/physreva.54.1098} {\bibfield
  {journal} {\bibinfo  {journal} {Physical Review A}\ }\textbf {\bibinfo
  {volume} {54}},\ \bibinfo {pages} {1098–1105} (\bibinfo {year}
  {1996})}\BibitemShut {NoStop}%
\bibitem [{\citenamefont {Steane}()}]{steane1996css}%
  \BibitemOpen
  \bibfield  {author} {\bibinfo {author} {\bibfnamefont {A.}~\bibnamefont
  {Steane}},\ }\bibfield  {journal} {\bibinfo  {journal} {Proceedings of the
  Royal Society of London. Series A: Mathematical, Physical and Engineering
  Sciences}\ }\textbf {\bibinfo {volume} {452}},\ \href
  {https://doi.org/10.1098/rspa.1996.0136} {10.1098/rspa.1996.0136}\BibitemShut
  {NoStop}%
\bibitem [{\citenamefont {Campbell}(2019)}]{campbell2019singleshot}%
  \BibitemOpen
  \bibfield  {author} {\bibinfo {author} {\bibfnamefont {E.~T.}\ \bibnamefont
  {Campbell}},\ }\href {https://doi.org/10.1088/2058-9565/aafc8f} {\bibfield
  {journal} {\bibinfo  {journal} {Quantum Science and Technology}\ }\textbf
  {\bibinfo {volume} {4}},\ \bibinfo {pages} {025006} (\bibinfo {year}
  {2019})},\ \bibinfo {note} {arXiv:1805.09271 [quant-ph]}\BibitemShut
  {NoStop}%
\bibitem [{\citenamefont {Knill}\ \emph {et~al.}(1998)\citenamefont {Knill},
  \citenamefont {Laflamme},\ and\ \citenamefont {Zurek}}]{knill1998threshold}%
  \BibitemOpen
  \bibfield  {author} {\bibinfo {author} {\bibfnamefont {E.}~\bibnamefont
  {Knill}}, \bibinfo {author} {\bibfnamefont {R.}~\bibnamefont {Laflamme}},\
  and\ \bibinfo {author} {\bibfnamefont {W.~H.}\ \bibnamefont {Zurek}},\ }\href
  {https://doi.org/10.1126/science.279.5349.342} {\bibfield  {journal}
  {\bibinfo  {journal} {Science}\ }\textbf {\bibinfo {volume} {279}},\ \bibinfo
  {pages} {342–345} (\bibinfo {year} {1998})}\BibitemShut {NoStop}%
\bibitem [{\citenamefont {Aharonov}\ and\ \citenamefont
  {Ben-Or}(2008)}]{aharonov2008threshold}%
  \BibitemOpen
  \bibfield  {author} {\bibinfo {author} {\bibfnamefont {D.}~\bibnamefont
  {Aharonov}}\ and\ \bibinfo {author} {\bibfnamefont {M.}~\bibnamefont
  {Ben-Or}},\ }\href {https://doi.org/10.1137/S0097539799359385} {\bibfield
  {journal} {\bibinfo  {journal} {SIAM Journal on Computing}\ }\textbf
  {\bibinfo {volume} {38}},\ \bibinfo {pages} {1207} (\bibinfo {year}
  {2008})},\ \Eprint
  {https://arxiv.org/abs/https://doi.org/10.1137/S0097539799359385}
  {https://doi.org/10.1137/S0097539799359385} \BibitemShut {NoStop}%
\bibitem [{\citenamefont {Campbell}\ \emph {et~al.}(2017)\citenamefont
  {Campbell}, \citenamefont {Terhal},\ and\ \citenamefont
  {Vuillot}}]{campbell2017roads}%
  \BibitemOpen
  \bibfield  {author} {\bibinfo {author} {\bibfnamefont {E.~T.}\ \bibnamefont
  {Campbell}}, \bibinfo {author} {\bibfnamefont {B.~M.}\ \bibnamefont
  {Terhal}},\ and\ \bibinfo {author} {\bibfnamefont {C.}~\bibnamefont
  {Vuillot}},\ }\href {https://doi.org/10.1038/nature23460} {\bibfield
  {journal} {\bibinfo  {journal} {Nature}\ }\textbf {\bibinfo {volume} {549}},\
  \bibinfo {pages} {172–179} (\bibinfo {year} {2017})}\BibitemShut {NoStop}%
\bibitem [{\citenamefont {Gidney}(2021)}]{gidney2021stim}%
  \BibitemOpen
  \bibfield  {author} {\bibinfo {author} {\bibfnamefont {C.}~\bibnamefont
  {Gidney}},\ }\href {https://doi.org/10.22331/q-2021-07-06-497} {\bibfield
  {journal} {\bibinfo  {journal} {{Quantum}}\ }\textbf {\bibinfo {volume}
  {5}},\ \bibinfo {pages} {497} (\bibinfo {year} {2021})}\BibitemShut {NoStop}%
\bibitem [{\citenamefont {Anker}\ and\ \citenamefont
  {Marvian}(2025)}]{anker2025compressed}%
  \BibitemOpen
  \bibfield  {author} {\bibinfo {author} {\bibfnamefont {B.}~\bibnamefont
  {Anker}}\ and\ \bibinfo {author} {\bibfnamefont {M.}~\bibnamefont
  {Marvian}},\ }\href {https://arxiv.org/abs/2509.07288} {\bibinfo {title}
  {Compressing syndrome measurement sequences}} (\bibinfo {year} {2025}),\
  \Eprint {https://arxiv.org/abs/2509.07288} {arXiv:2509.07288 [quant-ph]}
  \BibitemShut {NoStop}%
\bibitem [{\citenamefont {Gottesman}(2022)}]{gottesman2022perspective}%
  \BibitemOpen
  \bibfield  {author} {\bibinfo {author} {\bibfnamefont {D.}~\bibnamefont
  {Gottesman}},\ }\href {https://doi.org/10.48550/arXiv.2210.15844} {\bibinfo
  {title} {Opportunities and challenges in fault-tolerant quantum computation}}
  (\bibinfo {year} {2022}),\ \bibinfo {note} {arXiv:2210.15844
  [quant-ph]}\BibitemShut {NoStop}%
\bibitem [{\citenamefont {Hocquenghem}(1959)}]{hocquenghem1959bch}%
  \BibitemOpen
  \bibfield  {author} {\bibinfo {author} {\bibfnamefont {A.}~\bibnamefont
  {Hocquenghem}},\ }\href@noop {} {\bibfield  {journal} {\bibinfo  {journal}
  {Chiffres (Paris)}\ ,\ \bibinfo {pages} {147}} (\bibinfo {year}
  {1959})}\BibitemShut {NoStop}%
\bibitem [{\citenamefont {Bose}\ and\ \citenamefont {K.}(1960)}]{bose1960bch}%
  \BibitemOpen
  \bibfield  {author} {\bibinfo {author} {\bibfnamefont {R.~C.}\ \bibnamefont
  {Bose}}\ and\ \bibinfo {author} {\bibfnamefont {R.-C.~D.}\ \bibnamefont
  {K.}},\ }\href
  {https://doi.org/https://doi.org/10.1016/S0019-9958(60)90287-4} {\bibfield
  {journal} {\bibinfo  {journal} {Information and Control}\ }\textbf {\bibinfo
  {volume} {3}},\ \bibinfo {pages} {68} (\bibinfo {year} {1960})}\BibitemShut
  {NoStop}%
\bibitem [{\citenamefont {Anker}\ and\ \citenamefont
  {Marvian}(2024)}]{anker2024flag}%
  \BibitemOpen
  \bibfield  {author} {\bibinfo {author} {\bibfnamefont {B.}~\bibnamefont
  {Anker}}\ and\ \bibinfo {author} {\bibfnamefont {M.}~\bibnamefont
  {Marvian}},\ }\href {https://doi.org/10.1103/PRXQuantum.5.040340} {\bibfield
  {journal} {\bibinfo  {journal} {PRX Quantum}\ }\textbf {\bibinfo {volume}
  {5}},\ \bibinfo {pages} {040340} (\bibinfo {year} {2024})}\BibitemShut
  {NoStop}%
\bibitem [{\citenamefont {Kissinger}\ and\ \citenamefont {van~de
  Wetering}(2020)}]{kissinger2020pyzx}%
  \BibitemOpen
  \bibfield  {author} {\bibinfo {author} {\bibfnamefont {A.}~\bibnamefont
  {Kissinger}}\ and\ \bibinfo {author} {\bibfnamefont {J.}~\bibnamefont {van~de
  Wetering}},\ }in\ \href {https://doi.org/10.4204/EPTCS.318.14} {\emph
  {\bibinfo {booktitle} {{\rm Proceedings 16th International Conference on}
  Quantum Physics and Logic, {\rm Chapman University, Orange, CA, USA., 10-14
  June 2019}}}},\ \bibinfo {series} {Electronic Proceedings in Theoretical
  Computer Science}, Vol.\ \bibinfo {volume} {318},\ \bibinfo {editor} {edited
  by\ \bibinfo {editor} {\bibfnamefont {B.}~\bibnamefont {Coecke}}\ and\
  \bibinfo {editor} {\bibfnamefont {M.}~\bibnamefont {Leifer}}}\ (\bibinfo
  {publisher} {Open Publishing Association},\ \bibinfo {year} {2020})\ pp.\
  \bibinfo {pages} {229--241}\BibitemShut {NoStop}%
\bibitem [{\citenamefont {Steane}(1996)}]{steane1996steanecode}%
  \BibitemOpen
  \bibfield  {author} {\bibinfo {author} {\bibfnamefont {A.}~\bibnamefont
  {Steane}},\ }\href {https://doi.org/10.1098/rspa.1996.0136} {\bibfield
  {journal} {\bibinfo  {journal} {Proceedings of the Royal Society of London.
  Series A: Mathematical, Physical and Engineering Sciences}\ }\textbf
  {\bibinfo {volume} {452}},\ \bibinfo {pages} {2551–2577} (\bibinfo {year}
  {1996})},\ \bibinfo {note} {arXiv:quant-ph/9601029}\BibitemShut {NoStop}%
\bibitem [{\citenamefont {Kubica}\ and\ \citenamefont
  {Beverland}(2015)}]{kubica2015color}%
  \BibitemOpen
  \bibfield  {author} {\bibinfo {author} {\bibfnamefont {A.}~\bibnamefont
  {Kubica}}\ and\ \bibinfo {author} {\bibfnamefont {M.~E.}\ \bibnamefont
  {Beverland}},\ }\href {https://doi.org/10.1103/PhysRevA.91.032330} {\bibfield
   {journal} {\bibinfo  {journal} {Physical Review A}\ }\textbf {\bibinfo
  {volume} {91}},\ \bibinfo {pages} {032330} (\bibinfo {year}
  {2015})}\BibitemShut {NoStop}%
\bibitem [{\citenamefont {Bombín}(2015)}]{bombin2015color}%
  \BibitemOpen
  \bibfield  {author} {\bibinfo {author} {\bibfnamefont {H.}~\bibnamefont
  {Bombín}},\ }\href {https://doi.org/10.1088/1367-2630/17/8/083002}
  {\bibfield  {journal} {\bibinfo  {journal} {New Journal of Physics}\ }\textbf
  {\bibinfo {volume} {17}},\ \bibinfo {pages} {083002} (\bibinfo {year}
  {2015})}\BibitemShut {NoStop}%
\bibitem [{\citenamefont {Anderson}\ \emph {et~al.}(2014)\citenamefont
  {Anderson}, \citenamefont {Duclos-Cianci},\ and\ \citenamefont
  {Poulin}}]{poulin2014color}%
  \BibitemOpen
  \bibfield  {author} {\bibinfo {author} {\bibfnamefont {J.~T.}\ \bibnamefont
  {Anderson}}, \bibinfo {author} {\bibfnamefont {G.}~\bibnamefont
  {Duclos-Cianci}},\ and\ \bibinfo {author} {\bibfnamefont {D.}~\bibnamefont
  {Poulin}},\ }\href {https://doi.org/10.1103/PhysRevLett.113.080501}
  {\bibfield  {journal} {\bibinfo  {journal} {Physical Review Letters}\
  }\textbf {\bibinfo {volume} {113}},\ \bibinfo {pages} {080501} (\bibinfo
  {year} {2014})},\ \bibinfo {note} {arXiv:1403.2734 [quant-ph]}\BibitemShut
  {NoStop}%
\bibitem [{\citenamefont {Heu{\ss{}}en}\ and\ \citenamefont
  {Hilder}(2025)}]{heussen2025efficientfault}%
  \BibitemOpen
  \bibfield  {author} {\bibinfo {author} {\bibfnamefont {S.}~\bibnamefont
  {Heu{\ss{}}en}}\ and\ \bibinfo {author} {\bibfnamefont {J.}~\bibnamefont
  {Hilder}},\ }\href {https://doi.org/10.22331/q-2025-09-03-1846} {\bibfield
  {journal} {\bibinfo  {journal} {{Quantum}}\ }\textbf {\bibinfo {volume}
  {9}},\ \bibinfo {pages} {1846} (\bibinfo {year} {2025})}\BibitemShut
  {NoStop}%
\end{thebibliography}%
    \clearpage
    \appendix
    \subsection{Numerical Search over Random Matrices}\label{sec:random_search}
    In the main text we focused on one implementation of a general provably fault-tolerant construction for any Clifford circuit. 
    However, the resources used by this construction are an upper bound on the resources used by a flag gadget construction tailored to the Clifford circuit in question. To avoid measuring logical information, we will still need to restrict ourselves to measuring (products of) space-time stabilizers. These are just products of the two gadgets measured to create the gadget in Section~\ref{sec:main_gadget}. Therefore we can still restrict ourselves to creating gadgets based upon a binary matrix.
    
    Here, we will do a random numerical search for appropriate flag gadgets. That is, we will look for four binary matrices, $A, A', B, B'$ which define, respectively, $X$-type flags, their meta-flags, $Z$-type flags, and their meta-flags such that any nontrivial error $e$ (including measurement errors) that is undetectable either
    \begin{enumerate}
        \item has weight at least $3$
        \item propagates to an error of weight at most $|e|$ on the data.
    \end{enumerate}
    This ensures that there exists a fault-tolerant decoding scheme based upon the measurement results, while allowing more freedom than necessarily distinguishing all weight-$1$ errors from each other.

    Although randomly sampling matrices using significantly fewer resources than the version based upon the BCH code did not yield a provably fault-tolerant circuit, we did find several examples such that the proportion of randomly sampled weight-1 or -2 errors which both propagated to a higher weight error and gave a trivial flag pattern was less than $1\%$, using fewer than $400$ additional gates and $24$ additional qubits. The unflagged version of $\bar H$ we have presented, by contrast, takes $81\%$ of weight-1 or -2 errors to a higher weight error, without any possibility for error detection.

    This suggests that, at least for sufficiently low error rates, minimum-weight corrections based upon the flag data can allow the randomly flagged version of $\bar H$ to produce a lower logical error rate than the unflagged version, despite the fact that it is not completely fault tolerant. Indeed, this is supported in numerical simulations, summarized in Figure~\ref{fig:non_FT_error_reduction}.
    \begin{figure}
        \includegraphics[width=0.9\columnwidth]{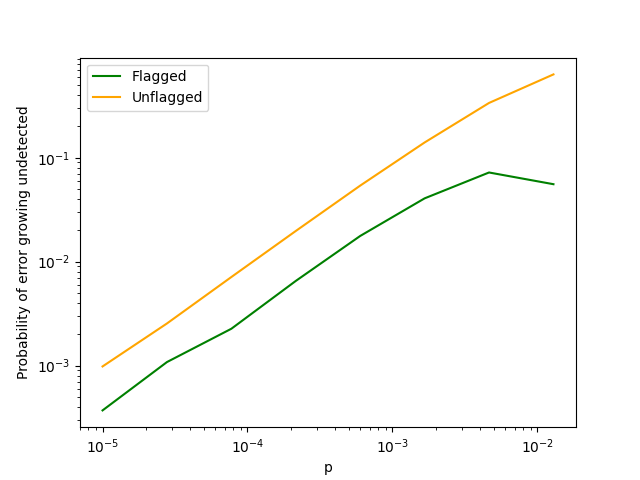}
        \caption{\label{fig:non_FT_error_reduction}Here we plot, in green, the probability of an error of weight-$x$ propagating to an error of weight at least $x + 1$ without triggering any flag qubits. We plot the probability of an error of weight-$x$ propagating to an error of weight at least $x + 1$ on the unflagged circuit in orange. We can see that the probability of a low-weight error propagating to a high-weight error, roughly, the probability of violating fault tolerance and a proxy for logical-error-rate scaling, is reduced by adding random flags. Each data point was obtained by taking $100,000$ random samples. The error model we used introduces an $X$ or $Z$ error before each gate with equal probability $p$ (and hence $Y$ errors with probability $p^2$).}
    \end{figure}
    This is significant because the matrices we sampled were extremely sparse - a density (probability of a check bit being $1$) of $0.05$ for $A, B$ and $0.1$ for $A', B'$. Taking these as parity-check matrices for a code yields codes with trivial distances almost always. Nevertheless, even adding such a small amount of structure can reduce the error rate.

    We do not consider applying our construction using a good LDPC code for much the same reasons we have kept the density of the matrices considered in this section so low -- a random sparse matrix is, at least heuristically, a parity-check matrix for an LDPC code with good distance. The fact that, in this regime, random matrices do not manage to use fewer resources than the BCH construction while maintaining the distance suggests the same for good LDPC codes. For circuits larger than the relatively small $28$ two-qubit gate implementation of $\bar H$ we have taken as our example, an approach based on good LDPC codes would probably be advantageous.
\end{document}